\numberwithin{equation}{section}		
\numberwithin{figure}{section}			
\numberwithin{table}{section}				
\newtheorem{thm}{Theorem}[section]  
\newtheorem{lem}[thm]{Lemma}
\newtheorem{prop}[thm]{Proposition} 
\newtheorem*{remark}{Remark}
\newcommand{\bparenths}[1]{\Big({#1}\Big)}
\newcommand{\vect}[1]{\boldsymbol{\mathbf{#1}}}
\newcommand{\leqs}{\leqslant}
\newcommand{\geqs}{\geqslant}
\newcommand{\babs}[1]{\left|{#1}\right|}
\newcommand{\bnorm}[1]{\left|\left|{#1}\right|\right|}
\newcommand{\btwonorm}[1]{\left|\left|{#1}\right|\right|_2}
\title{Mathematical Foundation of Sparsity-based Multi-snapshot Spectral Estimation \thanks{\footnotesize This work was supported in part by the Swiss National Science Foundation grant number
200021--200307.}}
\author{Ping Liu\thanks{\footnotesize Department of Mathematics, ETH Z\"urich, R\"amistrasse 101, CH-8092 Z\"urich, Switzerland (ping.liu@sam.math.ethz.ch, habib.ammari@math.ethz.ch).} \and Sanghyeon Yu\thanks{\footnotesize Department of Mathematics, Korea University, Seoul 02841, S. Korea (sanghyeon\_yu@korea.ac.kr).}\and Ola Sabet\thanks{\footnotesize Department of Molecular Life Sciences, University of Z\"urich, Switzerland (ola.sabet@mls.uzh.ch, lucas.pelkmans@mls.uzh.ch).}  \and Lucas Pelkmans\footnotemark[4] \and Habib Ammari\footnotemark[2]}
\begin{document}
\date{}	
\maketitle	

\begin{abstract}
In this paper, we study the spectral estimation problem of estimating the locations of a fixed number of point sources given multiple snapshots of Fourier measurements in a bounded domain. We aim to provide a mathematical foundation for sparsity-based super-resolution in such spectral estimation problems in both one- and multi-dimensional spaces. In particular, we estimate the resolution and stability of the location recovery of a cluster of closely spaced point sources when considering the sparsest solution under the measurement constraint, and characterize their dependence on the cut-off frequency, the noise level, the sparsity of point sources, and the incoherence of the amplitude vectors of point sources. Our estimate emphasizes the importance of the high incoherence of amplitude vectors in enhancing the resolution of multi-snapshot spectral estimation. Moreover, to the best of our knowledge, it also provides the first stability result in the super-resolution regime for the well-known sparse MMV problem in DOA estimation.




\end{abstract}

\vspace{0.5cm}
\noindent{\textbf{Mathematics Subject Classification:} 65R32, 42A10, 62H20, 78A46} 
		
\vspace{0.2cm}
		
\noindent{\textbf{Keywords:} 
spectral estimation, multiple snapshots, DOA estimation, super-resolution, sparsity-based optimization, joint sparsity, MMV problems} 
	\vspace{0.5cm}

\section{Introduction}
In this paper, we study the spectral estimation problem of estimating the locations of a fixed number of point sources given multiple snapshots of Fourier measurements in a bounded domain. This problem appears in many imaging
and signal processing applications, including biological imaging \cite{STED, PALM, STORM,SIM, blindSIM, beam2022}, inverse scattering \cite{fannjiang2011music}, Direction-Of-Arrival (DOA) estimation \cite{krim1996two, ottersten1993exact}, and spectral analysis \cite{stoica2005spectral}. A particular core of such problems in spectral estimation is named super-resolution, which seeks to reconstruct point sources separated less than the Rayleigh limit \cite{rayleigh1879xxxi}. Over the past few years, the capability of super-resolution in single-snapshot spectral estimation has already been established by several mathematical theories \cite{liu2021mathematicaloned, liu2021mathematicalhighd, liu2021theorylse,  batenkov2019super, li2021stable, demanet2015recoverability, liu2023super} and the resolution limits have been explicitly characterized \cite{liu2021theorylse, liu2022mathematicalSR, liu2021mathematicalhighd, liu2022mathematicalpositive, liu2023improved}, but the capabilities of super-resolution from multiple snapshots are not well understood. In a recent study \cite{liu2023operator}, the resolution for the multi-illumination imaging with known (or well-approximated) illumination patterns was analyzed, encompassing the case of resolving both general and sparse sources. But in many applications like DOA estimation, the "illumination patterns" (or amplitude vectors) are entirely unknown, which the aforementioned theory cannot address. We analyze such cases in this paper, aiming to lay a mathematical foundation for sparsity-based multi-snapshot spectral estimation.

\subsection{Our contribution and related work}
Our first main result establishes the theory for the resolution and stability of one-dimensional multi-snapshot spectral estimation. In particular, we consider the collection of point sources in each snapshot as
\begin{equation}\label{equ:contribution1}
\mu_t = \sum_{j=1}^n a_{j, t} \delta_{y_j}, \quad t=1,\cdots, T,
\end{equation}
where $n$ is the number of point sources, $y_j$'s are source locations, and $a_{j,t}$'s are the corresponding amplitudes at $t$-th snapshot. As we focus on the super-resolving capability of the spectral estimation in this paper, we consider throughout that $y_j$'s are clustered and tightly spaced within a connected region $\mathcal O$, which has a length (or radius) spanning several Rayleigh limits. The available measurements are the noisy Fourier data of  $\mu_t$ in a bounded interval $[-\Omega, \Omega]$ given by
\begin{equation}
\mathbf Y_t(\omega) = \mathcal F [\mu_t] (\omega) + \mathbf W_t(\omega)= \sum_{j=1}^{n}a_{j, t} e^{i y_j \omega} + \mathbf W_t(\omega), \ 1\leqs t\leqs T, \ \omega \in [-\Omega, \Omega], 
\end{equation}
where $\mathcal F[\cdot ]$ denotes the Fourier transform and $\vect W_t$ is the noise satisfying $\babs{\vect W_t(\omega)}<\sigma, 1\leqs t\leqs T,  \omega \in [-\Omega, \Omega]$. We reconstruct the source locations by the $l_0$-minimization problem:
\begin{equation}\label{equ:contribution2}
\begin{aligned}
&\min_{\{\widehat y_1, \cdots, \widehat y_k\}, \widehat y_j \in \mathcal O, j=1,\cdots, k}\left(\#\{\widehat y_1, \cdots, \widehat y_k\}\right) \\
&\text{subject to the existence of $\widehat a_{j,t}$'s such that}\  \bnorm{\mathcal F\left[\sum_{j=1}^k\widehat a_{j,t} \delta_{\widehat y_j}\right] -\vect Y_t}_{\infty}< \sigma, \quad 1\leqs t\leqs T,
\end{aligned}
\end{equation} 
where $\bnorm{f}_\infty :=\max_{\omega \in [-\Omega, \Omega]}|f(\omega)|$ for a function $f$.

Informally, our main result (Theorem \ref{thm:l0normrecovery0}) states the following.  Suppose that the $n$ point sources  $\delta_{y_j}$'s in (\ref{equ:contribution1}) satisfy 
\begin{equation}\label{equ:sepaconditionl0normrecovery1}
d_{\min} := \min_{p\neq j}\babs{y_p-y_j}\geqs \frac{2.4c_0e\pi }{\Omega }\Big(\frac{\sigma}{\sigma_{\infty, \min}(\vect A^{\top})}\Big)^{\frac{1}{n}},
\end{equation}
where $c_0$ corresponds to the length of $\mathcal O$, $\vect A: = (a_{j,t})_{1\leqs j\leqs n, 1\leqs t \leqs T}$ is the amplitude matrix, and $\sigma_{\infty, \min}\left(\vect A^{\top}\right)$ characterizes the correlation between the amplitude vectors of different sources. Then any solution to the $l_0$ optimization problem (\ref{equ:contribution2}) contains exactly $n$ points. Moreover, for $\{\widehat y_1, \cdots, \widehat y_n\}$ being a corresponding solution, after reordering $\widehat y_j$'s, we have $\left|\widehat y_j-y_j\right|<\frac{d_{\min}}{2}$
	and 
	\begin{equation}\label{equ:stabilityl0normrecovery1}
 \Big|\widehat y_j-y_j\Big|< \frac{C(n)}{\Omega}\mathrm{SRF}^{n-1}\frac{\sigma}{\sigma_{\infty, \min}\left(\vect A^{\top}\right)}, \quad 1\leqs j\leqs n,
	\end{equation}
where $C(n)=\sqrt{\pi}n^2c_0^{n}e^{n}$ and $\mathrm{SRF} = \frac{\pi}{\Omega d_{\min}}$ is the super-resolution factor. In particular, (\ref{equ:sepaconditionl0normrecovery1}) estimates the resolution for the sparse recovery in the super-resolution problem and (\ref{equ:stabilityl0normrecovery1}) estimates the stability of the location recovery in such a scenario.  Estimates (\ref{equ:sepaconditionl0normrecovery1}) and (\ref{equ:stabilityl0normrecovery1}) emphasize the importance of the sparsity of sources, cut-off frequency, noise level, and particularly the high incoherence of amplitude vectors in enhancing the resolution and stability of multi-snapshot spectral estimation. These estimates are then generalized to the multi-dimensional case in Theorem \ref{thm:highdl0normrecovery0}. Moreover,  Theorem \ref{thm:discretel0normrecovery0} establishes such estimates for a discrete model, giving a stability estimate for the sparse multiple measurement vectors (MMV) problem \cite{cotter2005sparse, chen2006theoretical, davies2012rank} in DOA estimation. 

   The most relevant work to our paper is \cite{li2022stability}, in which the authors analyzed the stability and super-resolution of MUSIC \cite{schmidt1986multiple} and ESPRIT \cite{roy1989esprit} for multi-snapshot spectral estimation and derived estimates akin to ours. The MUSIC and ESPRIT methods are not directly sparsity-based methods. They reconstruct the source locations through subspace-based techniques and are usually called subspace methods. Under assumptions on the randomness of the noise, the authors estimated in \cite{li2022stability} respectively the perturbation of the noise-space correlation function (NSC) for the MUSIC algorithm and the location recovery stability for the ESPRIT algorithm. Leveraging the notation in the current paper, their estimate for the output $\widehat y_j$'s of the ESPRIT algorithm reads:
   
\noindent a) Moderate SNR regime. If 
\[
\sigma \lesssim \sigma_n(\vect \Phi)\sqrt{\lambda(X)},
\]
then the averaged location recovery error satisfies 
\begin{equation}\label{equ:esprit1}
\mathbb{E}\left(\min _\psi \max _j\left|\widehat{y}_{\psi(j)}-y_j\right|_{\mathbb{T}}\right) \lesssim \frac{\Omega \sigma}{\sqrt{\lambda(X)} \sigma_n^2(\vect \Phi)} , 
\end{equation}
where $\psi$ is a permutation on $\{1, \ldots, n\}$, $\babs{\cdot}_{\mathbb T}$ is a certain wrap-around distance, $\sigma_{n}(\vect \Phi)$ denotes the $n$-th largest singular value of the Vandermonde matrix $\vect \Phi$ that is similar to (\ref{equ:defineofphi1}) and generated by $y_j$'s, and $\lambda(X)$ is the minimum eigenvalue of the amplitude covariance matrix,
\[
X:= \sum_{t=1}^{T}\alpha_t \alpha_t^*,
\]
with $\alpha_t:=(a_{1,t}, \cdots, a_{n,t})^{\top}$ being the amplitude vector in the $t$-th snapshot. 

\noindent b) Large SNR regime. If
$$
\sigma\lesssim \frac{\sigma_n(\vect \Phi)\sqrt{\lambda(X)}}{\sqrt{\Omega}} 
\min\left(1, \frac{\sigma_{n}(\vect \Phi)\sqrt{d_{\min}}}{\sqrt{\Omega}}, \frac{\sigma_{n}(\vect \Phi)^2d_{\min}}{\Omega}\right),
$$
then it holds that
\begin{equation}\label{equ:esprit2}
\mathbb{E}\left(\min _\psi \max _j\left|\widehat{y}_{\psi(j)}-y_j\right|_{\mathbb{T}}\right)\lesssim \frac{\sqrt{\Omega}\sigma}{\sqrt{\lambda(X)} \sigma_n(\vect \Phi)} .
\end{equation}
In particular, it is well-known \cite[Theorem 2.7]{li2021stable} that for $n$ point sources with minimum separation $d_{\min}$ in a single cluster, $\sigma_{n}(\vect \Phi)\gtrsim \sqrt{\Omega}(\Omega d_{\min})^{n-1}$, whence (\ref{equ:esprit2}) becomes 
\begin{equation}\label{equ:esprit3}
\mathbb{E}\left(\min _\psi \max _j\left|\widehat{y}_{\psi(j)}-y_j\right|_{\mathbb{T}}\right)\lesssim \mathrm{SRF}^{n-1}\frac{\sigma}{\sqrt{\lambda(X)}},
\end{equation}
which is nearly the same as our estimate (\ref{equ:stabilityl0normrecovery1}), except for the missing of cut-off frequency $\Omega$. Note that, as $\sqrt{\lambda(x)}=\sqrt{\lambda\left(\vect A \vect A^*\right)}=\sigma_{\min}(\vect A^*)$,  $\sigma_{\infty, \min}\left(\vect A^{\top}\right)$ and $\sqrt{\lambda(x)}$ are comparable; see (\ref{equ:sigmainftyminestimate1}) and (\ref{equ:gaussianillum1}). The drawback of the stability estimate (\ref{equ:esprit3}) is the noise level requirement
\[
\sigma \lesssim \frac{\sigma_{n}(\vect \Phi)^3\sqrt{\lambda(X)}d_{\min}}{\sqrt{\Omega}\Omega}, 
\]
 which is far more restrictive. In comparison, as seen from (\ref{equ:sepaconditionl0normrecovery1}), our requirement is only
 \[
 \sigma \lesssim (\Omega d_{\min})^{n}\sigma_{\infty, \min}\left(\vect A^{\top}\right),
 \]
 which is consistent with the stability estimate (\ref{equ:stabilityl0normrecovery1}). We also conjecture that this requirement is already optimal to ensure a stable location reconstruction. On the other hand, our estimate (\ref{equ:stabilityl0normrecovery1}) also holds in the moderate SNR regime, giving a better estimate than (\ref{equ:esprit1}). We also notice a recent work \cite{yang2022nonasymptotic} on the nonasymptotic performance analysis of ESPRIT, obtaining an estimate similar to (\ref{equ:esprit1}).


It is worth emphasizing that there were already many mathematical theories for estimating the stability of super-resolution in the single measurement case. To the best of our knowledge, the first work for the stability of super-resolving multiple sources was by Donoho \cite{donoho1992superresolution}. He considered a grid setting where a discrete measure is supported on a lattice (spacing by $\Delta$) and regularized by a so-called  "Rayleigh index". His main contribution is estimating the corresponding minimax error in the recovery, which emphasizes the importance of the sparsity of sources for super-resolution. It was improved in recent years for the case when super-resolving $n$-sparse on-the-grid sources \cite{demanet2015recoverability}, where the minimax error of amplitude recovery was shown to scale as $\mathrm{SRF}^{2 n-1} \sigma$ with $\mathrm{SRF}:=\frac{1}{\Delta \Omega}$ being the super-resolution factor. The case of resolving multi-cluster sources was considered in \cite{li2021stable, batenkov2020conditioning} and similar minimax error estimations were established. 

In \cite{akinshin2015accuracy, batenkov2019super}, the authors considered the minimax error for recovering off-the-grid point sources. They showed that for $\sigma \lesssim (\mathrm{SRF})^{-2p+1}$, where $p$ is the number of point sources in a cluster, the minimax error for the amplitude and the location recoveries scale respectively as $(\mathrm{SRF})^{2p-1}\sigma$ and $(\mathrm{SRF})^{2p-2} {\sigma}/{\Omega}$. Moreover, for the isolated non-cluster point sources, the corresponding minimax error for the amplitude and the location recoveries scale respectively as $\sigma$ and ${\sigma}/{\Omega}$. These estimates were generalized to the case of resolving positive sources in \cite{liu2023super} recently. 

On the other hand, to analyze the resolution for recovering multiple point sources, in \cite{liu2021mathematicaloned, liu2021mathematicalhighd, liu2021theorylse, liu2023improved} the authors defined "computational resolution limits" which characterize the minimum required distance between point sources so that their number and locations can be stably resolved under certain noise level. By developing a non-linear approximation theory in a so-called Vandermonde space, they derived bounds for computational resolution limits for a deconvolution problem \cite{liu2021mathematicaloned} and a spectral estimation problem \cite{liu2021theorylse}. In particular, they showed in \cite{liu2021theorylse} that the computational resolution limit for number and location recovery should be respectively $\frac{C_{\mathrm{num}}}{\Omega}\left(\frac{\sigma}{m_{\min}}\right)^{\frac{1}{2n-2}}$ and  $\frac{C_{\mathrm{supp}}}{\Omega}\left(\frac{\sigma}{m_{\min}}\right)^{\frac{1}{2n-1}}$, where $C_{\mathrm{num}}$ and $C_{\mathrm{supp}}$ are constants and $m_{\min}$ is the minimum magnitude of the source amplitude. Their results demonstrate that when the point sources are separated larger than $\frac{C_{\mathrm{supp}}}{\Omega}\left(\frac{\sigma}{m_{\min}}\right)^{\frac{1}{2n-1}}$, we can stably recover the source locations. Conversely, when the point sources are separated by a distance less than $O\left(\frac{C_{\mathrm{supp}}}{\Omega}\left(\frac{\sigma}{m_{\min}}\right)^{\frac{1}{2n-1}}\right)$, stably recovering the source locations is impossible in the worst case. This resolution limit indicates that super-resolution is possible for the single measurement case but requires a very high signal-to-noise ratio (according to the exponent $\frac{1}{2n-1}$). These estimates were later generalized to the problem of resolving positive sources \cite{liu2022mathematicalpositive} and moving sources \cite{liu2023dynamic}. As we have seen, the mathematics for the super-resolution from a single snapshot was well-established. Nevertheless, the multi-snapshot super-resolution still lacks or even is without any
mathematical foundation, despite its prominent importance. This strongly motivates the current paper.


For the super-resolution algorithms, the subspace methods such as MUSIC \cite{schmidt1986multiple, stoica1989music, li2021stable, li2022stability}, ESPRIT \cite{roy1989esprit, li2020super}, and Matrix Pencil Method \cite{hua1990matrix} are widely used in applications due to their excellent performance. They can be dated back to Prony’s method \cite{Prony-1795}. In recent years, the rise in popularity of sparse modeling and compressive sensing has led to the creation of numerous sparsity-promoting super-resolution algorithms. In the groundbreaking work of Cand\`es and Fernandez-Granda \cite{candes2014towards}, it was demonstrated that off-the-grid sources can be exactly recovered from their low-pass Fourier coefficients by total variation minimization under a minimum separation condition. Other well-known sparsity-promoting methods include the BLASSO algorithm \cite{azais2015spike, duval2015exact, poon2019} and the atomic norm minimization method \cite{tang2013compressed, tang2014near}. These two algorithms were proved to be able to stably recover the sources under a minimum separation condition or a non-degeneracy condition. The resolution of these convex algorithms is limited by a distance on the scale of the Rayleigh limit \cite{tang2015resolution, da2020stable} for recovering generic signed point sources. When super-resolving positive sources \cite{morgenshtern2016super, morgenshtern2020super, bendory2017robust, denoyelle2017support}, such constraint on resolution can be eased and the algorithm performance is nearly-optimal. However, most of the sparsity-promoting algorithms are designed for single-snapshot super-resolution, with scarce stability estimates for multi-snapshot equivalents. In a recent paper on a new multi-snapshot super-resolution algorithm \cite{fei2023iff}, the authors estimated a resolution characterizing when the focused source (a single point) was closed to one of the underlying sources to provide the theory for partial steps of their method. In direct comparison, the focus of our paper diverges considerably as we consider the recovery of all point sources. 

\subsection{Organization of the paper}
Our paper is organized in the following way.
Section \ref{sect2} formulates the minimization problem for recovering point sources from multiple snapshots. Sections \ref{sect3} and \ref{section:highdcase} present the main results in respectively the one- and multi-dimensional case and a detailed discussion on their significance. 
Section \ref{section:approxinvandermonde}  introduces the main technique (namely the approximation theory in Vandermonde space) that is used to show the main results of this paper. 
In Section \ref{section:proofofthml0normrecover}, Theorems \ref{thm:l0normrecovery0} and \ref{thm:discretel0normrecovery0}
are proved. Section \ref{section:proofhighdl0normrecovery} is devoted to the proof of Theorem \ref{thm:highdl0normrecovery0}.
Finally, the appendix provides some lemmas and inequalities
that are used in the paper.

\section{One-dimensional case} \label{sect3}

\subsection{Problem setting} \label{sect2}
To facilitate reading, we reintroduce the model setup for the one-dimensional multi-snapshot spectral estimation. Let $n$ be the number of point sources and $y_j\in \mathbb R, j=1,\cdots n,$ be the corresponding locations. The point sources are denoted by $\delta_{y_j}$'s with $\delta$ being the Dirac measure. We denote the source amplitude of $\delta_{y_j}$ at $t$-th snapshot by $a_{j,t}$. Thus the collection of point sources in each snapshot writes 
\begin{equation}\label{equ:sources1}
\mu_t = \sum_{j=1}^n a_{j, t} \delta_{y_j}, \quad t=1,\cdots, T,
\end{equation}
where $T$ is the total number of snapshots. The available measurements (snapshots) are the noisy Fourier data of  $\mu_t$ in a bounded interval. More precisely, they are given by 
\begin{equation}\label{equ:multimodelsetting1}
\mathbf Y_t(\omega) = \mathcal F [\mu_t] (\omega) + \mathbf W_t(\omega)= \sum_{j=1}^{n}a_{j, t} e^{i y_j \omega} + \mathbf W_t(\omega), \ 1\leqs t\leqs T, \ \omega \in [-\Omega, \Omega], 
\end{equation}
where $\mathcal F[\cdot]$ denotes the Fourier transform and $\vect W_t(\omega)$ is the noise. Here, $\Omega$ is called the cut-off frequency, it represents the cut-off frequency in the imaging problem or the boundary of the sensor array in the DOA estimation. We assume 
\begin{equation}\label{equ:noiseconstraint1}
\babs{\vect W_t(\omega)}<\sigma, \quad 1\leqs t\leqs T, \ \omega \in [-\Omega, \Omega],
\end{equation}
with $\sigma$ being the noise level. 

We suppose that the point sources are located in an interval $\mathcal O\subset \mathbb R$ with a length of several Rayleigh resolution limits. The inverse problem we are concerned with is to recover the sparsest location set $\{\widehat y_1, \cdots, \widehat y_k\}\subset \mathcal O$ that could generate these snapshots $\vect Y_t$'s. In particular, we consider the following $l_0$-minimization problem:
\begin{equation}\label{prob:l0minimization2}
\begin{aligned}
&\min_{\{\widehat y_1, \cdots, \widehat y_k\} \subset \mathcal O}\left(\#\{\widehat y_1, \cdots, \widehat y_k\}\right) \\
&\text{subject to the existence of $\widehat a_{j,t}$'s such that}\  \bnorm{\mathcal F\left[\sum_{j=1}^k\widehat a_{j,t} \delta_{\widehat y_j}\right] -\vect Y_t}_{\infty}< \sigma, \quad 1\leqs t\leqs T,
\end{aligned}
\end{equation} 
where $\bnorm{f}_\infty :=\max_{\omega \in [-\Omega, \Omega]}|f(\omega)|$ for a function $f$.

Our main result in the next section gives an estimation of the resolution and stability of the sparse recovery problem (\ref{prob:l0minimization2}) in dimension one.


\subsection{Main results}
Let us first introduce some notation. We define the amplitude matrix as
\begin{equation}\label{equ:illuminationpattern1}
\vect A := \begin{pmatrix}
a_{1,1}&\cdots&a_{1,T}\\
\vdots&\vdots&\vdots\\
a_{n,1}&\cdots&a_{n,T}\\
\end{pmatrix}, 
\end{equation}
where $a_{j,t}$'s are the source amplitudes in (\ref{equ:sources1}) and the vector $(a_{j, 1}, \cdots, a_{j, T})^{\top}$ is called the amplitude vector of point source $\delta_{y_j}$. For a $q\times k$ matrix $B$, we define $\sigma_{\infty, \min}(B)$ as
\begin{equation} \label{sigmadef}
\sigma_{\infty, \min}(B) : = \min_{x\in \mathbb C^k, \mathbb ||x||_{\infty}\geqs 1} ||Bx||_{\infty},
\end{equation}
characterizing the correlation between the columns of $B$. Throughout this paper, for a complex matrix $B$, we denote $B^\top$ its transpose and $B^*$ its conjugate transpose. 

We have the following result on the stability of the problem (\ref{prob:l0minimization2}), whose proof is given in Section \ref{section:proofofthml0normrecover}.
\begin{thm}\label{thm:l0normrecovery0}
Let $n\geqs 2$ and let the interval $\mathcal O\subset \mathbb R$ be of length  $\frac{c_0n\pi}{2\Omega}$ with $c_0\geqs 1$. Let $\vect Y_t$'s in (\ref{equ:multimodelsetting1}) be the measurements generated by $n$ point sources at $\{y_1, \cdots, y_n\}, y_j \in \mathcal O, j=1,\cdots, n$. Suppose that the following separation condition holds:
 \begin{equation}\label{equ:sepaconditionl0normrecovery}
	d_{\min} := \min_{p\neq j}\babs{y_p-y_j}\geqs \frac{2.4c_0e\pi }{\Omega }\Big(\frac{\sigma}{\sigma_{\infty, \min}(\vect A^{\top})}\Big)^{\frac{1}{n}},
	\end{equation}
	with  $\frac{\sigma}{\sigma_{\infty, \min}(\vect A^{\top})}\leqs 1$. Then, any solution to (\ref{prob:l0minimization2}) contains exactly $n$ points. Moreover, for $\{\widehat y_1, \cdots, \widehat y_n\}$ being a corresponding solution, after reordering the $\widehat y_j$'s, we have 
	\begin{equation}
	\Big|\widehat y_j-y_j\Big|<\frac{d_{\min}}{2},
	\end{equation} 
	and 
	\begin{equation}
 \Big|\widehat y_j-y_j\Big|< \frac{C(n)}{\Omega}\mathrm{SRF}^{n-1}\frac{\sigma}{\sigma_{\infty, \min}\left(\vect A^{\top}\right)}, \quad 1\leqs j\leqs n,
	\end{equation}
	where $C(n)=\sqrt{\pi}n^2c_0^{n}e^{n}$ and $\mathrm{SRF} = \frac{\pi}{\Omega d_{\min}}$ is the super-resolution factor.
\end{thm}


\begin{remark}
Note that the stability result in Theorem \ref{thm:l0normrecovery0} holds for any algorithm that can recover the sparsest solution (solution with $n$ point sources) satisfying the measurement constraint. 
\end{remark}


Theorem \ref{thm:l0normrecovery0} demonstrates that when the point sources are separated by the distance $d_{\min}$ in (\ref{equ:sepaconditionl0normrecovery}), each of the recovered locations from (\ref{prob:l0minimization2}) lies in a neighborhood of the ground truth, with an estimated deviation also provided. Based on formula (\ref{equ:sepaconditionl0normrecovery}), we demonstrate that the incoherence (encoded in $\sigma_{\infty, \min}\left(\vect A^{\top}\right)$) between the amplitude vectors is crucial to the sparsity-based super-resolution. Recall that $\frac{C}{\Omega}$ for a constant $C\approx \pi$ is the so-called Rayleigh resolution limit. The estimate (\ref{equ:sepaconditionl0normrecovery}) also indicates that super-resolution from multiple snapshots is possible provided a sufficiently small noise level and large enough $\sigma_{\infty, \min}
\left(\vect A^{\top}\right)$. In particular, the resolution for the multi-snapshot spectral estimation with highly uncorrelated amplitude vectors is significantly better than the one for single-snapshot spectral estimation; see Section \ref{section:comparisonsinglesnapshot}.

For other detailed discussions on the result, we leave it to subsequent subsections.




\subsection{Properties of $\sigma_{\infty, \min}\left(\vect A^{\top}\right)$}

\textbf{Adding same signals will not enhance the resolution:}\\
Let 
\[
\vect A = \begin{pmatrix}
a_{1,1}&\cdots&a_{1,T}\\
\vdots&\vdots&\vdots\\
a_{n,1}& \cdots & a_{n, T}
\end{pmatrix}, \quad \widehat {\mathbf A} = \begin{pmatrix}
a_{1,1}&\cdots&a_{1,T}& a_{1, T+1}\\
\vdots&\vdots&\vdots\\
a_{n,1}&\cdots&a_{n,T} & a_{n, T+1}\\
\end{pmatrix}
\]
with $(a_{1,T},\cdots, a_{n,T})^{\top
} = (a_{1,T+1},\cdots, a_{n,T+1})^{\top
}$. By the definition of $\sigma_{\infty, \min}(\cdot)$, it is clear that $\sigma_{\infty, \min}(\widehat {\mathbf A}^{\top}) =\sigma_{\infty, \min}(\vect A^{\top})$. Thus, adding the same signal cannot increase the resolution in Theorem \ref{thm:l0normrecovery0}. 

\medskip
\noindent \textbf{The incoherence is crucial:}\\
The value of $\sigma_{\infty, \min}\left(\vect A^{\top}\right)$ is related to the correlation between the columns of the matrix $\vect A^{\top}$. In particular, we have the following rough estimation of  $\sigma_{\infty, \min}\left(\vect A^{\top}\right)$:
\begin{equation}\label{equ:sigmainftyminestimate1}
\sigma_{\infty, \min}\left(\vect A^{\top}\right)\geqs \frac{\sigma_{\min}\left(\vect A^{\top}\right)}{\sqrt{T}},
\end{equation}
where $\sigma_{\min}\left(\vect A^{\top}\right)$ is the minimum singular value of $\vect A^{\top}$. This clearly illustrates that the correlation between the columns of $\vect A^{\top}$ (i.e., the amplitude vectors of the sources) is crucial to $\sigma_{\infty, \min}\left(\vect A^{\top}\right)$. 

In particular, suppose that the amplitudes are independent Gaussian variables, i.e., $a_{j,t}\sim \mathcal N(0,1)$. We have 
\begin{equation}\label{equ:gaussianillum1}
\sigma_{\infty, \min}\left(\vect A^{\top}\right)\rightarrow 1
\end{equation}
as $T\rightarrow \infty$, which can be seen from (\ref{equ:sigmainftyminestimate1}) and the fact that the minimum eigenvalue of $\left(\frac{1}{T}\vect A \vect A^{\top}\right)$ tends to $1$.

\subsection{Comparison with the single snapshot case}\label{section:comparisonsinglesnapshot}
In this subsection, we compare the resolution in the single snapshot case with that in the multiple snapshots case, whereby we illustrate the effect of multiple snapshots in enhancing the resolution. 

In \cite{liu2021theorylse}, the authors estimate the so-called computational resolution limit for the line spectral estimation problem (or DOA estimation) of the single measurement case. The results in \cite{liu2021theorylse} show that, for the single measurement case, when the point sources are separated by
\[
\tau = \frac{c}{\Omega} \sigma^{\frac{1}{2n-1}},
\]
for some positive constant $c$,  there exists a discrete measure $\mu = \sum_{j=1}^n a_j \delta_{y_j}$ with $n$ point sources located at $\{-\tau, -2\tau, -n\tau\}$ and another discrete measure $\widehat \mu = \sum_{j=1}^n \widehat a_j \delta_{\widehat y_j}$ with $n$ point sources located at  $\{0,\tau,\cdots, (n-1)\tau\}$ such that
\[
\bnorm{\mathcal F[\widehat \mu]-\mathcal F[\mu]}_{\infty}< \sigma,
\]
and the minimum magnitude of amplitudes of $\widehat \mu$ and $\mu$ are of order one. 

This result demonstrates that when the point sources are separated by $\frac{c}{\Omega} \sigma^{\frac{1}{2n-1}}$, the solution of the $l_0$-minimization problem in the single measurement case
\begin{equation}\label{prob:singlel0minimization}
\min_{\widehat \mu} \bnorm{\widehat \mu}_{0} \quad \text{subject to} \quad \bnorm{\mathcal F[\widehat \mu] -\vect Y}_{\infty}< \sigma, 
\end{equation}	
is not stable. In particular, the recovered point sources by (\ref{prob:singlel0minimization}) may be located in an interval completely disjoint from that of the ground truth. 


Therefore, for the single snapshot case, when the point sources are separated by $O(\frac{\sigma^{\frac{1}{2n-1}}}{\Omega})$, the 
$l_0$-minimization may be unstable. However, for the multi-snapshot spectral estimation, when the point sources are separated by $O(\frac{(\frac{\sigma}{\sigma_{\infty, \min}\left(\vect A^{\top}\right)})^{\frac{1}{n}}}{\Omega})$, the $l_0$-minimization (\ref{prob:l0minimization2}) is still stable. 

For example, in the DOA estimation, suppose we have a high degree of incoherence in the amplitude vectors $a_{j,t}$'s (such as Gaussian random variables in (\ref{equ:gaussianillum1})), making $\frac{1}{\sigma_{\infty, \min}\left(\vect A^{\top}\right)}$ or $\left(\frac{1}{\sigma_{\infty, \min}\left(\vect A^{\top}\right)}\right)^{\frac{1}{n}}$ be of constant order, the resolution now is of order $O(\frac{\sigma^{\frac{1}{n}}}{\Omega})$. Compared with resolution in the single measurement case, say of order $O(\frac{\sigma^{\frac{1}{2n-1}}}{\Omega})$, this clearly shows a significant enhancement and illustrates the effect of multiple snapshots in improving the resolution.

\subsection{Lower bound for the resolution estimate}
Theorem \ref{thm:l0normrecovery0} states that when we have a high degree of incoherence for columns of $\vect A^{\top}$ so that $\sigma_{\infty, \min}\left(\vect A^{\top}\right)$ is of order one, the resolution of the sparse recovery (\ref{prob:l0minimization2}) should be less than $\frac{c}{\Omega}\sigma^{\frac{1}{n}}$ for some positive constant $c$. The following proposition, proven in Appendix \ref{section:proofofsupportlowerbound}, illustrates that this resolution order is the best achievable. 


\begin{prop}\label{prop:multisupportlowerboundthm1}
	Given $n \geqs 2$, and source amplitudes $a_{j,t}, 1\leqs j\leqs n, 1\leqs t\leqs T$, let $\tau$ be given by	\begin{equation}\label{equ:multisupportlowerboundsepadis2}
	\tau = \frac{0.044}{\Omega}\Big(\frac{\sigma}{\bnorm{\vect A}_{1}}\Big)^{\frac{1}{n}}.
	\end{equation}
 with $\frac{\sigma}{\bnorm{\vect A}_1}\leqs 1$. Then there exist $n$ points $y_1=-\tau, y_2=-2\tau,\ldots, y_n=-n\tau $ and $n$ points $\widehat y_1= 0, \widehat \tau,\cdots, \widehat y_n= (n-1)\tau$, such that there exist $\widehat a_{j,t}$'s so that 
\[
    \bnorm{\mathcal F \left[\sum_{j=1}^{n}\widehat a_{j,t} \delta_{\widehat y_j}\right]-\mathcal F\left[\sum_{j=1}^n a_{j,t}\delta_{y_j}\right]}_{\infty}< \sigma, \ t=1, \cdots,  T.
\] 
\end{prop} 

In Proposition \ref{prop:multisupportlowerboundthm1}, the recovered location and the ground truth are in two intervals that are completely disjoint from each other, rendering stable location recovery impossible. In particular, normalizing the magnitude of the amplitude vector (or signal) such that $\sum_{j=1}^{n}\babs{a_{j,t}}=O(1), t=1,\cdots, T$, (\ref{equ:multisupportlowerboundsepadis2}) simplifies to 
\[
\tau \approx \frac{0.044}{\Omega}\sigma^{\frac{1}{n}},
\]
which corroborates the sharpness of estimate (\ref{equ:sepaconditionl0normrecovery}) for highly uncorrelated amplitude vectors.

We also conjecture that, for a given amplitude matrix $\vect A$, the optimal resolution lower-bound estimate should be of order $O\left(\frac{1}{\Omega}\left(\frac{\sigma}{\sigma_{\infty, \min}\left(\vect A^{\top}\right)}\right)^{\frac{1}{n}}\right)$, validating the sharpness of (\ref{equ:sepaconditionl0normrecovery}) in general cases. We hope to prove it in future research.

\subsection{Discrete model and MMV problems}
In this subsection, we introduce the discrete version of the model (\ref{equ:multimodelsetting1}) and the problem (\ref{prob:l0minimization2}). Specifically, we assume the $n$ sources $\delta_{y_j}$'s are supported on $N$ evenly spaced grid points $\{x_1, \cdots, x_N\}$ in $[-d, d]$. The collection of point sources in each snapshot is still
\begin{equation}\label{equ:sources2}
\mu_t = \sum_{j=1}^n a_{j, t} \delta_{y_j}, \quad t=1,\cdots, T,
\end{equation}
where $T$ is the total number of snapshots. We sample the Fourier transform of $\mu_t$ at $M>n$ equispaced points:
\begin{equation}\label{equ:multidiscretemodel1}
\mathbf{Y}_t\left(\omega_q\right)=\mathcal{F} [\mu_t]\left(\omega_q\right)+\mathbf{W}_t\left(\omega_q\right)=\sum_{j=1}^n a_{j,t} e^{i y_j \omega_q}+\mathbf{W}_t\left(\omega_q\right), \quad 1 \leqslant q \leqslant M, t=1,\cdots, T,
\end{equation}
where $\omega_1=-\Omega, \omega_2=-\Omega+h, \cdots, \omega_M=\Omega$. Here $h=\frac{2 \Omega}{M-1}$ is the sampling spacing and $\mathbf{W}_t\left(\omega_q\right)$ 's are the noise.  Throughout, we assume that $M>n$ and that $h \leqs \frac{\pi}{2d}$. The latter assumption excludes the non-uniqueness of the point source due to shifts by multiples of $\frac{2 \pi}{h}$. Denote
$$
\mathbf{Y}_t=\left(\mathbf{Y}_t\left(\omega_1\right), \cdots, \mathbf{Y}_t\left(\omega_M\right)\right)^{\top}, \text { and } \mathbf{W}_t=\left(\mathbf{W}_t\left(\omega_1\right), \cdots, \mathbf{W}\left(\omega_M\right)\right)^{\top}.
$$
Then (\ref{equ:multidiscretemodel1}) can be rewritten as 
\begin{equation}\label{equ:mmvproblem1}
\vect S = \vect \Phi \vect A + \vect W,
\end{equation}
where $\vect{S}= \left(\vect Y_1, \cdots, \vect Y_T\right)$, $\vect{W}= \left(\vect W_1, \cdots, \vect W_T\right)$, $\vect A$ is the amplitude matrix defined by (\ref{equ:illuminationpattern1}), and \begin{equation}\label{equ:defineofphi1}
\vect{\Phi} := \left(\psi_{M-1}(e^{i x_1}),\ \cdots,\ \psi_{M-1}(e^{i x_N})\right)
\end{equation}
with $\psi_{s}(z) := z^{-\Omega}(1, z,\cdots, z^s)^{\top}$ and $x_j$'s being the grid points. We assume that the noise matrix $\vect W$ satisfies 
\[
\bnorm{\vect W}_{\mathrm{F}}< \sigma 
\]
with $\sigma$ being the noise level. 

The problem (\ref{equ:mmvproblem1}) is known as the multiple measurement vectors (MMV) problem \cite{cotter2005sparse, chen2006theoretical, davies2012rank}, which is common in DOA estimation, machine learning, and compressive sensing. The sparse recovery associated with MMV \cite{davies2012rank, tropp2006algorithmsi, tropp2006algorithmsii} reads
\begin{equation}\label{equ:jointsparsitydiscretemodel2}
\min_{\mathbf{\widehat A} \in \mathbb{C}^{N \times T}}\bnorm{\mathbf{\widehat A}}_{\text {row}-0}\quad  \text{ subject to }\bnorm{\vect{S}-\vect{\Phi} \vect{\widehat A}}_{\mathrm{F}} <\sigma, 
\end{equation}
where the matrix $\mathbf{\widehat A} = \left(\widehat a_{j,t}\right)_{j=1,\cdots, N, t=1,\cdots, T}$ and $\bnorm{\mathbf{\widehat A}}_{\text{row}-0}$ is the row-$l_0$ quasi-norm \cite{tropp2006algorithmsi, tropp2006algorithmsii} defined as  
\[
\bnorm{\mathbf{\widehat A} }_{\text{row}-0}:=\# \operatorname{rowsupp}(\mathbf{\widehat A} ) 
\]
with $\operatorname{rowsupp}\left(\mathbf{\widehat A} \right):=\left\{j \in \{1,\cdots, N\}: \widehat a_{j,t} \neq 0 \text{ for some } t \in \{1,\cdots, T\} \right\}$. Note that this is also a transformation of the optimization (\ref{prob:l0minimization2}) in the discrete setting. The convex relaxation of this combinatorial problem is
$$
\min_{\mathbf{\widehat A} \in \mathbb{C}^{N \times T}}\bnorm{\mathbf{\widehat A}}_{\mathrm{rx}} \quad \text { subject to }\bnorm{\vect{S}-\vect{\Phi} \mathbf{\widehat A}}_{\mathrm{F}} < \sigma, 
$$
where $\bnorm{\mathbf{\widehat A}}_{\mathrm{rx}} := \sum_{1\leqslant j\leqslant N} \max_t\left|\widehat a_{j,t}\right|$. See \cite{tropp2006algorithmsi, tropp2006algorithmsii} for a detailed discussion of these problems and the proposed algorithms. In particular, \cite{tropp2006algorithmsi, tropp2006algorithmsii} focus on the simultaneous sparse approximation of signals consisting of general elementary atoms, and their stability findings (Theorem 5.1) do not apply to the super-resolution problem, due to the ill-condition of the corresponding dictionary.

Similarly to Theorem \ref{thm:l0normrecovery0}, we derive the following stability result for the optimization problem (\ref{equ:jointsparsitydiscretemodel2}), whose proof is given in Section \ref{section:proofdiscretel0normrecovery}. To the best of our knowledge, this is the first stability result for the sparse MMV recovery in the super-resolution problem.

\begin{thm}\label{thm:discretel0normrecovery0}
Consider the grid $\{x_j\}_{j=1}^N$ of length $\frac{c_0n\pi}{2\Omega}$. Let $n\geqs 2$ and $\vect Y_t$'s in (\ref{equ:multidiscretemodel1}) be the measurements generated by $n$ point sources $\delta_{y_j}$'s supported on the grid. Suppose that the following separation condition holds:	\begin{equation}\label{equ:mmvsepaconditionl0normrecovery}
	d_{\min} := \min_{p\neq j}\babs{y_p-y_j}\geqs \frac{4c_0e\pi }{\Omega }\Big(\frac{\sigma}{\sigma_{\min}(\vect A^{\top})}\Big)^{\frac{1}{n}},
	\end{equation}
	where $\sigma_{\min}(\vect A^{\top})$ is the minimum singular value of $\vect A^{\top}$ satisfying $\frac{\sigma}{\sigma_{\min}(\vect A^{\top})}\leqs 1$. Then, any solution $\mathbf{\widehat A}$ to (\ref{equ:jointsparsitydiscretemodel2}) contains exactly $n$ nonzero rows. Moreover, let $\widehat y_1, \cdots, \widehat y_n$ be points in the grid $\{x_j\}_{j=1}^N$ corresponding to the nonzero rows of a solution of (\ref{equ:jointsparsitydiscretemodel2}). After reordering the $\widehat y_j$'s, we have 
	\begin{equation}
	\Big|\widehat y_j-y_j\Big|<\frac{d_{\min}}{2},
	\end{equation} 
	and 
	\begin{equation}
 \Big|\widehat y_j-y_j\Big|< \frac{C(n)}{\Omega}\mathrm{SRF}^{n-1}\frac{\sigma}{\sigma_{\infty, \min}\left(\vect A\right)}, \quad 1\leqs j\leqs n,
	\end{equation}
	where $\mathrm{SRF} = \frac{\pi}{\Omega d_{\min}}$ and $C(n)=\sqrt{\pi}n^{\frac{3}{2}}c_0^{n}e^{n}$.
\end{thm}

\section{Multi-dimensional case}\label{section:highdcase}
\subsection{Problem setting}
In this section, we generalize Theorem \ref{thm:l0normrecovery0} to the $d$-dimensional space $\mathbb R^d$. Let us first introduce the model setting. Similarly to the one-dimensional case, the collection of $n$ point sources in each snapshot reads
\[
\mu_t = \sum_{j=1}^n a_{j,t}\delta_{\vect y_j},\quad t=1, \cdots, T, 
\]
where $\vect y_j\in \mathbb R^d, 1\leqs j\leqs n$ are the source locations and $a_{j,t}$ represents the amplitude of source $\delta_{\vect y_j}$ in the $t$-th snapshot. The available measurements are given by  
\begin{equation}\label{equ:highdmultimodelsetting1}
\mathbf Y_t(\vect \omega) = \mathcal F \left[\mu_t\right] (\vect \omega) + \mathbf W_t(\vect \omega)= \sum_{j=1}^{n}a_{j,t} e^{i \vect y_j \cdot \vect \omega} + \mathbf W_t(\vect \omega), \ 1\leqs t\leqs T,\ \bnorm{\vect \omega}_2 \leqs \Omega,
\end{equation}
where $\mathcal F\left[\mu_t\right]$ denotes the $d$-dimensional Fourier transform of $\mu_t$ and $\vect W_t(\vect \omega)$ is the noise. We assume that $\bnorm{\mathbf W_t}_{\infty}<\sigma$ with $\sigma$ being the noise level and $\bnorm{f}_\infty =\max_{\bnorm{\vect \omega}_2 \leqs \Omega }\babs{f(\vect \omega)}$. 

We consider reconstructing the point sources as the sparsest solution (solution to the $l_0$-minimization problem) under the measurement constraint. We suppose that the point sources are located in a sphere $\mathcal O\subset \mathbb R^d$ with a radius of several Rayleigh resolution limits. Then we consider the following optimization problem: 
\begin{equation}\label{prob:highdl0minimization2}
\begin{aligned}
&\min_{\{\mathbf{\widehat y}_1, \cdots, \mathbf{\widehat y}_k\}\subset \mathcal O} \left(\#\{\mathbf{\widehat y}_1, \cdots, \mathbf{\widehat y}_k\}\right) \\
&\text{subject to the existence of $\widehat a_{j,t}$'s such that}\  \bnorm{\mathcal F\left[\sum_{j=1}^n\widehat a_{j,t} \delta_{\mathbf{\widehat y}_j}\right] - \vect Y_t}_{\infty}< \sigma,\ 1\leqslant t\leqslant T.
\end{aligned}
\end{equation} 
Our main result in the following subsection gives resolution and stability estimates for the problem (\ref{prob:highdl0minimization2}).

\subsection{Main results for the stability of sparse recoveries in multi-dimensions}
The amplitude matrix is still
\begin{equation}\label{equ:highdilluminationpattern1}
\vect A = \begin{pmatrix}
a_{1,1}&\cdots&a_{1, T}\\
\vdots&\vdots&\vdots\\
a_{n,1}&\cdots&a_{n,T}\\
\end{pmatrix}.
\end{equation}
Define
\begin{equation}\label{equ:gammaformula1}
\gamma(d)= \begin{cases}
\sum_{j=1}^d \frac{1}{j}, & d \geqslant 1, \\ 
0, & d=0.
\end{cases}
\end{equation}
 We have the following theorem. We refer to Section \ref{section:proofhighdl0normrecovery}  for its proof.

 \begin{thm}\label{thm:highdl0normrecovery0}
 	Let $n\geqs 2$ and let the sphere $\mathcal O\subset \mathbb R^d$ be of radius  $\frac{c_0n\pi}{4\Omega}$ with $c_0\geqs 1$. Let $\vect Y_t$'s be the measurements that are generated by $n$ point sources at $\{\vect y_1, \cdots, \vect y_n\}, \vect y_j \in \mathcal O$ in the $d$-dimensional space. Assume that
  \begin{equation}\label{equ:highdsupportlimithm0equ0}
 		d_{\min}:=\min_{p\neq j}\Big|\Big|\mathbf y_p-\mathbf y_j\Big|\Big|_2\geqs \frac{2.4c_0e\pi4^{d-1}\left(\frac{(n+2)(n+1)}{2}\right)^{\gamma(d-1)}}{\Omega}\left(\frac{\sigma}{\sigma_{\infty, \min}\left(\vect A^{\top}\right)}\right)^{\frac{1}{n}}. 
 	\end{equation}
 	Then any solution to (\ref{prob:highdl0minimization2}) contains exactly $n$ points. Moreover, for $\{\mathbf{\widehat y}_1, \cdots, \mathbf{\widehat y}_n\}$ being a corresponding solution, after reordering the $\widehat {\mathbf y}_j$'s, we have 
 	\begin{equation}
 		\btwonorm{\widehat {\mathbf y}_j- \vect y_j}<\frac{d_{\min}}{2},
 	\end{equation} 
 	and 
 	\begin{equation}
 		\btwonorm{\widehat {\mathbf y}_j-\vect y_j} < \frac{C(d, n)}{\Omega}\mathrm{SRF}^{n-1}\frac{\sigma}{\sigma_{\infty, \min}\left(\vect A\right)}, \quad 1\leqs j\leqs n,
 	\end{equation}
 	where $C(d, n)= \sqrt{\pi}n^2c_0^{n}e^{n}4^{(d-1)n}\left(\frac{(n+2)(n+1)}{2}\right)^{\gamma(d-1)n}$ and $\mathrm{SRF} = \frac{\pi}{\Omega d_{\min}}$ is the super-resolution factor.
 \end{thm}

Theorem \ref{thm:highdl0normrecovery0} is the $d$-dimensional counterpart to Theorem \ref{thm:l0normrecovery0}. It reveals the dependence of the resolution and stability of $d$-dimensional sparse recoveries on the cut-off frequency, the noise level, the sparsity of point sources, and the incoherence of amplitude vectors. Moreover, it suggests that if we can ensure a small enough level of noise and a large enough value for $\sigma_{\infty, \min}
\left(\vect A^{\top}\right)$, super-resolution from multiple snapshots in multi-dimensional spaces becomes feasible.



\begin{remark}
 Compared to the estimate (\ref{equ:sepaconditionl0normrecovery}) for the one-dimensional problem, the estimate (\ref{equ:highdsupportlimithm0equ0}) contains a constant factor that depends on $d$ and $n$. We conjecture that it can be improved to a constant depending only on $d$. 
\end{remark}

\section{Non-linear approximation theory in Vandermonde space}\label{section:approxinvandermonde}
In this section, we present the main technique that is used in the proofs of the main results of the paper, namely the approximation theory in Vandermonde space. This theory was first introduced in \cite{liu2021mathematicaloned, liu2021theorylse}.
 Instead of considering the non-linear approximation problem there, we consider different approximation problems, which are relevant to the stability analysis of (\ref{prob:l0minimization2}) and (\ref{equ:jointsparsitydiscretemodel2}). More specifically,  for  $s \in \mathbb{N}, s \geqs 1,$ and $z\in \mathbb C$, we define the complex Vandermonde-vector
\begin{equation}\label{equ:multiphiformula}
\phi_s(z)=(1,z,\cdots,z^s)^\top.
\end{equation}
We consider the following non-linear problems:  
\begin{equation}\label{equ:multinon-linearapproxproblem1}
\min_{\widehat \theta_j \in \mathbb R, j=1,\cdots,k}\max_{t=1, \cdots, T} \min_{\widehat a_{j,t}\in \mathbb{C}, j=1,\cdots,k}\Big|\Big|\sum_{j=1}^k \widehat a_{j,t}\phi_s(e^{i\widehat \theta_j})-v_t\Big|\Big|_2,
\end{equation}
and 
\begin{equation}\label{equ:multinon-linearapproxproblem2}
\min_{\widehat \theta_j \in \mathbb R, j=1,\cdots,k}\sum_{t=1}^{T} \min_{\widehat a_{j,t}\in \mathbb{C}, j=1,\cdots,k}\Big|\Big|\sum_{j=1}^k \widehat a_{j,t}\phi_s(e^{i\widehat \theta_j})-v_t\Big|\Big|_2^2,
\end{equation}
where $v_t=\sum_{j=1}^{k+1}a_{j,t}\phi_s(e^{i\theta_j})$ is given with $\theta_j$'s being real numbers. We shall derive lower bounds for the optimal value of the minimization problem for the case when $s\leq k$. The main results are presented in Section \ref{section:mainresultsapproxinvandermonde}. 

\subsection{Notation and Preliminaries}
We first introduce some notation and preliminaries. We denote for $k \in \mathbb{N}, k\geqs 1$, 
\begin{equation}\label{equ:multizetaxiformula1} 
\zeta(k)= \left\{
\begin{array}{cc}
(\frac{k-1}{2}!)^2,& \text{$k$ is odd,}\\
(\frac{k}{2})!(\frac{k-2}{2})!,& \text{$k$ is even,}
\end{array} 
\right. \ \xi(k)=\left\{
\begin{array}{cc}
1/ 2,  & k=1,\\
\frac{(\frac{k-1}{2})!(\frac{k-3}{2})!}{4},& \text{$k$ is odd,\,\,$ k\geqs 3$,}\\
\frac{(\frac{k-2}{2}!)^2}{4},& \text{$k$ is even}.
\end{array} 
\right.	
\end{equation}
We also define for  $p, q \in \mathbb{N}, p,q \geqs 1$, and $z_1, \cdots, z_p, \widehat z_1, \cdots, \widehat z_q \in \mathbb C$, the following vector in $\mathbb{R}^p$:
\begin{equation}\label{equ:multieta}
\eta_{p,q}(z_1,\cdots,z_{p}, \widehat z_1,\cdots,\widehat z_q)=\left(\begin{array}{c}
|(z_1-\widehat z_1)|\cdots|(z_1-\widehat z_q)|\\
|(z_2-\widehat z_1)|\cdots|(z_2-\widehat z_q)|\\
\vdots\\
|(z_{p}-\widehat z_1)|\cdots|(z_{p}-\widehat  z_q)|
\end{array}\right).
\end{equation}

We present two lemmas from \cite[Section III]{liu2021theorylse} that help derive our main results. 


\begin{lem}\label{lem:multimultiproductlowerbound1}
	For $\theta_j \in \left[-\frac{\pi}{2}, \frac{\pi}{2}\right], j=1, \cdots, k+1$, assume that $\min_{p\neq j}|\theta_p-\theta_j|=\theta_{\min}$. Then, for any $\widehat \theta_1,\cdots, \widehat \theta_k\in \mathbb R$, we have the following estimate: 	
	\[
	\bnorm{\eta_{k+1,k}(e^{i\theta_1},\cdots,e^{i\theta_{k+1}},e^{i \widehat \theta_1},\cdots,e^{i\widehat \theta_k})}_{\infty}\geqs \xi(k)\left(\frac{2 \theta_{\min}}{\pi}\right)^k.  
	\] 
\end{lem}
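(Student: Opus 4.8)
The plan is to reduce the estimate to a lower bound on a single product of the form $\prod_{j=1}^{k}|e^{i\theta}-e^{i\hat\theta_j}|$ for a well-chosen index, and then to exploit the fact that the points $e^{i\theta_1},\dots,e^{i\theta_{k+1}}$ are spread out on the unit circle by at least $\theta_{\min}$ (in the wrap-around distance). Concretely, for each fixed choice of $\hat\theta_1,\dots,\hat\theta_k$, the vector $\eta_{k+1,k}$ has $k+1$ coordinates, the $p$-th being $\prod_{j=1}^{k}|e^{i\theta_p}-e^{i\hat\theta_j}|$. Since the $\infty$-norm is the maximum over $p$, it suffices to show that at least one of these $k+1$ products is $\geq \xi(k)(2\theta_{\min}/\pi)^k$. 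First I would convert chordal distances to arc/angle distances via the elementary inequality $|e^{i\alpha}-e^{i\beta}| \geq \tfrac{2}{\pi}|\alpha-\beta|_{2\pi}$, which holds because the chord subtending an angle $\varphi\in[0,\pi]$ has length $2\sin(\varphi/2)\geq \tfrac{2}{\pi}\varphi$. This turns the target into a lower bound for $\prod_{j=1}^{k}|\theta_p-\hat\theta_j|_{2\pi}$ for some $p$.

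The combinatorial heart of the argument is then: given $k+1$ points $\theta_1,\dots,\theta_{k+1}$ on the circle $\mathbb R/2\pi\mathbb Z$ with pairwise wrap-around distance at least $\theta_{\min}$, and any $k$ points $\hat\theta_1,\dots,\hat\theta_k$, there exists an index $p$ such that the product of the distances from $\theta_p$ to all the $\hat\theta_j$ is large. The key observation is a pigeonhole/assignment argument: since there are $k+1$ of the $\theta_p$'s but only $k$ of the $\hat\theta_j$'s, I would argue that one can match each $\hat\theta_j$ to a ``nearby'' $\theta_{p}$ so that one $\theta_p$ — call it $\theta_{p_0}$ — is left unmatched, and hence $\theta_{p_0}$ is comparatively far from every $\hat\theta_j$. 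To make this quantitative I would group the $\hat\theta_j$ according to which $\theta_p$ they are closest to; the unmatched $\theta_{p_0}$ then has the property that for each $j$, the distance $|\theta_{p_0}-\hat\theta_j|_{2\pi}$ is at least the distance from $\hat\theta_j$ to its own nearest $\theta_p$, combined with the separation $\theta_{\min}$ between $\theta_{p_0}$ and that nearest point — this is where the $\theta_{\min}$ separation and a triangle-inequality-type estimate on the circle enter. Summing/multiplying these lower bounds over $j$, and carefully tracking how the $k$ indices distribute among the remaining $k$ values of $\theta_p$, produces a product bounded below by a quantity like $\prod_{\ell=1}^{?}(\ell\,\theta_{\min}/\text{something})$, which is exactly the kind of factorial product encoded in $\xi(k)$ in \eqref{equ:multizetaxiformula1}.

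I expect the main obstacle to be the precise bookkeeping in this worst-case assignment: one must show that no adversarial placement of the $\hat\theta_j$'s can simultaneously make all $k+1$ products small, and the extremal configuration (which should be $\hat\theta_j$ roughly interleaved with the $\theta_p$'s, with the $\theta_p$'s equally spaced at distance exactly $\theta_{\min}$) is what forces the constant $\xi(k)$ and the $(2\theta_{\min}/\pi)^k$ scaling. Getting the constant to match $\xi(k)$ exactly — distinguishing the odd and even $k$ cases as in \eqref{equ:multizetaxiformula1} — will require computing the product $\prod_{\ell}|{\text{center}} - \ell\,\theta_{\min}|$ at the extremal configuration, where the ``center'' index is chosen to maximize the product; by symmetry this is the midpoint, and the resulting product of the arm-lengths $1\cdot 2\cdots$ on each side gives the two cases of $\xi(k)$ after dividing by the appropriate power of $2$ coming from the $2/\pi$ chord-to-arc conversion absorbed into $(2\theta_{\min}/\pi)^k$. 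An alternative, possibly cleaner route I would keep in reserve is to bound the product below by relating $\prod_j|\theta_p-\hat\theta_j|$ to the value at $\theta_p$ of the monic polynomial with roots at the $\hat\theta_j$, maximize over $p$, and then use a Lagrange-interpolation identity (the $\theta_p$'s being $k+1$ interpolation nodes) together with the known lower bounds on Vandermonde-type products with separated nodes; this connects directly to the Vandermonde-space machinery of \cite{liu2021theorylse} that the rest of the paper relies on.
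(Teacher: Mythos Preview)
Your high-level plan is in the right spirit, but the pigeonhole/nearest-neighbor assignment you sketch does not close. The step ``group each $\hat\theta_j$ by its nearest $\theta_p$ and take an unmatched $\theta_{p_0}$'' does not force $\theta_{p_0}$ to be far from \emph{each} $\hat\theta_j$ in a way that produces the factorial product $\xi(k)$: several $\hat\theta_j$'s can share the same nearest $\theta_p$, so the triangle-inequality bound $|\theta_{p_0}-\hat\theta_j|_{2\pi}\ge |\theta_{p_0}-\theta_{p(j)}|_{2\pi}-|\theta_{p(j)}-\hat\theta_j|_{2\pi}$ yields many repeated factors of the same size rather than the increasing sequence $1,2,\dots$ needed to reach $\xi(k)$. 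You anticipate this when you call the bookkeeping the ``main obstacle,'' but it is not just bookkeeping --- a genuine structural fact is required.

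The paper supplies that missing structural fact by a variational argument. Instead of working with arbitrary $\hat\theta_j$'s, it studies a \emph{minimizer} of $\max_p\prod_j|e^{i\theta_p}-e^{i\hat\theta_j}|$ and shows, via a perturbation lemma (Lemma~\ref{multiproductlowebound0}: shifting two nearby $\hat\theta$'s apart strictly decreases the product seen from any $\theta_p$ on either side), that at the minimum the $\hat\theta_j$'s must \emph{interlace} with the $\theta_p$'s --- exactly one $\hat\theta_j$ in each arc $[\theta_j,\theta_{j+1})$. Only with interlacing in hand does a pigeonhole step of the kind you envisioned become sharp: some $\theta_{j_0}$ satisfies $|\theta_{j_0}-\hat\theta_{j_0}|_{2\pi}>\theta_{\min}/2$, and the interlacing then forces $|\theta_{j_0}-\hat\theta_{j_{\pm q}}|_{2\pi}$ to dominate $|\theta_{j_0}-\theta_{j_{\pm(q-1)}}|_{2\pi}\ge (q-1)\theta_{\min}$, giving the factorial product $(p-1)!(k-1-p)!$ whose minimum over $p$ is exactly $4\xi(k)$. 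Your chord-to-arc conversion $|e^{i\alpha}-e^{i\beta}|\ge\frac{2}{\pi}|\alpha-\beta|_{2\pi}$ is correct and is used in the paper, but the Lagrange-interpolation alternative you mention does not by itself supply the interlacing either.
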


\medskip
\begin{lem}\label{lem:multistablemultiproduct0}
Let $\epsilon >0$. For $\theta_j, \widehat \theta_j \in \left[-\frac{\pi}{2}, \frac{\pi}{2}\right], j=1, \cdots, k$, assume that
\begin{equation}\label{equ:stablemultiproductlemma1equ1}
	\bnorm{\eta_{k,k}(e^{i\theta_1},\cdots,e^{i\theta_k}, e^{i\widehat \theta_1},\cdots, e^{i\widehat \theta_k})}_{\infty}< \left(\frac{2}{\pi}\right)^{k}\epsilon,
	\end{equation}
	where $\eta_{k,k}$ is defined as in (\ref{equ:multieta}), and that
	\begin{equation}\label{equ:stablemultiproductlemma1equ2}
	\theta_{\min} =\min_{q\neq j}|\theta_q-\theta_j|\geqs \Big(\frac{4\epsilon}{\lambda(k)}\Big)^{\frac{1}{k}},
	\end{equation} 
	where 
	\begin{equation}\label{equ:lambda1}
	\lambda(k)=\left\{
	\begin{array}{ll}
	1,  & k=2,\\
	\xi(k-2),& k\geqs 3.
	\end{array} 
	\right.	
	\end{equation}
	Then, after reordering the $\widehat \theta_j$'s, we have
	\begin{equation}\label{equ:stablemultiproductlemma1equ4}
	\babs{\widehat \theta_j -\theta_j}< \frac{\theta_{\min}}{2},  \quad j=1,\cdots,k,
	\end{equation}
	and moreover,
	\begin{equation}\label{equ:stablemultiproductlemma1equ5}
	\babs{\widehat \theta_j -\theta_j}< \frac{2^{k-1}\epsilon}{(k-2)!(\theta_{\min})^{k-1}}, \quad j=1,\cdots, k.
	\end{equation}
\end{lem}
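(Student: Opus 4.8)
The plan is to pass everything to the warped metric via the elementary two-sided bound $\frac{2}{\pi}|a-b|_{2\pi}\le |e^{ia}-e^{ib}|\le |a-b|_{2\pi}$, so that (\ref{equ:stablemultiproductlemma1equ1}) reads $\prod_{l=1}^{k}|\theta_j-\hat\theta_l|_{2\pi}<\epsilon$ for every $j$ and, conversely, control of a single chord $|e^{i\theta_j}-e^{i\hat\theta_l}|$ yields control of the arc $|\theta_j-\hat\theta_l|_{2\pi}$. The argument then splits into two stages: (i) construct a bijective ``near-identification'' of the $\hat\theta_j$'s with the $\theta_j$'s, which is precisely the qualitative conclusion (\ref{equ:stablemultiproductlemma1equ4}); (ii) feed this identification back into the product bound to obtain the quantitative estimate (\ref{equ:stablemultiproductlemma1equ5}).

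For stage (i) I would first show that every $\hat\theta_{l_0}$ lies within warped distance $\theta_{\min}/2$ of some $\theta_j$. Fixing $l_0$ and applying Lemma \ref{lem:multimultiproductlowerbound1} to the $k$ nodes $e^{i\theta_1},\dots,e^{i\theta_k}$ against the $k-1$ nodes $\{e^{i\hat\theta_l}\}_{l\ne l_0}$ yields an index $j(l_0)$ with $\prod_{l\ne l_0}|e^{i\theta_{j(l_0)}}-e^{i\hat\theta_l}|\ge\xi(k-1)(2\theta_{\min}/\pi)^{k-1}$; dividing this into (\ref{equ:stablemultiproductlemma1equ1}) bounds the surviving chord, hence $|\theta_{j(l_0)}-\hat\theta_{l_0}|_{2\pi}$, by $\delta^\star:=\epsilon/(\xi(k-1)\theta_{\min}^{k-1})$. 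The separation condition (\ref{equ:stablemultiproductlemma1equ2}), together with the elementary fact $\xi(k-2)\le 2\xi(k-1)$ for $k\ge 3$ (and $\lambda(2)=1$ when $k=2$), forces $\delta^\star\le\theta_{\min}/2$. Since the $\theta_j$ are $\theta_{\min}$-separated, the warped balls $B(\theta_j,\theta_{\min}/2)$ are pairwise disjoint, so $l\mapsto j(l)$ is a well-defined map into $\{1,\dots,k\}$.

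It remains to prove $l\mapsto j(l)$ is injective, hence a bijection. If not, two of the $\hat\theta$'s, say $\hat\theta_{l_1},\hat\theta_{l_2}$, lie within $\delta^\star$ of the same $\theta_{j_1}$. Apply Lemma \ref{lem:multimultiproductlowerbound1} a second time, now to the $k-1$ nodes $\{e^{i\theta_j}\}_{j\ne j_1}$ against the $k-2$ nodes $\{e^{i\hat\theta_l}\}_{l\ne l_1,l_2}$, obtaining $j^\star\ne j_1$ with $\prod_{l\ne l_1,l_2}|e^{i\theta_{j^\star}}-e^{i\hat\theta_l}|\ge\xi(k-2)(2\theta_{\min}/\pi)^{k-2}$. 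Dividing into (\ref{equ:stablemultiproductlemma1equ1}) bounds $|e^{i\theta_{j^\star}}-e^{i\hat\theta_{l_1}}|\,|e^{i\theta_{j^\star}}-e^{i\hat\theta_{l_2}}|$ above by $4\epsilon/(\pi^2\xi(k-2)\theta_{\min}^{k-2})$; but $\theta_{j^\star}$ is at warped distance $\ge\theta_{\min}$ from $\theta_{j_1}$ while $\hat\theta_{l_1},\hat\theta_{l_2}$ are within $\delta^\star\le\theta_{\min}/2$ of $\theta_{j_1}$, so each of those two chords exceeds $\theta_{\min}/\pi$ and their product exceeds $\theta_{\min}^2/\pi^2$. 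Comparing the two estimates forces $\epsilon>\xi(k-2)\theta_{\min}^k/4=\lambda(k)\theta_{\min}^k/4$, contradicting (\ref{equ:stablemultiproductlemma1equ2}). (The case $k=2$ is handled directly: an empty ball $B(\theta_{j_0},\theta_{\min}/2)$ would make both surviving chords exceed $\theta_{\min}/\pi$, so $\prod_{l=1}^2|e^{i\theta_{j_0}}-e^{i\hat\theta_l}|>(\theta_{\min}/\pi)^2$, already contradicting (\ref{equ:stablemultiproductlemma1equ1})--(\ref{equ:stablemultiproductlemma1equ2}).) After relabeling so that $\hat\theta_j\in B(\theta_j,\theta_{\min}/2)$, this gives (\ref{equ:stablemultiproductlemma1equ4}).

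For stage (ii), with the identification in place I would isolate the factor $l=j$ in $\prod_{l=1}^k|\theta_j-\hat\theta_l|_{2\pi}<\epsilon$: for $l\ne j$ we have $|\theta_j-\hat\theta_l|_{2\pi}\ge|\theta_j-\theta_l|_{2\pi}-\theta_{\min}/2$, and ordering the $\theta_l$ with $l\ne j$ by distance to $\theta_j$ — the $m$-th closest being at warped distance $\ge\lceil m/2\rceil\theta_{\min}$ — gives $\prod_{l\ne j}|\theta_j-\hat\theta_l|_{2\pi}\ge 2^{-(k-1)}(k-2)!\,\theta_{\min}^{k-1}$, whence (\ref{equ:stablemultiproductlemma1equ5}). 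I expect the genuine difficulty to lie not in this logical skeleton but in the bookkeeping of constants: aligning the two applications of Lemma \ref{lem:multimultiproductlowerbound1} with the precise definition $\lambda(k)=\xi(k-2)$, verifying the combinatorial inequality $\prod_{m=1}^{k-1}(2\lceil m/2\rceil-1)\ge(k-2)!$ together with its even/odd cases, and carefully tracking strict versus non-strict inequalities at the tight threshold $\epsilon=\lambda(k)\theta_{\min}^k/4$.
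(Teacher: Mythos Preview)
Your proposal is correct and follows essentially the same three-step skeleton as the paper's proof: use Lemma~\ref{lem:multimultiproductlowerbound1} on $\eta_{k,k-1}$ to show each $\hat\theta_l$ lands in some ball $B(\theta_j,\theta_{\min}/2)$, use Lemma~\ref{lem:multimultiproductlowerbound1} on $\eta_{k-1,k-2}$ to rule out two $\hat\theta$'s in the same ball, and then peel off the $k-1$ ``far'' factors via the double-factorial bound $(2p-1)!!(2(k-1-p)-1)!!\ge(k-2)!$ to isolate $|\hat\theta_j-\theta_j|_{2\pi}$. The only cosmetic differences are that you phrase Step~1 constructively (produce $j(l_0)$) where the paper argues by contradiction, and you order the remaining $\theta_l$'s by warped distance to $\theta_j$ where the paper splits them into two half-circles --- both routes reduce to the identical combinatorial inequality, and the constant bookkeeping you flag ($\xi(k-2)\le 2\xi(k-1)$, strict versus non-strict at the threshold) is exactly what the paper handles.
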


\subsection{Approximation theory in Vandermonde space}\label{section:mainresultsapproxinvandermonde}
Before presenting a lower bound for problems (\ref{equ:multinon-linearapproxproblem1}) and (\ref{equ:multinon-linearapproxproblem2}), we introduce a basic approximation result in Vandermonde space. This result was first derived in \cite{liu2021theorylse}. 

\begin{thm}\label{thm:multispaceapprolowerbound0}
	Let $k\geqs 1$. For fixed $\widehat \theta_1,\cdots, \widehat \theta_k\in \mathbb{R}$, denote $\widehat A= \big(\phi_{k}(e^{i\widehat \theta_1}),\cdots, \phi_{k}(e^{i\widehat \theta_k})\big)$, where  the $\phi_{k}(e^{i\widehat \theta_j})$'s  are defined as in (\ref{equ:multiphiformula}). Let $V$ be the $k$-dimensional complex space spanned by the column vectors of $\widehat A$ and let $V^\perp$ be the one-dimensional orthogonal complement of $V$ in $\mathbb{C}^{k+1}$. Denote by $P_{V^{\perp}}$ the orthogonal projection onto $V^{\perp}$ in $\mathbb{C}^{k+1}$. Then, we have
	\begin{equation*}
	\min_{\widehat a\in \mathbb C^{k}}\bnorm{\widehat A\widehat a-\phi_{k}(e^{i\theta})}_2=\bnorm{P_{V^{\perp}}\big(\phi_{k}(e^{i\theta})\big)}_2 = \babs{v^*\phi_{k}(e^{i\theta})}\geqs \frac{1}{2^k}\babs{\Pi_{j=1}^k(e^{i\theta}-e^{i\widehat \theta_j})},
	\end{equation*}	
	where $v$ is a unit vector in $V^\perp$ and $v^*$ is its conjugate transpose. 	
\end{thm}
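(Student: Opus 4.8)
The plan is to obtain the two displayed equalities from elementary Hilbert-space geometry and then to reduce the stated inequality to a lower bound on the leading coefficient of a polynomial whose roots all lie on the unit circle.

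For the equalities, I would start from the orthogonal decomposition $\phi_k(e^{i\theta}) = P_V(\phi_k(e^{i\theta})) + P_{V^\perp}(\phi_k(e^{i\theta}))$. Since $\hat\theta_1,\dots,\hat\theta_k$ are distinct modulo $2\pi$ (this is implicit in the hypothesis $\dim V = k$), the Vandermonde columns of $\hat A$ are linearly independent, $V$ is exactly the range of $\hat A$, and $\dim V^\perp = 1$. The minimizer $\hat a$ of $\|\hat A\hat a - \phi_k(e^{i\theta})\|_2$ solves $\hat A\hat a = P_V(\phi_k(e^{i\theta}))$, so the residual is precisely $P_{V^\perp}(\phi_k(e^{i\theta}))$, giving the first equality. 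Because $V^\perp$ is one-dimensional with unit spanning vector $v$, one has $P_{V^\perp}(w) = (v^*w)\,v$ for every $w \in \mathbb{C}^{k+1}$, hence $\|P_{V^\perp}(w)\|_2 = |v^*w|\,\|v\|_2 = |v^*w|$, which is the second equality.

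For the inequality, write $v = (v_0,\dots,v_k)^\top$ and introduce the polynomial $Q(z) = \sum_{l=0}^{k}\overline{v_l}\,z^l$ of degree at most $k$. Orthogonality of $v$ to each column $\phi_k(e^{i\hat\theta_j})$ of $\hat A$ reads $Q(e^{i\hat\theta_j}) = v^*\phi_k(e^{i\hat\theta_j}) = 0$ for $j = 1,\dots,k$; since $v$ is a unit vector, $Q \not\equiv 0$, so $Q$ has degree exactly $k$ and factors as $Q(z) = c\prod_{j=1}^{k}(z - e^{i\hat\theta_j})$ with $c = \overline{v_k} \neq 0$. Evaluating at $z = e^{i\theta}$ gives $v^*\phi_k(e^{i\theta}) = Q(e^{i\theta}) = c\prod_{j=1}^{k}(e^{i\theta} - e^{i\hat\theta_j})$, so the theorem follows once $|c| \geq 2^{-k}$ is established.

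To bound $|c|$, I would use Vieta's formulas: the coefficient of $z^l$ in $Q$ equals $c\,(-1)^{k-l}e_{k-l}(e^{i\hat\theta_1},\dots,e^{i\hat\theta_k})$, where $e_m$ is the $m$-th elementary symmetric polynomial, and since every $|e^{i\hat\theta_j}| = 1$ we get $|e_{k-l}| \leq \binom{k}{l}$, hence $|v_l| \leq |c|\binom{k}{l}$. Summing and invoking the Vandermonde identity $\sum_{l=0}^{k}\binom{k}{l}^2 = \binom{2k}{k} \leq 4^k$ gives $1 = \|v\|_2^2 = \sum_{l=0}^{k}|v_l|^2 \leq |c|^2\,4^k$, i.e. $|c| \geq 2^{-k}$. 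The only step requiring genuine care is this last normalization argument — converting $\|v\|_2 = 1$ into a lower bound on the leading coefficient — which succeeds precisely because the roots sit on the unit circle so that the Vieta coefficients are controlled by binomial numbers; everything else is routine bookkeeping. A minor preliminary point to dispatch at the outset is the tacit distinctness of the $\hat\theta_j$ modulo $2\pi$, which is exactly what makes $\dim V = k$ and $\dim V^\perp = 1$ as asserted.
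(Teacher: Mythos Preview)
Your proof is correct. The two equalities follow from standard least-squares/orthogonal-projection facts exactly as you say, and the key step --- writing $v^*\phi_k(z)=Q(z)=c\prod_{j=1}^k(z-e^{i\hat\theta_j})$ and then bounding $|c|\geq 2^{-k}$ via Vieta together with $\sum_{l=0}^k\binom{k}{l}^2=\binom{2k}{k}\leq 4^k$ --- is clean and complete. Your remark that the distinctness of the $\hat\theta_j$ modulo $2\pi$ is implicit in the hypothesis $\dim V=k$ is also well placed.

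As for comparison: the paper does not actually supply a proof of this theorem. It is stated as a known result, with the sentence ``This result was first derived in \cite{liu2021theorylse}'' preceding it, and is then used as a black box inside the proofs of Theorems~\ref{thm:multispaceapprolowerbound1} and~\ref{thm:multispaceapproxlowerbound2}. So there is no in-paper argument to compare against; your self-contained derivation fills that gap and is in the natural spirit one would expect (and presumably close to what appears in the cited reference).
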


\medskip
We then have the following results for non-linear approximation (\ref{equ:multinon-linearapproxproblem1}) in Vandermonde space.
\begin{thm}\label{thm:multispaceapprolowerbound1}
	Let $k\geqs 1$ and $\theta_j\in\left[-\frac{\pi}{2}, \frac{\pi}{2} \right], 1\leqs j\leqs k+1,$ be $k+1$ distinct points with $\theta_{\min}=\min_{p\neq j}|\theta_p-\theta_j|>0$. For $q\leqs k$, let $\widehat \alpha_t(q)=(\widehat a_{1,t},\cdots, \widehat a_{q,t})^\top$, $\alpha_t=(a_{1,t},\cdots, a_{k+1, t})^\top$ and
	\[
	\widehat A(q)= \big(\phi_{k}(e^{i\widehat \theta_1}),\cdots, \phi_{k}(e^{i\widehat \theta_q})\big), \quad A= \big(\phi_{k}(e^{i\theta_1}),\cdots, \phi_{k}(e^{i\theta_{k+1}})\big),
	\]
	where $\phi_{k}(z)$ is defined as in (\ref{equ:multiphiformula}). Then, for any $\ \widehat \theta_1, \cdots,  \widehat \theta_q\in \mathbb{R}$,
	\begin{equation*}
	\max_{t=1,\cdots,T}\min_{\widehat \alpha_t(q)\in \mathbb C^q}\bnorm{\widehat A(q)\widehat \alpha_t(q)-A \alpha_t}_2\geqs  \frac{\sigma_{\infty, \min}(B)\xi(k)(\theta_{\min})^{k}}{\pi^{k}},
	\end{equation*}
	where 
	\begin{equation}\label{equ:multispaceapprolowerbound1equ1}
	B=\left(\begin{array}{cccc}
	a_{1,1}&a_{2,1}&\cdots&a_{k+1,1}\\
	\vdots&\vdots&\vdots&\vdots\\
	a_{1,T}&a_{2,T}&\cdots&a_{k+1, T}
	\end{array}\right).
	\end{equation}
\end{thm}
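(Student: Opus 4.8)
The plan is to reduce to the case $q=k$ and then exploit the key structural feature: the one-dimensional orthogonal complement appearing in Theorem~\ref{thm:multispaceapprolowerbound0} is the \emph{same} for every frame index $t$. This lets the worst-case residual over $t$ decouple into a coefficient factor controlled by $\sigma_{\infty,\min}(B)$ and a node-separation factor controlled by $\theta_{\min}$. First I would dispose of the case $q<k$: appending extra points $\hat\theta_{q+1},\dots,\hat\theta_k$ (chosen distinct from one another and from all existing nodes) to $\hat A(q)$ only enlarges its column span, hence for each $t$ it can only decrease $\min_{\hat\alpha_t}\|\hat A(q)\hat\alpha_t-A\alpha_t\|_2$; so it suffices to prove the estimate for $q=k$ with distinct $\hat\theta_j$'s. (If the given $\hat\theta_j$'s already repeat, drop the repeats first, which does not change the span.)

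With $q=k$, I would set $V=\spn\{\phi_k(e^{i\hat\theta_1}),\dots,\phi_k(e^{i\hat\theta_k})\}\subset\mathbb C^{k+1}$, let $V^\perp$ be its one-dimensional orthogonal complement, and let $v$ be a unit vector spanning $V^\perp$. By the least-squares/projection identity already used in Theorem~\ref{thm:multispaceapprolowerbound0}, for each $t$ the inner minimum equals $\|P_{V^\perp}(A\alpha_t)\|_2=|v^*A\alpha_t|$. Writing $c_j:=v^*\phi_k(e^{i\theta_j})$ and $c:=(c_1,\dots,c_{k+1})^\top$, linearity gives $v^*A\alpha_t=\sum_{j=1}^{k+1}a_{j,t}c_j=(Bc)_t$ with $B$ as in (\ref{equ:multispaceapprolowerbound1equ1}); crucially $v$, and therefore $c$, does not depend on $t$, so
\[
\max_{t=1,\dots,T}\ \min_{\hat\alpha_t(k)\in\mathbb C^k}\|\hat A(k)\hat\alpha_t(k)-A\alpha_t\|_2=\|Bc\|_\infty .
\]

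It then remains to chain three lower bounds. By the definition (\ref{sigmadef}) of $\sigma_{\infty,\min}$ together with its homogeneity, $\|Bc\|_\infty\ge\sigma_{\infty,\min}(B)\|c\|_\infty=\sigma_{\infty,\min}(B)\max_j|c_j|$; this is legitimate because $c\neq 0$, since at least one of the $k+1$ distinct points $\theta_j$ must differ (mod $2\pi$) from all $k$ points $\hat\theta_l$. Next, Theorem~\ref{thm:multispaceapprolowerbound0} with $\theta=\theta_j$ gives $|c_j|\ge 2^{-k}\prod_{l=1}^{k}|e^{i\theta_j}-e^{i\hat\theta_l}|$, hence $\max_j|c_j|\ge 2^{-k}\|\eta_{k+1,k}(e^{i\theta_1},\dots,e^{i\theta_{k+1}},e^{i\hat\theta_1},\dots,e^{i\hat\theta_k})\|_\infty$ with $\eta_{k+1,k}$ as in (\ref{equ:multieta}). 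Finally Lemma~\ref{lem:multimultiproductlowerbound1} bounds this last quantity by $\xi(k)(2\theta_{\min}/\pi)^k$. Multiplying the three estimates yields $\sigma_{\infty,\min}(B)\,2^{-k}\,\xi(k)(2\theta_{\min}/\pi)^k=\sigma_{\infty,\min}(B)\xi(k)\theta_{\min}^k/\pi^k$, which is the asserted bound.

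The main obstacle — really the only nonroutine point — is the observation in the second paragraph that the minimizing residual for each $t$ is $|v^*A\alpha_t|$ for one \emph{fixed} $v$, so that the outer $\max_t$ acts exactly as $\|B(\cdot)\|_\infty$ on the single vector $c$; carrying out the transpose bookkeeping $(v^*A\alpha_t)_{t=1}^{T}=Bc$ correctly is what makes $\sigma_{\infty,\min}(B)$ emerge. All the quantitative heavy lifting (the product lower bound on the Vandermonde side) is already delivered by Theorem~\ref{thm:multispaceapprolowerbound0} and Lemma~\ref{lem:multimultiproductlowerbound1}, so beyond this assembly the proof should be routine.
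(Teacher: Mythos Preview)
Your proposal is correct and follows essentially the same approach as the paper's proof: reduce to $q=k$, project onto the one-dimensional orthogonal complement via a fixed unit vector $v$, recognize the vector of residuals as $Bc$ with $c_j=v^*\phi_k(e^{i\theta_j})$, apply the defining inequality of $\sigma_{\infty,\min}(B)$, then invoke Theorem~\ref{thm:multispaceapprolowerbound0} and Lemma~\ref{lem:multimultiproductlowerbound1}. Your additional care in handling repeated $\hat\theta_j$'s and verifying $c\neq 0$ for the homogeneity step is welcome but does not change the structure of the argument.
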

\begin{proof}
\textbf{Step 1.} 
Note that, for any  $\widehat \theta_1, \cdots, \widehat \theta_q, \cdots, \widehat \theta_k\in \mathbb R$, if $q<k$, then 
\[
\min_{\widehat a_t(q)\in \mathbb C^q}\bnorm{\widehat A(q)\widehat \alpha_t(q)-A\alpha_t}_2\geqs \min_{\widehat \alpha_t(k)\in \mathbb C^k}\bnorm{\widehat A(k)\widehat \alpha_t(k)-A\alpha_t}_2, \ 1\leqs t\leqs T.
\]
So, we only need to consider the case when $q=k$. We shall verify that, for any $k$ distinct points $\widehat \theta_1,\cdots,\widehat \theta_{k}\in \mathbb R$, we have 
\begin{equation}\label{equ:multispaceapproxlowerboundequ-3}
\max_{t=1,\cdots,T}\min_{\widehat \alpha_t(k)\in \mathbb{C}^k}\bnorm{\widehat A(k)\widehat \alpha_t(k)-A\alpha_t}_2\geqs \frac{\sigma_{\infty, \min}(B)\xi(k)(\theta_{\min})^{k}}{\pi^{k}}.
\end{equation}
Let us then fix $(\widehat \theta_1,\cdots,\widehat \theta_{k})$ in our subsequent arguments.\\

\textbf{Step 2.}  Let $V$ be the complex space spanned by the column vectors of $\widehat A(k)$ and let $V^\perp$ be the orthogonal complement of $V$ in $\mathbb C^{k+1}$. It is clear that $V^\perp$ is a one-dimensional complex space. We let $v$ be a unit vector in $V^{\perp}$ and denote by $P_{V^{\perp}}$ the orthogonal projection onto $V^{\perp}$ in $\mathbb C^{k+1}$. Note that $\|P_{V^{\perp}}u\|_{2} =  |v^* u| $ for $u \in \mathbb C^{k+1}$, where
$v^*$ is the conjugate transpose of $v$. We have
\begin{align}\label{equ:multispaceapproxlowerboundequ-2}
\min_{\widehat \alpha_t\in \mathbb C^k}\bnorm{\widehat A(k)\widehat \alpha_t-A\alpha_t}_2 = \bnorm{P_{V^{\perp}}(A\alpha_{t})}_2= \babs{v^*A\alpha_{t}}= \babs{\sum_{j=1}^{k+1}a_{j,t}v^*\phi_{k}(e^{i\theta_j})} =\babs{\beta_{t}},
\end{align} 
where $\beta_{t}= \sum_{j=1}^{k+1} a_{j,t}v^*\phi_{k}(e^{i\theta_j}), \ t = 1,\cdots, T$. Denote by $\beta=(\beta_{1}, \beta_{2},\cdots, \beta_{T})^\top$. Thus, we only need to estimate the lower bound of $||\beta||_{\infty}$. By (\ref{equ:multispaceapproxlowerboundequ-2}), we have $\beta= B \widehat \eta$,
where $B$ is given by (\ref{equ:multispaceapprolowerbound1equ1}) and 
\begin{equation}\label{equ:multispaceapproxlowerboundequ-5}
\widehat \eta = (v^*\phi_{k}(e^{i\theta_1}), v^*\phi_{k}(e^{i\theta_2}), \cdots, v^*\phi_{k}(e^{i\theta_{k+1}}))^\top.
\end{equation}
By the definition of $\sigma_{\infty, \min}(B)$, we have 
\begin{equation}\label{equ:multispaceapproxlowerboundequ-4}
\bnorm{\beta}_{\infty}\geqs \sigma_{\infty, \min}(B)\bnorm{\widehat \eta}_{\infty}.
\end{equation}
On the other hand, by Theorem \ref{thm:multispaceapprolowerbound0}, we obtain that
\begin{equation}\label{equ:multispaceapproxlowerbound1equ2}
\bnorm{\widehat \eta}_{\infty} \geqs \frac{1}{2^k}\bnorm{\eta_{k+1, k}(e^{i\theta_1}, \cdots, e^{i\theta_{k+1}},  e^{i\widehat \theta_1}, \cdots, e^{i\widehat \theta_k})}_{\infty},
\end{equation}
where $\eta_{k+1,k}$ is defined by (\ref{equ:multieta}). Combining this with Lemma \ref{lem:multimultiproductlowerbound1}, we get 
$$
\bnorm{\widehat \eta}_{\infty} \geqs \frac{1}{2^k}\xi(k)\left(\frac{2 \theta_{\min}}{\pi}\right)^k.
$$ 
It then follows that 
\begin{align*}
\bnorm{\beta}_{\infty}\geqs \frac{\sigma_{\infty, \min}(B)\xi(k)(\theta_{\min})^{k}}{\pi^{k}},
\end{align*}
which proves (\ref{equ:multispaceapproxlowerboundequ-3}) and hence the theorem.  \end{proof}

\medskip
\begin{thm}\label{thm:multispaceapproxlowerbound2}
	Let $k\geqs 2$ and $\theta_j \in \mathbb R, j=1, \cdots, k$ be $k$ points. Assume that there are $k$ distinct points $\widehat \theta_1,\cdots,\widehat \theta_k\in\left[-\frac{\pi}{2}, \frac{\pi}{2} \right]$ satisfying
	\[ 
 \max_{t=1, \cdots, T} ||\widehat A\widehat \alpha_t-A \alpha_t||_2< \sigma, \]
	where
	$\widehat \alpha_t=(\widehat a_{1,t},\cdots, \widehat a_{k,t})^\top$, $\alpha_t=(a_{1,t},\cdots, a_{k,t})^\top$ and
	\[
	\widehat A= \big(\phi_{k}(e^{i \widehat \theta_1}),\cdots, \phi_{k}(e^{i \widehat \theta_k})\big), \quad A= \big(\phi_{k}(e^{i \theta_1}),\cdots, \phi_{k}(e^{i \theta_{k}})\big).
	\]
	Then
	\[
	\bnorm{\eta_{k,k}(e^{i \theta_1},\cdots,e^{i \theta_k},e^{i \widehat \theta_1},\cdots,e^{i \widehat \theta_k})}_{\infty}<\frac{2^{k}}{\sigma_{\infty, \min}(B)}\sigma,
	\]
	where
	\begin{equation}\label{equ:multispaceapproxlowerbound2equ1}
	B=\left(\begin{array}{cccc}
	a_{1,1}&a_{2,1}&\cdots&a_{k,1}\\
	\vdots&\vdots&\vdots&\vdots\\
	a_{1,T}&a_{2,T}&\cdots&a_{k, T}
	\end{array}\right).
	\end{equation}
\end{thm}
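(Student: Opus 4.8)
The plan is to reuse the orthogonal‑projection setup from the proof of Theorem~\ref{thm:multispaceapprolowerbound1}, but to run the chain of inequalities in the opposite direction: rather than lower‑bounding the approximation error by $\eta_{k+1,k}$, we will upper‑bound $\eta_{k,k}$ in terms of the approximation error $\sigma$.

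First I would introduce the one–dimensional orthogonal complement. Let $V\subset\mathbb C^{k+1}$ be the column span of $\hat A$; since $\hat\theta_1,\dots,\hat\theta_k$ are distinct, the Vandermonde vectors $\phi_k(e^{i\hat\theta_j})$ are linearly independent, so $\dim V=k$ and the orthogonal complement $V^\perp$ is one–dimensional. Pick a unit vector $v\in V^\perp$. Exactly as in~(\ref{equ:multispaceapproxlowerboundequ-2}), for each $t$ we have $\min_{\hat\alpha_t\in\mathbb C^k}\|\hat A\hat\alpha_t-A\alpha_t\|_2=\|P_{V^\perp}(A\alpha_t)\|_2=|v^*A\alpha_t|$, so the hypothesis $\max_t\|\hat A\hat\alpha_t-A\alpha_t\|_2<\sigma$ forces $|v^*A\alpha_t|<\sigma$ for every $t$.

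Next I would package these $T$ scalar bounds through the matrix $B$. Writing $\hat\eta=(v^*\phi_k(e^{i\theta_1}),\dots,v^*\phi_k(e^{i\theta_k}))^\top$ and expanding $v^*A\alpha_t=\sum_{j=1}^k a_{j,t}\,v^*\phi_k(e^{i\theta_j})$ shows that the vector $\beta=(v^*A\alpha_1,\dots,v^*A\alpha_T)^\top$ equals $B\hat\eta$. By the homogeneous reformulation of the definition~(\ref{sigmadef}) of $\sigma_{\infty,\min}$ (namely $\|B\hat\eta\|_\infty\ge\sigma_{\infty,\min}(B)\,\|\hat\eta\|_\infty$, obtained by scaling $\hat\eta$ to have unit $\infty$‑norm), together with $\|\beta\|_\infty<\sigma$, we get $\|\hat\eta\|_\infty<\sigma/\sigma_{\infty,\min}(B)$, i.e.\ $|v^*\phi_k(e^{i\theta_j})|<\sigma/\sigma_{\infty,\min}(B)$ for every $j$. (If $\sigma_{\infty,\min}(B)=0$ the asserted bound is vacuous, so we may assume it positive.)

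Finally I would invoke the basic estimate of Theorem~\ref{thm:multispaceapprolowerbound0} at the points $\theta=\theta_j$: it gives $|v^*\phi_k(e^{i\theta_j})|\ge 2^{-k}\prod_{l=1}^k|e^{i\theta_j}-e^{i\hat\theta_l}|$. Chaining this with the previous bound yields $\prod_{l=1}^k|e^{i\theta_j}-e^{i\hat\theta_l}|<2^k\sigma/\sigma_{\infty,\min}(B)$ for each $j$, and taking the maximum over $j=1,\dots,k$ is precisely the claimed bound on $\|\eta_{k,k}(e^{i\theta_1},\dots,e^{i\theta_k},e^{i\hat\theta_1},\dots,e^{i\hat\theta_k})\|_\infty$. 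I do not expect a genuine obstacle here: the argument is essentially a transposition of the proof of Theorem~\ref{thm:multispaceapprolowerbound1}. The only points requiring some care are that the distinctness of the $\hat\theta_j$'s guarantees $\dim V^\perp=1$ (so $v$ is well defined up to a phase), and that Theorem~\ref{thm:multispaceapprolowerbound0} is exactly what converts the projection $v^*\phi_k(e^{i\theta_j})$ into the product over the $e^{i\hat\theta_l}$ — equivalently, that the degree‑$k$ polynomial $z\mapsto v^*\phi_k(z)$, whose roots are the $e^{i\hat\theta_l}$, has leading coefficient of modulus at least $2^{-k}$.
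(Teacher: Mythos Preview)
Your proposal is correct and follows essentially the same route as the paper: introduce a unit vector $v$ in the one–dimensional orthogonal complement of the column span of $\hat A$, rewrite the residuals as $\beta=B\hat\eta$ with $\hat\eta_j=v^*\phi_k(e^{i\theta_j})$, use the definition of $\sigma_{\infty,\min}(B)$ to bound $\|\hat\eta\|_\infty$, and then apply Theorem~\ref{thm:multispaceapprolowerbound0} to convert each $|v^*\phi_k(e^{i\theta_j})|$ into $2^{-k}\prod_l|e^{i\theta_j}-e^{i\hat\theta_l}|$. Your extra remarks on the distinctness of the $\hat\theta_j$'s and the degenerate case $\sigma_{\infty,\min}(B)=0$ are sound but not needed beyond what the paper assumes.
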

\begin{proof} Let $V$ be the complex space spanned by the column vectors of $\widehat A$ and let  $V^\perp$ be the orthogonal complement of $V$ in $\mathbb C^{k+1}$.  Let $v$ be a unit vector in $V^{\perp}$ and denote by $P_{V^{\perp}}$ the orthogonal projection onto $V^{\perp}$ in $\mathbb C^{k+1}$. Similarly to {Step 2} in the proof of Theorem \ref{thm:multispaceapprolowerbound1}, we obtain that
\begin{align}\label{equ:multispaceapproxlowerbound2equ2}
\min_{\widehat \alpha_t\in \mathbb C^k}||\widehat A\widehat \alpha_t-A\alpha_t||_2 = ||P_{V^{\perp}}(A\alpha_{t})||_2= |v^*A\alpha_{t}|= |\sum_{j=1}^{k}a_{j,t}v^*\phi_{k}(e^{i\theta_j})| =|\beta_{t}|,
\end{align} 
where $\beta_{t}= \sum_{j=1}^{k} a_{j,t}v^*\phi_{k}(e^{i\theta_j}), \ t = 1,\cdots, T$. Denote by $\beta=(\beta_{1}, \beta_{2},\cdots, \beta_{T})^\top$, we have $\beta= B \widehat \eta$,
where $B$ is given by (\ref{equ:multispaceapproxlowerbound2equ1}) and 
\begin{equation}\label{equ:multispaceapproxlowerbound2equ3}
\widehat \eta = (v^*\phi_{k}(e^{i\theta_1}), v^*\phi_{k}(e^{i\theta_2}), \cdots, v^*\phi_{k}(e^{i\theta_{k}}))^\top. 
\end{equation}
By the definition of $\sigma_{\infty, \min}(B)$, we arrive at
\[
||\beta||_{\infty}\geqs \sigma_{\infty, \min}(B)||\widehat \eta||_{\infty}.
\]
On the other hand, by Theorem \ref{thm:multispaceapprolowerbound0}, we get
\begin{equation*}
\bnorm{\widehat \eta}_{\infty} \geqs \frac{1}{2^k}\bnorm{\eta_{k, k}(e^{i\theta_1}, \cdots, e^{i\theta_{k}},  e^{i\widehat \theta_1}, \cdots, e^{i\widehat \theta_k})}_{\infty},
\end{equation*}
and hence the theorem is proved. 
\end{proof}

\medskip

Lastly, we present stability results for the non-linear approximation (\ref{equ:multinon-linearapproxproblem2}) that are similar to Theorems \ref{thm:multispaceapprolowerbound1} and \ref{thm:multispaceapproxlowerbound2}.
\begin{thm}\label{thm:multispaceapprolowerbound3}
	Let $k\geqs 1$ and $\theta_j\in\left[-\frac{\pi}{2}, \frac{\pi}{2} \right], 1\leqs j\leqs k+1,$ be $k+1$ distinct points with $\theta_{\min}=\min_{p\neq j}|\theta_p-\theta_j|>0$. For $q\leqs k$, let $\widehat \alpha_t(q)=(\widehat a_{1,t},\cdots, \widehat a_{q,t})^\top$, $\alpha_t=(a_{1,t},\cdots, a_{k+1, t})^\top$ and
	\[
	\widehat A(q)= \big(\phi_{k}(e^{i\widehat \theta_1}),\cdots, \phi_{k}(e^{i\widehat \theta_q})\big), \quad A= \big(\phi_{k}(e^{i\theta_1}),\cdots, \phi_{k}(e^{i\theta_{k+1}})\big),
	\]
	where $\phi_{k}(z)$ is defined as in (\ref{equ:multiphiformula}). Then, for any $\ \widehat \theta_1, \cdots,  \widehat \theta_q\in \mathbb{R}$,
	\begin{equation*}
	\sum_{t=1}^{T}\min_{\widehat \alpha_t(q)\in \mathbb C^q}\bnorm{\widehat A(q)\widehat \alpha_t(q)-A \alpha_t}_2^2\geqs  \frac{\sigma_{\min}(B)^2\xi(k)^2(\theta_{\min})^{2k}}{\pi^{2k}},
	\end{equation*}
	where $\sigma_{\min}(B)$ is the minimum singular value of $B$ in (\ref{equ:multispaceapprolowerbound1equ1}).
\end{thm}
\begin{proof}
Like Step 1 in the proof of Theorem \ref{thm:multispaceapprolowerbound1}, we only need to prove that, for any $k$ distinct points $\widehat \theta_1,\cdots,\widehat \theta_{k}\in \mathbb R$, we have 
\begin{equation}\label{equ:proofdiscreteapprox1}
\sum_{t=1}^T\min_{\widehat \alpha_t(k)\in \mathbb{C}^k}\bnorm{\widehat A(k)\widehat \alpha_t(k)-A\alpha_t}_2^2\geqs \frac{\sigma_{\min}(B)^2\xi(k)^2(\theta_{\min})^{2k}}{\pi^{2k}}.
\end{equation}
 Denote by $\beta = (\beta_1, \beta_2, \cdots, \beta_T)^{\top}$ for $\beta_t$'s in (\ref{equ:multispaceapproxlowerboundequ-2}). We only need to estimate the lower bound of $\bnorm{\beta}_{2}^2$. By (\ref{equ:multispaceapproxlowerboundequ-2}), we have $\beta= B \widehat \eta$,
where $B, \widehat \eta$ are given by (\ref{equ:multispaceapprolowerbound1equ1}), (\ref{equ:multispaceapproxlowerboundequ-5}), respectively. Thus, we have the following estimate:
\begin{equation*}
\bnorm{\beta}_{2}\geqs \sigma_{\min}(B)\bnorm{\widehat \eta}_{2}.
\end{equation*}
On the other hand, by Theorem \ref{thm:multispaceapprolowerbound0}, we obtain that
\begin{equation*}
\bnorm{\widehat \eta}_{2} \geqs \bnorm{\widehat \eta}_{\infty} \geqs \frac{1}{2^k}\bnorm{\eta_{k+1, k}(e^{i\theta_1}, \cdots, e^{i\theta_{k+1}},  e^{i\widehat \theta_1}, \cdots, e^{i\widehat \theta_k})}_{\infty},
\end{equation*}
where $\eta_{k+1,k}$ is defined by (\ref{equ:multieta}). Combining this with Lemma \ref{lem:multimultiproductlowerbound1}, we get 
$$
\bnorm{\widehat \eta}_{2} \geqs \frac{1}{2^k}\xi(k)\left(\frac{2 \theta_{\min}}{\pi}\right)^k.
$$ 
It then follows that 
\begin{align*}
\bnorm{\beta}_{2}^2\geqs \frac{\sigma_{\min}(B)^2\xi(k)^2(\theta_{\min})^{2k}}{\pi^{2k}},
\end{align*}
which proves (\ref{equ:proofdiscreteapprox1}) and hence the theorem. \end{proof}

\begin{thm}\label{thm:multispaceapproxlowerbound4}
	Let $k\geqs 2$ and $\theta_j \in \mathbb R, j=1, \cdots, k$ be $k$ points. Assume that there are $k$ distinct points $\widehat \theta_1,\cdots,\widehat \theta_k\in\left[-\frac{\pi}{2}, \frac{\pi}{2} \right]$ satisfying
	\begin{equation}\label{equ:discreteapprox2}
 \sum_{t=1}^{T} ||\widehat A\widehat \alpha_t-A \alpha_t||_2^2 < \sigma^2, 
 \end{equation}
	where
	$\widehat \alpha_t=(\widehat a_{1,t},\cdots, \widehat a_{k,t})^\top$, $\alpha_t=(a_{1,t},\cdots, a_{k,t})^\top$ and
	\[
	\widehat A= \big(\phi_{k}(e^{i \widehat \theta_1}),\cdots, \phi_{k}(e^{i \widehat \theta_k})\big), \quad A= \big(\phi_{k}(e^{i \theta_1}),\cdots, \phi_{k}(e^{i \theta_{k}})\big).
	\]
	Then,
	\[
	\bnorm{\eta_{k,k}(e^{i \theta_1},\cdots,e^{i \theta_k},e^{i \widehat \theta_1},\cdots,e^{i \widehat \theta_k})}_{\infty}<\frac{2^{k}}{\sigma_{\min}(B)}\sigma,
	\]
	where $\sigma_{\min}(B)$ is the minimum singular value of $B$ in (\ref{equ:multispaceapproxlowerbound2equ1}).
\end{thm}
\begin{proof}
Denote by $\beta=(\beta_{1}, \beta_{2},\cdots, \beta_{T})^\top$ for $\beta_t$'s in (\ref{equ:multispaceapproxlowerbound2equ2}).  Then, we have $\beta= B \widehat \eta$,
where $B$ is given by (\ref{equ:multispaceapproxlowerbound2equ1}) and 
$\widehat \eta$ is given by (\ref{equ:multispaceapproxlowerbound2equ3}). By the definition of $\sigma_{\min}(B)$, we arrive at
\[
||\beta||_{2}\geqs \sigma_{\min}(B)||\widehat \eta||_{2}.
\]
By (\ref{equ:discreteapprox2}) and the definition of $\beta$, we further have 
\[
\sigma_{\min}(B)^2||\widehat \eta||_{2}^2\leqs ||\beta||_{2}^2<\sigma^2.
\]
On the other hand, by Theorem \ref{thm:multispaceapprolowerbound0}, we get
\begin{equation*}
\bnorm{\widehat \eta}_{\infty} \geqs \frac{1}{2^k}\bnorm{\eta_{k, k}(e^{i\theta_1}, \cdots, e^{i\theta_{k}},  e^{i\widehat \theta_1}, \cdots, e^{i\widehat \theta_k})}_{\infty},
\end{equation*}
and hence the theorem is proved. 
\end{proof}

\section{Proofs of Theorems \ref{thm:l0normrecovery0} and \ref{thm:discretel0normrecovery0}} \label{section:proofofthml0normrecover}

\subsection{Proof of Theorem \ref{thm:l0normrecovery0}}
We first introduce the following Theorems \ref{thm:onednumberbound}  and \ref{thm:onedsupportbound} for the number and location recoveries in the one-dimensional super-resolution problem. Leveraging the two theorems, we prove Theorem \ref{thm:l0normrecovery0} at the end of this section.

\begin{thm}\label{thm:onednumberbound}
Suppose that the measurements $\vect Y_t$'s in (\ref{equ:multimodelsetting1}) are generated from $n$ point sources located at $y_j$'s that are in an interval $\mathcal O$ of length $\frac{c_0n\pi}{2\Omega}$ with $c_0\geqs 1$ and satisfy
\begin{equation}\label{equ:onedsepaconditionnumber}
d_{\min} := \min_{p\neq j}\babs{y_p-y_j}\geqs \frac{2.4c_0e\pi }{\Omega }\Big(\frac{\sigma}{\sigma_{\infty, \min}\left(\vect A^{\top}\right)}\Big)^{\frac{1}{n}}
\end{equation}
with  $\frac{\sigma}{\sigma_{\infty, \min}\left(\vect A^{\top}\right)}\leqs 1$ and $\vect A$ being the amplitude matrix (\ref{equ:illuminationpattern1}). Then there is no $k<n$ locations $\widehat y_j\in \mathbb R, j=1, \cdots,k,$ such that there exists $\widehat \mu_t = \sum_{j=1}^k \widehat a_{j,t} \delta_{\widehat y_j}$'s so that 
\[
\bnorm{\mathcal F[\widehat \mu_t] - \vect Y_t}_{\infty}< \sigma, \quad t=1, \cdots, T. 
\] 
\end{thm}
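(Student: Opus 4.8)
The plan is to follow Steps~1--2 of the proof of Theorem~\ref{thm:l0normrecovery0}, with two modifications dictated by the differences in the statement: the noise is now controlled in $\|\cdot\|_\infty$ rather than in an $L^2$-average, and the sources are confined to an interval $\mathcal O$ of length $c_0 n\pi/\Omega$ with distances measured in the ordinary (not warped-around) metric. First I would argue by contradiction: assume there are $k<n$ locations $\hat y_1,\dots,\hat y_k$ and measures $\hat\mu_t=\sum_{j=1}^k\hat a_{j,t}\delta_{\hat y_j}$ with $\|\mathcal F[\hat\mu_t]-\vect Y_t\|_\infty<\sigma$ for all $t$. Since $\vect Y_t=\mathcal F[I_t\mu]+\vect W_t$ with $\|\vect W_t\|_\infty<\sigma$, it then suffices to exhibit some $t$ and some frequency in $[-\Omega,\Omega]$ at which $|\mathcal F[\hat\mu_t]-\mathcal F[I_t\mu]|\ge 2\sigma$.

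To do so I would sample the Fourier data at the $n$ equally spaced points $\omega_l=-\Omega+(l-1)h$, $l=1,\dots,n$, with step $h=\Omega/(c_0 n)$; these lie in $[-\Omega,\Omega]$ because $(n-1)h<\Omega$. Writing $\theta_j=h y_j$, $\hat\theta_j=h\hat y_j$ and factoring $e^{-iy_j\Omega}$ (resp. $e^{-i\hat y_j\Omega}$) out of the columns, the sampled difference vector $\big(\mathcal F[\hat\mu_t](\omega_l)-\mathcal F[I_t\mu](\omega_l)\big)_{l=1}^n$ becomes $\hat D\hat\gamma_t-D\gamma_t$, where $D=\big(\phi_{n-1}(e^{i\theta_1}),\dots,\phi_{n-1}(e^{i\theta_n})\big)$ and $\hat D=\big(\phi_{n-1}(e^{i\hat\theta_1}),\dots,\phi_{n-1}(e^{i\hat\theta_k})\big)$ with $\phi_{n-1}$ as in (\ref{equ:multiphiformula}), $\gamma_t=\big(I_t(y_1)a_1e^{-iy_1\Omega},\dots,I_t(y_n)a_ne^{-iy_n\Omega}\big)^\top$, and $\hat\gamma_t$ analogous. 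Since $k\le n-1$, Theorem~\ref{thm:multispaceapprolowerbound1} (with parameter $n-1$) applies to this Vandermonde system and gives
\[
\max_{1\le t\le T}\big\|\hat D\hat\gamma_t-D\gamma_t\big\|_2\ \ge\ \frac{\sigma_{\infty, \min}(B)\,\xi(n-1)\,\theta_{\min}^{\,n-1}}{\pi^{\,n-1}},\qquad B=I\,\mathrm{diag}\big(a_1e^{-iy_1\Omega},\dots,a_ne^{-iy_n\Omega}\big),
\]
where $\theta_{\min}=\min_{p\neq j}|\theta_p-\theta_j|_{2\pi}$ (distinctness of the recovered nodes plays no role, since the bound depends only on $\theta_{\min}$), and exactly as in (\ref{equ:multinumberupperboundequ2}) one has $\sigma_{\infty, \min}(B)\ge\sigma_{\infty, \min}(I)\,m_{\min}$.

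It remains to push the right-hand side above $2\sigma$, and here is the one genuinely new point. Because the $y_j$ lie in an interval of length $c_0 n\pi/\Omega$, the phases $\theta_j$ lie in an arc of length exactly $\pi$; hence $|\theta_p-\theta_j|_{2\pi}=|\theta_p-\theta_j|=h|y_p-y_j|$, so $\theta_{\min}\ge h\,d_{\min}=\tfrac{\Omega}{c_0 n}d_{\min}$. Combined with the separation hypothesis (\ref{equ:onedsepaconditionnumber}) this yields $\theta_{\min}\ge\tfrac{4.4e\pi}{n}\big(\tfrac{1}{\sigma_{\infty, \min}(I)}\tfrac{\sigma}{m_{\min}}\big)^{1/n}$ --- precisely the bound on $\theta_{\min}$ reached in Step~2 of the proof of Theorem~\ref{thm:l0normrecovery0} --- and the same elementary estimate there (Lemma~\ref{lem:multinumbercalculate1} via Stirling, together with $\tfrac{1}{\sigma_{\infty, \min}(I)}\tfrac{\sigma}{m_{\min}}\le1$) forces $\tfrac{\sigma_{\infty, \min}(B)\xi(n-1)\theta_{\min}^{n-1}}{\pi^{n-1}}>2\sqrt n\,\sigma$. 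Since $\|v\|_\infty\ge\|v\|_2/\sqrt n$ for $v\in\mathbb C^n$, we get $\max_t\|\hat D\hat\gamma_t-D\gamma_t\|_\infty>2\sigma$; and as $\|\hat D\hat\gamma_t-D\gamma_t\|_\infty=\max_l|\mathcal F[\hat\mu_t](\omega_l)-\mathcal F[I_t\mu](\omega_l)|\le\|\mathcal F[\hat\mu_t]-\mathcal F[I_t\mu]\|_\infty$, the desired contradiction follows.

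I do not expect a serious obstacle: Theorem~\ref{thm:multispaceapprolowerbound1} does all the analytic work, and the passage from the $L^2$-average to $\|\cdot\|_\infty$ is free via $\|v\|_\infty\ge\|v\|_2/\sqrt n$. The only step needing care is the choice of sampling step $h=\Omega/(c_0 n)$: it must be small enough that $\{h y_j : y_j\in\mathcal O\}$ sits in an arc of length $\le\pi$ (so wrap-around cannot shrink the separation and the Euclidean $d_{\min}$ may be used directly), yet large enough that the $n$ nodes still fit in $[-\Omega,\Omega]$. This balance is exactly what forces the length bound on $\mathcal O$ into the hypothesis and the separation constant up to $4.4c_0$ --- twice the $2.2$ of Theorem~\ref{thm:l0normrecovery0}, since halving the grid step relative to that proof halves the usable arc.
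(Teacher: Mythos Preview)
Your proposal is correct and follows essentially the same route as the paper's own proof: sample the Fourier data at the equispaced nodes $\omega_l=-\Omega+(l-1)h$ with $h=\Omega/(c_0n)$, rewrite the sampled difference as a Vandermonde system in the phases $\theta_j=hy_j$, invoke Theorem~\ref{thm:multispaceapprolowerbound1} together with (\ref{equ:multinumberupperboundequ2}) and Lemma~\ref{lem:multinumbercalculate1}, and finish with $\|v\|_\infty\ge\|v\|_2/\sqrt{n}$. Your observation that the $\theta_j$'s lie in an arc of length $\pi$ (so the warped distance agrees with the Euclidean one) is exactly the point the paper uses, and your care about the $n$ sample nodes staying inside $[-\Omega,\Omega]$ is well placed.
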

\begin{proof} 
\textbf{Step 1.} Without loss of generality, we assume $\mathcal O = \left[-\frac{c_0 n\pi}{4\Omega}, \frac{c_0n\pi}{4\Omega} \right]$. For location set $\{y_1, \cdots, y_n\}$ of the underlying sources and a location set $\{\widehat y_1, \cdots, \widehat y_k\}$, we write $\widehat \mu_t = \sum_{j=1}^k\widehat a_{j,t} \delta_{\widehat y_j}$ and $\mu_{t} = \sum_{j=1}^n a_{j,t} \delta_{y_{j}}$. We shall prove that if $k<n$, then for any $\widehat y_j\in \mathbb R, \widehat a_{j,t} \in \mathbb C, j=1,\cdots,k, t=1,\cdots,T$, 
\begin{equation}\label{equ:multinumberresultequ0}
\max_{t=1, \cdots, T}\bnorm{\mathcal F\left[\widehat \mu_t\right]-\mathcal F\left[\mu_t\right]}_{\infty}> 2\sigma.
\end{equation}
In view of (\ref{equ:noiseconstraint1}), i.e. $\bnorm{\vect W_t}_{\infty}<\sigma$, from (\ref{equ:multinumberresultequ0}) we further have 
\begin{equation*}
\max_{t=1, \cdots, T} \bnorm{\mathcal F\left[\widehat \mu_t\right]-\vect Y_t}_{\infty}>\sigma,
\end{equation*} 
whereby any location set consisting of only $k<n$ elements cannot be a solution to (\ref{prob:l0minimization2}).

Specifically, for $k<n$, we consider 
\begin{equation}\label{equ:highdmultinumberresultequ1}
\left(\mathcal F [\widehat \mu_t](\omega_1), \mathcal F [\widehat \mu_t](\omega_2), \cdots,\mathcal F [\widehat \mu_t](\omega_{n})\right)^\top -\left(\mathcal F [\mu_t](\omega_1), \mathcal F  [\mu_t](\omega_2), \cdots,\mathcal F [\mu_t](\omega_{n})\right)^\top,
\end{equation}
where $\omega_j=(j-1)\tilde h-\Omega, \ j= 1, \cdots, n$ with $\tilde h = \frac{2\Omega}{c_0n}$. We write (\ref{equ:highdmultinumberresultequ1}) as 
\begin{equation*}
\widehat \Phi \widehat \alpha_t- \Phi \alpha_t, 
\end{equation*}
where $\widehat \alpha_t= \left(\widehat a_{1,t},\cdots, \widehat a_{k,t}\right)^\top$, $\alpha_t=\left(a_{1,t}, \cdots, a_{n,t}\right)^\top$ and  
\begin{equation*}
\widehat \Phi= \left(
\begin{array}{ccc}
e^{i\widehat y_1\omega_1}&\cdots& e^{i\widehat y_k\omega_1}\\
e^{i\widehat y_1\omega_{2}} &\cdots& e^{i\widehat y_k\omega_{2}}\\
\vdots&\vdots&\vdots\\
e^{i\widehat y_1\omega_{n}}&\cdots& e^{i\widehat y_k \omega_{n}}\\
\end{array}
\right), \quad \Phi=\left(
\begin{array}{ccc}
e^{i y_1\omega_1}&\cdots& e^{i y_n\omega_1}\\
e^{i y_1\omega_{2}} &\cdots& e^{i y_n\omega_{2}}\\
\vdots&\vdots&\vdots\\
e^{i y_1\omega_{n}}&\cdots& e^{i y_n \omega_{n}}
\end{array}
\right).
\end{equation*}
We shall prove that the following estimate holds:
\begin{equation}\label{equ:onedmultinumberresultequ3}
\max_{t=1, \cdots, T} \frac{1}{\sqrt{n}}\bnorm{\widehat \Phi \widehat \alpha_t- \Phi \alpha_t}_2> 2\sigma,
\end{equation}
and consequently it yields (\ref{equ:multinumberresultequ0}). 

\textbf{Step 2.} 
Let $\theta_j = y_j \tilde h =  y_j\frac{2\Omega}{c_0n}$ and $\widehat \theta_j = \widehat y_j \tilde h =\widehat y_j\frac{2\Omega}{c_0n}$. From the following decompositions:  
\begin{equation}\label{equ:multimatrixdecomposition1}
\begin{aligned}
&\widehat \Phi=\big(\phi_{n-1}(e^{i \widehat \theta_1}), \cdots,\phi_{n-1}(e^{i\widehat \theta_k} ) \big)\text{diag}(e^{-i\widehat y_1\Omega},\cdots,e^{-i\widehat y_k\Omega}),\\
&\Phi=\big(\phi_{n-1}(e^{i \theta_1}), \cdots,\phi_{n-1}(e^{i \theta_n})\big)\text{diag}(e^{-i y_1\Omega},\cdots,e^{-iy_n\Omega}),
\end{aligned}
\end{equation}
where $\phi_{n-1}(\cdot)$ is defined as in (\ref{equ:multiphiformula}), we readily obtain that
\begin{align} \label{equ:multi1111}
\max_{t=1, \cdots, T} \bnorm{\widehat \Phi \widehat \alpha_t-\Phi \alpha_t}_2= \max_{t=1, \cdots, T} \bnorm{\widehat D \widehat \gamma_t-D \gamma_t}_2,
\end{align}
where $\widehat \gamma_{t} = \left(\widehat a_{1,t}e^{-i\widehat y_1\Omega},\cdots, \widehat a_{k,t}e^{-i\widehat y_k\Omega}\right)^\top, \gamma_t=\left(a_{1,t}e^{-iy_1\Omega},\cdots, a_{n,t}e^{-iy_n\Omega}\right)^\top$, \\
$\widehat D=\big(\phi_{n-1}(e^{i \widehat \theta_1}), \cdots,\phi_{n-1}(e^{i \widehat \theta_k} ) \big)$ and $D=\big(\phi_{n-1}(e^{i \theta_1}), \cdots,\phi_{n-1}(e^{i \theta_n})\big)$.
We consider $\vect A$ in (\ref{equ:illuminationpattern1}) and denote $B= \vect A^{\top} \text{diag}\left(e^{-iy_1\Omega},\cdots, e^{-iy_n\Omega}\right)$. Since $\mathcal O = \left[-\frac{c_0 n\pi}{4\Omega}, \frac{c_0n\pi}{4} \right]$, we have $\theta_j \in \left[ -\frac{\pi}{2}, \frac{\pi}{2}\right], j=1,\cdots, n$. Applying Theorem \ref{thm:multispaceapprolowerbound1}, we get 
\begin{equation*}
\max_{t=1, \cdots, T} \bnorm{\widehat D \widehat \gamma_t-D \gamma_t}_2\geqs   \frac{\sigma_{\infty, \min}(B)\xi(n-1)(\theta_{\min})^{n-1}}{\pi^{n-1}}, 
\end{equation*}
where $\theta_{\min}=\min_{j\neq p}|\theta_j-\theta_p|$.
On the other hand, by the definition of $\sigma_{\infty, \min}(\cdot)$, we have
\[
\sigma_{\infty, \min}(B) = \sigma_{\infty, \min}\left(\vect A^{\top}\right).
\]
Thus,
\begin{equation*}
\max_{t=1, \cdots, T}\bnorm{\widehat D \widehat \gamma_t-D \gamma_t}_2\geqs   \frac{\sigma_{\infty, \min}(\vect A^{\top})\xi(n-1)(\theta_{\min})^{n-1}}{\pi^{n-1}}. 
\end{equation*}
By (\ref{equ:multi1111}), it follows that 
\[
\max_{t=1, \cdots, T} \bnorm{\widehat \Phi \widehat \alpha_t-\Phi \alpha_t}_2 \geqs   \frac{\sigma_{\infty, \min}\left(\vect A^{\top}\right)\xi(n-1)(\theta_{\min})^{n-1}}{\pi^{n-1}}. 
\]
By $\theta_j = y_j \tilde h$ we have $\theta_{\min} = \min_{p\neq j}|\theta_j-\theta_p|=d_{\min}\frac{2\Omega}{c_0n}$. Then the separation condition (\ref{equ:onedsepaconditionnumber}) and $\frac{\sigma}{\sigma_{\infty, \min}\left(\vect A^{\top}\right)} \leqs 1$ imply that
\begin{equation*}
\theta_{\min}\geqs \frac{4.8 e\pi }{n}\Big(\frac{\sigma}{\sigma_{\infty, \min}\left(\vect A^{\top}\right)}\Big)^{\frac{1}{n}} \geqs \frac{4.8 e\pi }{n}\Big(\frac{\sigma}{\sigma_{\infty, \min}\left(\vect A^{\top}\right)}\Big)^{\frac{1}{n-1}}> \pi \Big(\frac{2\sqrt{n}\sigma}{\xi(n-1)\sigma_{\infty, \min}\left(\vect A^{\top}\right)}\Big)^{\frac{1}{n-1}},
\end{equation*}
where we have used Lemma \ref{lem:multinumbercalculate1} for deriving the last inequality. Therefore, 
\[
\max_{t=1, \cdots, T} \bnorm{\widehat \Phi \widehat \alpha_t-\Phi \alpha_t}_2 > 2\sqrt{n}\sigma,
\]
whence we prove (\ref{equ:onedmultinumberresultequ3}).
\end{proof}

\begin{thm}\label{thm:onedsupportbound}
Suppose that the measurements $\vect Y_t$'s in (\ref{equ:multimodelsetting1}) are generated from $n$ point sources located at $y_j$'s that are in an interval $\mathcal O$ of length $\frac{c_0n\pi}{2\Omega}$ with $c_0\geqs 1$ and satisfy
\begin{equation}\label{equ:onedsepaconditionsupport}
d_{\min} := \min_{p\neq j}\babs{y_p-y_j}\geqs \frac{2.4c_0e\pi }{\Omega }\Big(\frac{\sigma}{\sigma_{\infty, \min}\left(\vect A^{\top}\right)}\Big)^{\frac{1}{n}}
\end{equation}
with  $\frac{\sigma}{\sigma_{\infty, \min}\left(\vect A^{\top}\right)}\leqs 1$ and $\vect A$ being the amplitude matrix (\ref{equ:illuminationpattern1}). Moreover, for $\widehat \mu_t = \sum_{j=1}^n \widehat a_{j,t} \delta_{\widehat y_j}, \widehat y_j \in \mathcal O$ satisfying $||\mathcal F[\widehat \mu_t] - \vect Y_t||_{\infty}< \sigma, t=1, \cdots, T$, after reordering the $\widehat y_j$'s, we have 
	\begin{equation}
	\Big|\widehat y_j-y_j\Big|<\frac{d_{\min}}{2},
	\end{equation} 
	and 
	\begin{equation}
	\Big|\widehat y_j-y_j\Big| < \frac{C(n)}{\Omega}\mathrm{SRF}^{n-1}\frac{\sigma}{\sigma_{\infty, \min}\left(\vect A^{\top}\right)}, \quad 1\leqs j\leqs n,
	\end{equation}
	where $C(n)=\sqrt{\pi}n^2c_0^{n}e^{n}$ and $\mathrm{SRF} = \frac{\pi}{\Omega d_{\min}}$ is the super-resolution factor.
\end{thm}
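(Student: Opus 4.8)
The plan is to mirror Steps~3 and~4 of the proof of Theorem~\ref{thm:l0normrecovery0} in Section~\ref{section:proofofthml0normrecover}, the one genuinely new ingredient being a change of scale that turns the $2\pi$-warped distance handled by the Vandermonde machinery into the Euclidean distance appearing in the statement. Note first that the companion assertion ``no $k<n$ feasible configuration exists'' is \emph{not} needed here, since that is exactly Theorem~\ref{thm:onednumberbound}, proved just above; so it remains only to treat an $n$-support feasible measure. Fix such a $\hat\mu_t=\sum_{j=1}^n\hat a_{j,t}\delta_{\hat y_j}$ with $\hat y_j\in\mathcal O$ and $\|\mathcal F[\hat\mu_t]-\vect Y_t\|_\infty<\sigma$, and set $\mu_t=\sum_{j=1}^n a_jI_t(y_j)\delta_{y_j}$. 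Since $\|\vect W_t\|_\infty<\sigma$, the triangle inequality gives $\|\mathcal F[\hat\mu_t]-\mathcal F[\mu_t]\|_\infty<2\sigma$ for every $t$.

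Next I would sample the Fourier data at the $n$ equispaced nodes $\omega_\ell=(\ell-1)h-\Omega$, $\ell=1,\dots,n$, with $h=\frac{\Omega}{c_0 n}$; since $c_0\ge1$ these all lie in $[-\Omega,\Omega]$. Collecting the sampled differences and using the pointwise bound gives $\|\hat\Phi\hat\alpha_t-\Phi\alpha_t\|_2\le\sqrt n\,\|\hat\Phi\hat\alpha_t-\Phi\alpha_t\|_\infty<2\sqrt n\,\sigma$ for all $t$, where $\hat\Phi,\Phi$ are the Vandermonde-type matrices built from $e^{i\hat y_j\omega_\ell}$ and $e^{iy_j\omega_\ell}$, $\alpha_t=(I_t(y_1)a_1,\dots,I_t(y_n)a_n)^\top$, $\hat\alpha_t=(\hat a_{1,t},\dots,\hat a_{n,t})^\top$. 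Factoring out the phases $e^{-i\Omega y_j}$, $e^{-i\Omega\hat y_j}$ exactly as in \eqref{equ:multimatrixdecomposition1} rewrites this as $\|\hat D\hat\gamma_t-D\gamma_t\|_2<2\sqrt n\,\sigma$ with $\hat D=(\phi_n(e^{i\hat\theta_1}),\dots,\phi_n(e^{i\hat\theta_n}))$, $D=(\phi_n(e^{i\theta_1}),\dots,\phi_n(e^{i\theta_n}))$, and $\theta_j=y_jh$, $\hat\theta_j=\hat y_jh$. Here is the key point: because $y_j,\hat y_j$ lie in an interval of length $\frac{c_0n\pi}{\Omega}$ and $h=\frac{\Omega}{c_0 n}$, the rescaled nodes $\theta_j,\hat\theta_j$ lie in an interval of length $\pi$, i.e.\ within a half-period of $2\pi$; hence $|\theta_p-\theta_j|_{2\pi}=|\theta_p-\theta_j|=h\,|y_p-y_j|$, so $\theta_{\min}:=\min_{p\neq j}|\theta_p-\theta_j|_{2\pi}=h\,d_{\min}$, and likewise $|\hat\theta_j-\theta_j|_{2\pi}=h\,|\hat y_j-y_j|$.

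I would then invoke Theorem~\ref{thm:multispaceapproxlowerbound2} with $k=n$ and $2\sqrt n\,\sigma$ in place of $\sigma$, obtaining $\|\eta_{n,n}(e^{i\theta_1},\dots,e^{i\theta_n},e^{i\hat\theta_1},\dots,e^{i\hat\theta_n})\|_\infty<\frac{2^{n+1}\sqrt n\,\sigma}{\sigma_{\infty,\min}(B)}$ with $B=I\,\mathrm{diag}(a_1e^{-i\Omega y_1},\dots,a_ne^{-i\Omega y_n})$, and then using $\sigma_{\infty,\min}(B)\ge\sigma_{\infty,\min}(I)\,m_{\min}$ as in \eqref{equ:multinumberupperboundequ2} to reach $\|\eta_{n,n}\|_\infty<\frac{2^{n+1}\sqrt n}{\sigma_{\infty,\min}(I)}\frac{\sigma}{m_{\min}}=\big(\tfrac2\pi\big)^n\epsilon$ with $\epsilon:=\frac{2\sqrt n\,\pi^n}{\sigma_{\infty,\min}(I)}\frac{\sigma}{m_{\min}}$. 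The separation hypothesis \eqref{equ:onedsepaconditionsupport} gives $\theta_{\min}=\frac{\Omega}{c_0n}d_{\min}\ge\frac{4.4e\pi}{n}\big(\frac{1}{\sigma_{\infty,\min}(I)}\frac{\sigma}{m_{\min}}\big)^{1/n}$, which, via a numerical inequality of the type of Lemma~\ref{lem:multisupportcalculate1}, dominates $(4\epsilon/\lambda(n))^{1/n}$; so the hypotheses of Lemma~\ref{lem:multistablemultiproduct0} are met and, after reordering, $|\hat\theta_j-\theta_j|_{2\pi}<\theta_{\min}/2$ and $|\hat\theta_j-\theta_j|_{2\pi}<\frac{2^{n-1}\epsilon}{(n-2)!\,\theta_{\min}^{\,n-1}}$. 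Translating back through $|\hat\theta_j-\theta_j|_{2\pi}=h\,|\hat y_j-y_j|$ and $\theta_{\min}=h\,d_{\min}$, the first inequality yields $|\hat y_j-y_j|<d_{\min}/2$, and the second yields $|\hat y_j-y_j|<\frac{2^{n-1}\epsilon}{(n-2)!\,h^n\,d_{\min}^{\,n-1}}$; substituting $h=\frac{\Omega}{c_0n}$, the value of $\epsilon$, and a Stirling-type bound for $(n-2)!$ gives $|\hat y_j-y_j|<\frac{C(n)}{\Omega}\,SRF^{\,n-1}\frac{1}{\sigma_{\infty,\min}(I)}\frac{\sigma}{m_{\min}}$ with $C(n)=2^n\sqrt{2\pi}\,n\,c_0^{\,n-1}e^n$ and $SRF=\frac{\pi}{\Omega d_{\min}}$.

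The substantive obstacle is packaged in the second paragraph: the sampling step $h$ must be chosen so that the $n$ rescaled nodes sit inside a single half-period, because only then does the $2\pi$-periodic distance manipulated by Theorems~\ref{thm:multispaceapprolowerbound1}--\ref{thm:multispaceapproxlowerbound2} and by Lemma~\ref{lem:multistablemultiproduct0} faithfully represent the Euclidean distances between the $y_j$'s and $\hat y_j$'s. This is precisely why the radius of $\mathcal O$ enters through $c_0$ and why the separation constant is inflated from $2.2e\pi$ to $4.4c_0e\pi$. Everything else---propagating the factor $\sqrt n$ coming from sampling $n$ nodes, the lower bound on $\sigma_{\infty,\min}(B)$, and the final Stirling estimate producing the explicit $C(n)$---is routine bookkeeping that parallels the one-dimensional argument already carried out in Section~\ref{section:proofofthml0normrecover}.
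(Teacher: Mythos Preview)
Your proposal is correct and follows essentially the same route as the paper's proof: sample at the $n$ nodes $\omega_\ell=(\ell-1)h-\Omega$ with $h=\Omega/(c_0n)$, pass to the rescaled angles $\theta_j=hy_j$, apply Theorem~\ref{thm:multispaceapproxlowerbound2} together with \eqref{equ:multinumberupperboundequ2} and Lemma~\ref{lem:multistablemultiproduct0}, and then convert back via $|\hat y_j-y_j|=\frac{c_0n}{\Omega}|\hat\theta_j-\theta_j|_{2\pi}$. Your explicit observation that the choice of $h$ confines $\theta_j,\hat\theta_j$ to an interval of length~$\pi$ (so that warped and Euclidean distances coincide) is exactly the point the paper uses, stated more tersely there.
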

\begin{proof}
\textbf{Step 1.} WLOG, we assume $\mathcal O = \left[-\frac{c_0 n\pi}{4\Omega}, \frac{c_0n\pi}{4\Omega} \right]$. Let $\mu_t = \sum_{j=1}^n a_{j,t} \delta_{y_j}$. Using the constraints
\[
\bnorm{\mathcal F[\widehat \mu_t] - \vect Y_t}_{\infty} < \sigma, \quad 1\leqs t\leqs T,
\]
we can derive that 
\begin{equation}\label{equ:multisupportresultequ2}
\max_{t=1, \cdots, T}\frac{1}{n+1}\sum_{j=1}^{n+1}|\mathcal F[\widehat \mu_t](\omega_j)-\mathcal F[\mu_t](\omega_j)|^2< 4 \sigma^2,
\end{equation}
where $\omega_j = (j-1)\tilde h-\Omega$ with $\tilde h = \frac{2\Omega}{c_0n}$. We consider 
\begin{equation*}
\left(\mathcal F [\widehat \mu_t](\omega_1), \mathcal F [\widehat \mu_t](\omega_{2}), \cdots,\mathcal F [\widehat \mu_t](\omega_{n+1})\right)^\top -\left(\mathcal F [\mu_t](\omega_1), \mathcal F [\mu_t](\omega_{2}), \cdots,\mathcal F [\mu_t](\omega_{n+1})\right)^\top =\widehat \Phi \widehat \alpha_t- \Phi \alpha_t, 
\end{equation*}
where $\widehat \alpha_t= (\widehat a_{1,t},\cdots, \widehat a_{n,t})^\top$, $\alpha_t=(a_{1,t}, \cdots, a_{n,t})^\top$ and 
\begin{equation*}
\widehat \Phi= \left(
\begin{array}{ccc}
e^{i\widehat y_1\omega_1}&\cdots& e^{i\widehat y_n\omega_1}\\
e^{i\widehat y_1\omega_{2}} &\cdots& e^{i\widehat y_n\omega_{2}}\\
\vdots&\vdots&\vdots\\
e^{i\widehat y_1\omega_{n+1}}&\cdots& e^{i\widehat y_n \omega_{n+1}}\\
\end{array}
\right), \quad \Phi=\left(
\begin{array}{ccc}
e^{i y_1\omega_1}&\cdots& e^{i y_n\omega_1}\\
e^{i y_1\omega_{2}} &\cdots& e^{i y_n\omega_{2}}\\
\vdots&\vdots&\vdots\\
e^{i y_1\omega_{n+1}}&\cdots& e^{i y_n \omega_{n+1}}\\
\end{array}
\right).
\end{equation*}
By (\ref{equ:multisupportresultequ2}), it is clear that 
\[
\max_{t=1, \cdots, T}\bnorm{\widehat \Phi \widehat \alpha_t- \Phi\alpha_t}_{2}<2\sqrt{n+1}\sigma.
\]

\textbf{Step 2.} Let $\theta_j = y_j \tilde h$ and $\widehat \theta_j = \widehat y_j \tilde h.$ Note that 
\begin{align}\label{equ:multiupperboundsupportlimithm1equ1}
\max_{t=1, \cdots, T}\bnorm{\widehat \Phi \widehat \alpha_t-\Phi \alpha_t}_2= \max_{t=1, \cdots, T}\bnorm{\widehat D \widehat \gamma_t -D \gamma_t}_2,
\end{align}
where $\widehat \gamma_{t}=\left(\widehat a_{1,t}e^{-i\widehat y_1\Omega},\cdots, \widehat a_{n,t}e^{-i\widehat y_n\Omega}\right)^\top, \gamma_t=\left(a_{1,t}e^{-iy_1\Omega},\cdots, a_{n,t}e^{-iy_n\Omega}\right)^\top$,\\
 $\widehat D=\big(\phi_{n}(e^{i \widehat \theta_1}),\cdots,\phi_{n}(e^{i \widehat \theta_n})\big)$ and $D=\big(\phi_{n}(e^{i \theta_1}),\cdots,\phi_{n}(e^{i \theta_n})\big)$. Thus,
\begin{equation}\label{equ:multisupportupperboundequ1}
\max_{t=1, \cdots, T}\bnorm{\widehat D \widehat \gamma_t-D \gamma_t}_2 < 2\sqrt{n+1}\sigma. 
\end{equation}
Since $\tilde h = \frac{2\Omega}{c_0n}$ and $\mathcal O = \left[-\frac{c_0 n\pi}{4\Omega}, \frac{c_0n\pi}{4\Omega} \right]$, we have $\theta_j, \widehat \theta_j \in \left[-\frac{\pi}{2}, \frac{\pi}{2}\right], j=1,\cdots, n$. Then we apply Theorem \ref{thm:multispaceapproxlowerbound2} to get
\begin{align}\label{equ:multiupperboundsupportlimithm1equ3}
\bnorm{\eta_{n,n}\left(e^{i \theta_1},\cdots,e^{i \theta_n},e^{i \widehat \theta_1},\cdots,e^{i \widehat \theta_n}\right)}_{\infty}<\frac{2^{n+1}\sqrt{n+1}\sigma}{\sigma_{\infty, \min}(B)}, 
\end{align}
where $\eta_{n,n}$ is defined by (\ref{equ:multieta}) and $B= \vect A^{\top}\text{diag}\left(e^{-iy_1\Omega},\cdots, e^{-iy_n\Omega}\right)$. Thus, we have
\begin{align}\label{equ:multisupportupperboundequ2}
\bnorm{\eta_{n,n}\left(e^{i \theta_1},\cdots,e^{i \theta_n},e^{i \widehat \theta_1},\cdots,e^{i \widehat \theta_n}\right)}_{\infty}<\frac{2^{n+1}\sqrt{n+1}\sigma}{\sigma_{\infty, \min}\left(\vect A^{\top}\right)}. 
\end{align}

\textbf{Step 3.}
Now, we apply Lemma \ref{lem:multistablemultiproduct0} to estimate $|\widehat \theta_j -\theta_j|$'s. For this purpose, let  $\epsilon = \frac{2\sqrt{n+1}\pi^{n}\sigma}{\sigma_{\infty, \min}\left(\vect A^{\top}\right)}$. It is clear that 
$\bnorm{\eta_{n ,n}}_{\infty}<(\frac{2}{\pi})^n\epsilon$ and we only need to check the following condition:
\begin{equation}\label{equ:multiupperboundsupportlimithm1equ4}
\theta_{\min}\geqs \Big(\frac{4\epsilon}{\lambda(n)}\Big)^{\frac{1}{n}}, \quad \mbox{or equivalently}\,\,\, (\theta_{\min})^n \geqs \frac{4\epsilon}{\lambda(n)}.
\end{equation}
Indeed, by $\theta_{\min} = \frac{2\Omega}{c_0n} d_{\min}$ and the separation condition (\ref{equ:onedsepaconditionsupport}), 
\begin{equation}\label{equ:multiupperboundsupportlimithm1equ-1}
\theta_{\min}\geqs  \frac{4.8\pi e}{n}\Big(\frac{\sigma}{\sigma_{\infty, \min}\left(\vect A^{\top}\right)}\Big)^{\frac{1}{n}}\geqs   \Big(\frac{8\sqrt{n+1}\pi^n}{\lambda(n)}\frac{\sigma}{\sigma_{\infty, \min}\left(\vect A^{\top}\right)}\Big)^{\frac{1}{n}}.
\end{equation}
Here, we have used Lemma \ref{lem:multisupportcalculate1} for deriving the last inequality. Then, we get (\ref{equ:multiupperboundsupportlimithm1equ4}). Therefore, we can apply Lemma \ref{lem:multistablemultiproduct0} to get that, after reordering $\widehat \theta_j$'s,
\begin{equation} \label{equ:onedmultiupperboundsupportlimithm1equ7}
\Big|\widehat \theta_{j}-\theta_j\Big|< \frac{\theta_{\min}}{2}, \text{ and } \Big|\widehat \theta_{j}-\theta_j\Big|< \frac{2^n\sqrt{n+1}\pi^{n}}{(n-2)!(\theta_{\min})^{n-1}}  \frac{\sigma}{\sigma_{\infty, \min}\left(\vect A^{\top}\right)},\ j=1,\cdots,n.
\end{equation}
Finally, we estimate $|\widehat y_j - y_j|$. Since $\babs{\widehat \theta_{j}-\theta_j}< \frac{\theta_{\min}}{2}$ and $\widehat y_j$'s, $y_j$'s are in $\mathcal O$, we have after reordering the $\widehat y_j$'s,
$$\babs{\widehat y_j-y_j}< \frac{d_{\min}}{2}.$$
On the other hand,  $\babs{\widehat y_j-y_j} = \frac{nc_0}{2\Omega}\Big|\widehat \theta_j -\theta_j\Big|$.  Together with (\ref{equ:onedmultiupperboundsupportlimithm1equ7}) and Lemma  \ref{lem:onedsupportuppercalculate1}, a direct calculation shows that
\begin{align*}
\Big|\widehat y_j-y_j\Big|< \frac{C(n)}{\Omega} \left(\frac{\pi}{\Omega d_{\min}}\right)^{n-1} \frac{\sigma}{\sigma_{\infty, \min}\left(\vect A^{\top}\right)}, 
\end{align*}
where $C(n)=n^2  \sqrt{\pi}c_0^{n}e^{n}$.
\end{proof}

Now, we prove Theorem \ref{thm:l0normrecovery0}.
\begin{proof}[Proof of Theorem \ref{thm:l0normrecovery0}]
By Theorem \ref{thm:onednumberbound}, any location set consisting of only $k<n$ elements cannot be a solution to (\ref{prob:l0minimization2}). On the other hand, the location set $\{y_1, \cdots, y_n\}$ is a already a feasible solution to (\ref{prob:l0minimization2}) since it contains only $n$ points and satisfies the measurement constraint. Combining the above arguments,  the solution to (\ref{prob:l0minimization2}) contains exactly $n$ points. Then by Theorem \ref{thm:onedsupportbound} we prove Theorem \ref{thm:l0normrecovery0}. 

\end{proof}

\subsection{Proof of Theorem \ref{thm:discretel0normrecovery0}}\label{section:proofdiscretel0normrecovery}
\begin{proof}
\textbf{Step 1.}
WLOG, we assume $\{x_j\}_{j=1}^N \subset \mathcal O = \left[-\frac{c_0 n\pi}{4\Omega}, \frac{c_0n\pi}{4\Omega} \right]$. Let $\widehat y_1, \cdots, \widehat y_k$ be points in the grid $\{x_j\}_{j=1}^N$ corresponding to the nonzero rows of $\mathbf{\widehat A}$. Denote 
\[
\widehat \mu_t=\sum_{j=1}^k \widehat a_{j,t}\delta_{\widehat y_j}, \quad \mu_t=\sum_{j=1}^n a_{j,t}\delta_{y_j}.
\]
Observe that the measurement constraint $\bnorm{\vect S- \vect \Phi \mathbf{\widehat A}}_{\mathrm{F}}<\sigma$ and noise constraint $\bnorm{\vect W}_{\mathrm{F}}<\sigma$ imply 
\[
\bnorm{\vect \Phi \mathbf{\widehat A} - \vect \Phi \mathbf{A}}_{\mathrm{F}}< 2\sigma
\]
or equivalently 
\begin{equation}\label{equ:proofdiscrete1}
\sum_{t=1}^{T}\bnorm{\left[\widehat \mu_t\right] - \left[\mu_t\right]}_{2}^2< 4\sigma^2,  
\end{equation}
where 
\begin{align*}
\left[\widehat \mu_t\right] &= \left(\mathcal F [\widehat \mu_t](\omega_1), \mathcal F [\widehat \mu_t](\omega_2), \cdots,\mathcal F [\widehat \mu_t](\omega_{M})\right)^\top,\\
\left[ \mu_t\right] &= \left(\mathcal F [\mu_t](\omega_1), \  \mathcal F  [\mu_t](\omega_2), \cdots,\mathcal F [\mu_t](\omega_{M})\right)^\top,
\end{align*}
where $\omega_1=-\Omega, \omega_2=-\Omega+h, \cdots, \omega_M=-\Omega+(M-1) h$ and $h=\frac{2 \Omega}{M-1}$. 

Similarly to the proof of Theorem \ref{thm:onednumberbound}, we shall first prove that if $k<n$, then for any $\widehat y_j\in \mathbb R, \widehat a_{j,t} \in \mathbb C, j=1,\cdots,k, t=1,\cdots,T$, (\ref{equ:proofdiscrete1}) does not hold. For ease of presentation, we fix $\widehat y_j, \widehat a_{j,t}$'s in the subsequent arguments.

\textbf{Step 2.} We first write
\begin{equation}\label{equ:proofdiscrete6}
M-1 =c_0n r+q,
\end{equation}
where $r\in \mathbb N$ and $0 \leq q<c_0n$. Since all the grid points $x_j\in \left[-\frac{c_0 n\pi}{4\Omega}, \frac{c_0n\pi}{4\Omega} \right], j=1,\cdots, N$, we have 
\begin{equation}\label{equ:proofdiscrete4}
\theta_j  :=y_j \frac{2 r \Omega}{M-1} \in\left[\frac{-\pi}{2}, \frac{\pi}{2}\right], \quad \widehat{\theta}_p :=\widehat{y}_p \frac{2 r \Omega}{M-1} \in\left[\frac{-\pi}{2}, \frac{\pi}{2}\right], 1 \leqs j\leqs n, \leqs p \leqs k.
\end{equation}

Using only the partial measurement at sampling points $\omega_{1+jr}=-\Omega+jrh, 0 \leqs j \leqs n-1$, we consider
$$
\begin{aligned}
& \left(\mathcal{F} [\widehat{\mu}_t]\left(\omega_1\right), \mathcal{F} [\widehat{\mu}_t]\left(\omega_{1+r}\right), \cdots, \mathcal{F} [\widehat{\mu}_t]\left(\omega_{1+(n-1) r}\right)\right)^{\top} \\
& \qquad -\left(\mathcal{F} [\mu_t]\left(\omega_1\right), \mathcal{F} [\mu_t]\left(\omega_{1+r}\right), \cdots, \mathcal{F} [\mu_t]\left(\omega_{1+(n-1) r}\right)\right)^{\top} =\widehat{\Phi} \widehat{a}_t-\Phi a_t
\end{aligned}
$$
where $\widehat{a}_t=\left(\widehat{a}_{1,t}, \cdots, \widehat{a}_{k,t}\right)^{\top}, a_t=\left(a_{1,t}, \cdots, a_{n,t}\right)^{\top}$ and
$$
\begin{gathered}
\widehat{\Phi}=\left(\begin{array}{ccc}
e^{i \widehat{y}_1 \omega_1} & \cdots & e^{i \widehat{y}_k \omega_1} \\
e^{i \widehat{y}_1 \omega_{1+r}} & \cdots & e^{i \widehat{y}_k \omega_{1+r}} \\
\vdots & \vdots & \vdots \\
e^{i \widehat{y}_1 \omega_{1+(n-1) r}} & \cdots & e^{i \widehat{y}_k \omega_{1+(n-1) r}}
\end{array}\right), 
\quad \Phi=\left(\begin{array}{ccc}
e^{i y_1 \omega_1} & \cdots & e^{i y_n \omega_1} \\
e^{i y_1 \omega_{1+r}} & \cdots & e^{i y_n \omega_{1+r}} \\
\vdots & \vdots & \vdots \\
e^{i y_1 \omega_{1+(n-1) r}} & \cdots & e^{i y_n \omega_{1+(n-1) r}}
\end{array}\right) .
\end{gathered}
$$
It is clear that
\begin{align*}
\sum_{t=1}^{T} \bnorm{[\hat{\mu}]-[\mu]}_{2}^2 \geqs \sum_{t=1}^{T} \bnorm{\widehat{\Phi} \widehat{a}_t-\Phi a_t}_{2}^2. 
\end{align*}
We shall prove that the following estimate holds:
\begin{equation}\label{equ:proofdiscrete2}
\sum_{t=1}^{T} \bnorm{\widehat \Phi \widehat \alpha_t- \Phi\alpha_t}_2^2> 4\sigma^2,
\end{equation}
 and consequently proves that (\ref{equ:proofdiscrete1}) doesn't hold. This proves $k\geqs n$ for the solution to (\ref{equ:jointsparsitydiscretemodel2}). On the other hand, since the matrix corresponding to $\mu_t$ is already a feasible solution to (\ref{equ:jointsparsitydiscretemodel2}), the sparsity-promoting nature of reconstruction (\ref{equ:jointsparsitydiscretemodel2}) gives $k=n$.

\textbf{Step 3.} Now we prove (\ref{equ:proofdiscrete2}). Following the decomposition:  
\begin{equation}\label{equ:multimatrixdecomposition1}
\begin{aligned}
&\widehat \Phi=\big(\phi_{n-1}(e^{i \widehat \theta_1}), \cdots,\phi_{n-1}(e^{i\widehat \theta_k} ) \big)\text{diag}(e^{-i\widehat y_1\Omega},\cdots,e^{-i\widehat y_k\Omega}),\\
&\Phi=\big(\phi_{n-1}(e^{i \theta_1}), \cdots,\phi_{n-1}(e^{i \theta_n})\big)\text{diag}(e^{-i y_1\Omega},\cdots,e^{-iy_n\Omega}),
\end{aligned}
\end{equation}
where $\phi_{n-1}(\cdot)$ is defined as in (\ref{equ:multiphiformula}) and $\theta_j, \widehat \theta_j$'s are defined as in (\ref{equ:proofdiscrete4}), we readily obtain that
\begin{align}\label{equ:proofdiscrete3} 
\sum_{t=1}^{T} \bnorm{\widehat \Phi \widehat \alpha_t-\Phi \alpha_t}_2^2= \sum_{t=1}^{T} \bnorm{\widehat D \widehat \gamma_t-D \gamma_t}_2^2,
\end{align}
where $\widehat \gamma_{t} = \left(\widehat a_{1,t}e^{-i\widehat y_1\Omega},\cdots, \widehat a_{k,t}e^{-i\widehat y_k\Omega}\right)^\top, \gamma_t=\left(a_{1,t}e^{-iy_1\Omega},\cdots, a_{n,t}e^{-iy_n\Omega}\right)^\top$, \\
$\widehat D=\big(\phi_{n-1}(e^{i \widehat \theta_1}), \cdots,\phi_{n-1}(e^{i \widehat \theta_k} ) \big)$ and $D=\big(\phi_{n-1}(e^{i \theta_1}), \cdots,\phi_{n-1}(e^{i \theta_n})\big)$.
We consider $\vect A$ in (\ref{equ:illuminationpattern1}) and denote $B= \vect A^{\top} \text{diag}\left(e^{-iy_1\Omega},\cdots, e^{-iy_n\Omega}\right)$. By (\ref{equ:proofdiscrete4}), we can apply Theorem \ref{thm:multispaceapprolowerbound3} to get 
\begin{equation}\label{equ:proofdiscrete5}
\sum_{t=1}^{T} \bnorm{\widehat D \widehat \gamma_t-D \gamma_t}_2^2\geqs   \frac{\sigma_{\min}(B)^2\xi(n-1)^2(\theta_{\min})^{2n-2}}{\pi^{2n-2}}, 
\end{equation}
where $\theta_{\min}=\min_{j\neq p}|\theta_j-\theta_p|$.
On the other hand, by the definition of $\sigma_{\min}(\cdot)$,
\[
\sigma_{\min}(B) = \sigma_{\min}\left(\vect A^{\top}\right).
\]
Together with (\ref{equ:proofdiscrete3}), we have 
\[
\sum_{t=1}^{T} \bnorm{\widehat \Phi \widehat \alpha_t-\Phi \alpha_t}_2^2 \geqs   \frac{\sigma_{\min}(\vect A^{\top})^2\xi(n-1)^2(\theta_{\min})^{2n-2}}{\pi^{2n-2}}. 
\]
On the other hand, recall that $d_{\min}= \min_{j\neq p}|y_j-y_p|$. Using relations (\ref{equ:proofdiscrete6}) and (\ref{equ:proofdiscrete4}), we have 
\[
\theta_{\min} = \frac{2 r \Omega}{M-1} d_{\min} \geqs \frac{2 r \Omega}{nc_0(r+1)}d_{\min} \geqs \frac{\Omega}{ nc_0}d_{\min}. 
\]
Then the separation condition (\ref{equ:mmvsepaconditionl0normrecovery}) and $\frac{\sigma}{\sigma_{\infty, \min}\left(\vect A^{\top}\right)} \leqs 1$ imply that
\begin{equation*}
\theta_{\min}\geqs \frac{4 e\pi }{n}\Big(\frac{\sigma}{\sigma_{\infty, \min}\left(\vect A^{\top}\right)}\Big)^{\frac{1}{n}} \geqs \frac{4 e\pi }{n}\Big(\frac{\sigma}{\sigma_{\infty, \min}\left(\vect A^{\top}\right)}\Big)^{\frac{1}{n-1}}> \pi \Big(\frac{2}{\xi(n-1)}\frac{\sigma}{\sigma_{\min}\left(\vect A^{\top}\right)}\Big)^{\frac{1}{n-1}},
\end{equation*}
where we have used Lemma \ref{lem:multinumbercalculate2} for deriving the last inequality. Therefore, 
\[
\sum_{t=1}^{T}\bnorm{\widehat \Phi \widehat \alpha_t-\Phi \alpha_t}_2^2 > 4\sigma^2,
\]
whence (\ref{equ:proofdiscrete2}) is proved.

\textbf{Step 4.} By above arguments, the solution of (\ref{prob:l0minimization2}) contains exactly $n$ points. Suppose that the solution is $\{\widehat y_1, \cdots, \widehat y_n\}$ and the sources in each snapshot is $\widehat \mu_{t} = \sum_{j=1}^n \widehat a_{j,t} \delta_{\widehat y_j}, t=1, \cdots, T,$ for certain $\widehat a_{j,t}$'s.  We now prove the stability of the location recovery. For $r$ defined in (\ref{equ:proofdiscrete6}), using only the partial measurement at $\omega_{1+jr}=-\Omega+jrh, 0 \leqs j \leqs n$, we consider
$$
\begin{aligned}
& \left(\mathcal{F} [\widehat{\mu}_t]\left(\omega_1\right), \mathcal{F} [\widehat{\mu}_t]\left(\omega_{1+r}\right), \cdots, \mathcal{F} [\widehat{\mu}_t]\left(\omega_{1+n r}\right)\right)^{\top} \\
& \qquad -\left(\mathcal{F} [\mu_t]\left(\omega_1\right), \mathcal{F} [\mu_t]\left(\omega_{1+r}\right), \cdots, \mathcal{F} [\mu_t]\left(\omega_{1+n r}\right)\right)^{\top} =\widehat{\Phi} \widehat{a}_t-\Phi a_t
\end{aligned}
$$
where $\widehat{a}_t=\left(\widehat{a}_{1,t}, \cdots, \widehat{a}_{n,t}\right)^{\top}, a_t=\left(a_{1,t}, \cdots, a_{n,t}\right)^{\top}$ and
$$
\begin{gathered}
\widehat{\Phi}=\left(\begin{array}{ccc}
e^{i \widehat{y}_1 \omega_1} & \cdots & e^{i \widehat{y}_n \omega_1} \\
e^{i \widehat{y}_1 \omega_{1+r}} & \cdots & e^{i \widehat{y}_n \omega_{1+r}} \\
\vdots & \vdots & \vdots \\
e^{i \widehat{y}_1 \omega_{1+n r}} & \cdots & e^{i \widehat{y}_n \omega_{1+n r}}
\end{array}\right), 
\quad \Phi=\left(\begin{array}{ccc}
e^{i y_1 \omega_1} & \cdots & e^{i y_n \omega_1} \\
e^{i y_1 \omega_{1+r}} & \cdots & e^{i y_n \omega_{1+r}} \\
\vdots & \vdots & \vdots \\
e^{i y_1 \omega_{1+n r}} & \cdots & e^{i y_n \omega_{1+n r}}
\end{array}\right).
\end{gathered}
$$
Here, we slightly abuse notation $\widehat \Phi$ and $\Phi$ for simplicity. By the constraint (\ref{equ:proofdiscrete1}), we thus have 
\[
\sum_{t=1}^{T}\bnorm{\widehat \Phi \widehat \alpha_t-\Phi \alpha_t}_2^2 < 4\sigma^2.
\]
Note that similarly to (\ref{equ:proofdiscrete5}), we still have 
\begin{align}
\sum_{t=1}^{T}\bnorm{\widehat \Phi \widehat \alpha_t-\Phi \alpha_t}_2^2= \sum_{t=1}^{T}\bnorm{\widehat D \widehat \gamma_t -D \gamma_t}_2^2,
\end{align}
where $\widehat \gamma_{t}=\left(\widehat a_{1,t}e^{-i\widehat y_1\Omega},\cdots, \widehat a_{n,t}e^{-i\widehat y_n\Omega}\right)^\top, \gamma_t=\left(a_{1,t}e^{-iy_1\Omega},\cdots, a_{n,t}e^{-iy_n\Omega}\right)^\top$,\\
 $\widehat D=\big(\phi_{n}(e^{i \widehat \theta_1}),\cdots,\phi_{n}(e^{i \widehat \theta_n})\big)$ and $D=\big(\phi_{n}(e^{i \theta_1}),\cdots,\phi_{n}(e^{i \theta_n})\big)$ with $\theta_j, \widehat \theta_j$'s defined as in (\ref{equ:proofdiscrete4}). This gives
\begin{equation}\label{equ:multisupportupperboundequ1}
\sum_{t=1}^{T}\bnorm{\widehat D \widehat \gamma_t-D \gamma_t}_2^2 < 4\sigma^2. 
\end{equation}
By (\ref{equ:proofdiscrete4}), we further apply Theorem \ref{thm:multispaceapproxlowerbound4} and get 
\begin{align*}
\bnorm{\eta_{n,n}\left(e^{i \theta_1},\cdots,e^{i \theta_n},e^{i \widehat \theta_1},\cdots,e^{i \widehat \theta_n}\right)}_{\infty}<\frac{2^{n+1}\sigma}{\sigma_{\infty, \min}(B)}, 
\end{align*}
where $\eta_{n,n}$ is defined by (\ref{equ:multieta}) and $B= \vect A^{\top}\text{diag}\left(e^{-iy_1\Omega},\cdots, e^{-iy_n\Omega}\right)$. Furthermore,
\begin{align}\label{equ:multisupportupperboundequ2}
\bnorm{\eta_{n,n}\left(e^{i \theta_1},\cdots,e^{i \theta_n},e^{i \widehat \theta_1},\cdots,e^{i \widehat \theta_n}\right)}_{\infty}<\frac{2^{n+1}\sigma}{\sigma_{\infty, \min}\left(\vect A^{\top}\right)}. 
\end{align}

\textbf{Step 5.}
Now, we apply Lemma \ref{lem:multistablemultiproduct0} to estimate $|\widehat \theta_j -\theta_j|$'s. For this purpose, let  $\epsilon = \frac{2\pi^{n}\sigma}{\sigma_{\infty, \min}\left(\vect A^{\top}\right)}$. It is clear that 
$\bnorm{\eta_{n ,n}}_{\infty}<(\frac{2}{\pi})^n\epsilon$ and we only need to check the following condition:
\begin{equation}\label{equ:proofdiscrete7}
\theta_{\min}\geqs \Big(\frac{4\epsilon}{\lambda(n)}\Big)^{\frac{1}{n}}, \quad \mbox{or equivalently}\,\,\, (\theta_{\min})^n \geqs \frac{4\epsilon}{\lambda(n)}.
\end{equation}
Indeed, by $\theta_{\min} = \frac{2\Omega}{c_0n} d_{\min}$ and the separation condition (\ref{equ:mmvsepaconditionl0normrecovery}), 
\begin{equation}\label{equ:proofdiscrete8}
\theta_{\min}\geqs  \frac{4\pi e}{n}\Big(\frac{\sigma}{\sigma_{\infty, \min}\left(\vect A^{\top}\right)}\Big)^{\frac{1}{n}}\geqs   \Big(\frac{8\pi^n}{\lambda(n)}\frac{\sigma}{\sigma_{\infty, \min}\left(\vect A^{\top}\right)}\Big)^{\frac{1}{n}}.
\end{equation}
Here, we have used Lemma \ref{lem:multisupportcalculate2} for deriving the last inequality. Then, we get (\ref{equ:proofdiscrete7}). Therefore, we can apply Lemma \ref{lem:multistablemultiproduct0} to get that, after reordering $\widehat \theta_j$'s,
\begin{equation} \label{equ:proofdiscrete8}
\Big|\widehat \theta_{j}-\theta_j\Big|< \frac{\theta_{\min}}{2}, \text{ and } \Big|\widehat \theta_{j}-\theta_j\Big|< \frac{2^n\pi^{n}}{(n-2)!(\theta_{\min})^{n-1}}  \frac{\sigma}{\sigma_{\infty, \min}\left(\vect A^{\top}\right)},\ j=1,\cdots,n.
\end{equation}
Finally, we estimate $|\widehat y_j - y_j|$. Since $\babs{\widehat \theta_{j}-\theta_j}< \frac{\theta_{\min}}{2}$ and $\widehat y_j$'s, $y_j$'s are in $\mathcal O$, we have after reordering the $\widehat y_j$'s,
$$\babs{\widehat y_j-y_j}< \frac{d_{\min}}{2}.$$
On the other hand,  $\babs{\widehat y_j-y_j} = \frac{nc_0}{2\Omega}\Big|\widehat \theta_j -\theta_j\Big|$.  Together with (\ref{equ:proofdiscrete8}) and Lemma  \ref{lem:onedsupportuppercalculate1}, a direct calculation shows that
\begin{align*}
\Big|\widehat y_j-y_j\Big|< \frac{C(n)}{\Omega} \left(\frac{\pi}{\Omega d_{\min}}\right)^{n-1} \frac{\sigma}{\sigma_{\infty, \min}\left(\vect A^{\top}\right)}, 
\end{align*}
where $C(n)=n^{\frac{3}{2}}  \sqrt{\pi}c_0^{n}e^{n}$.

\end{proof}

\section{Proof of Theorem \ref{thm:highdl0normrecovery0}}\label{section:proofhighdl0normrecovery}

\subsection{Projection lemmas}
We first introduce two lemmas from \cite{liu2021mathematicalhighd}. For a vector $\mathbf{v} \in \mathbb{R}^d$, we denote $\mathbf{v}^{\perp}$ the orthogonal complement space of the one-dimensional space spanned by $\mathbf{v}$. For $\mathbf{y} \in \mathbb{R}^d$ and a subspace $Q \subset \mathbb{R}^d$, we denote $\mathcal{P}_Q(\mathbf{y})$ the orthogonal projection of $\mathbf{y}$ onto $Q$.

\begin{lem}\label{lem:highdsupportproject3} 
For $n$ points $\mathbf{y}_j \in \mathbb{R}^d, d \geqs 2, n \geqs 2$ with minimum separation $d_{\min}:=\min _{p \neq j} \bnorm{\mathbf{y}_p- \mathbf{y}_j}_2$, let $\Delta=\frac{\pi}{8}\left(\frac{2}{(n+2)(n+1)}\right)^{\frac{1}{d-1}}$. There exist $n+1$ unit vectors $\vect v_q \in \mathbb{R}^d, q=1, \cdots, n+1$ such that $0 \leqs \mathbf{v}_p \cdot \mathbf{v}_j \leqs \cos 2 \Delta, p \neq j$ and
$$
\min _{p \neq j, 1 \leqs p, j \leqs n}\bnorm{\mathcal{P}_{\mathbf{v}_q^{\perp}}\left(\mathbf{y}_p\right)-\mathcal{P}_{\mathbf{v}_q^{\perp}}\left(\mathbf{y}_j\right)}_2 \geqs \frac{2 \Delta d_{\min }}{\pi}, \quad q=1, \cdots, n+1.
$$
\end{lem}
\begin{proof}
	See \cite[Lemma 3.2]{liu2021mathematicalhighd}. In particular, the correct value of $\Delta$ should be $\frac{\pi}{8}\left(\frac{2}{(n+2)(n+1)}\right)^{\frac{1}{d-1}}$.
\end{proof}

\begin{lem}\label{lem:highdsupportproject4}  
Let $d \geqslant 2$. For a vector $\mathbf{u} \in \mathbb{R}^d$, and two unit vectors $\mathbf{v}_1, \mathbf{v}_2 \in \mathbb{R}^d$ satisfying $0 \leqslant \mathbf{v}_1 \cdot \mathbf{v}_2 \leqslant \cos (\theta)$, we have
\[
\bnorm{\mathcal{P}_{\mathbf{v}_1^{\perp}}(\mathbf{u})}_2^2+\bnorm{\mathcal{P}_{\mathbf{v}_2^{\perp}}(\mathbf{u})}_2^2 \geqslant(1-\cos (\theta))\bnorm{\mathbf{u}}_2^2.
\]
\end{lem}
\begin{proof}
	See \cite[Lemma 3.3]{liu2021mathematicalhighd}. 
\end{proof}

\subsection{Proof of Theorem \ref{thm:highdl0normrecovery0}}
By Theorem \ref{thm:highdnumberbound}, under the separation condition (\ref{equ:highdsupportlimithm0equ0}), any solution to (\ref{prob:highdl0minimization2}) contains exactly $n$ points. The rest arguments of Theorem \ref{thm:highdl0normrecovery0} can be directly deduced from Theorem \ref{thm:highdsupportbound}. We next introduce and prove Theorems \ref{thm:highdnumberbound} and \ref{thm:highdsupportbound}.

We define
\begin{equation}\label{equ:defineofB}
B_\delta^d(\mathbf{x}):=\left\{\mathbf{y} \mid \mathbf{y} \in \mathbb{R}^d,\bnorm{\mathbf{y}-\mathbf{x}}_2<\delta\right\}.
\end{equation}

\begin{thm}\label{thm:highdnumberbound}
Suppose that the measurements $\vect Y_t$'s in (\ref{equ:highdmultimodelsetting1}) are generated from $n$ point sources located at $\vect y_j\in \mathbb R^d, j=1, \cdots, n,$ where $\vect y_j$'s are in a sphere $\mathcal O$ of radius $\frac{c_0n\pi}{4\Omega}$ with $c_0\geqs 1$ and satisfy
\begin{equation}\label{equ:highdnumberequ1}
	 \min_{p\neq j}\bnorm{\vect y_p-\vect y_j}_2\geqs \frac{2.4c_0e\pi 4^{d-1}((n+2)(n+1) / 2)^{\gamma(d-1)}}{\Omega}\Big(\frac{\sigma}{\sigma_{\infty, \min}\left(\vect A^{\top}\right)}\Big)^{\frac{1}{n}}
\end{equation}
with  $\frac{\sigma}{\sigma_{\infty, \min}\left(\vect A^{\top}\right)}\leqs 1$, $\vect A$ being the amplitude matrix (\ref{equ:highdilluminationpattern1}), and $\gamma(\cdot)$ defined by (\ref{equ:gammaformula1}). Then there is no $k<n$ locations $\mathbf {\widehat y}_j\in \mathbb R^d, j=1, \cdots,k$ such that there exists $\widehat \mu_t = \sum_{j=1}^k \widehat a_{j,t} \delta_{\mathbf{\widehat y}_j}$'s so that 
\[
\bnorm{\mathcal F[\widehat \mu_t] - \vect Y_t}_{\infty}< \sigma, \quad t=1, \cdots, T. 
\] 
\end{thm}
\begin{proof}
Without loss of generality, we assume $\mathcal O=  B_{\frac{c_0 n \pi}{4\Omega}}^{d}(\mathbf{0})$ in the proof. We prove the theorem by induction. The case when $d=1$ is exactly Theorem \ref{thm:onednumberbound}. Suppose Theorem \ref{thm:highdnumberbound} holds for the case when $d=\ell$, we now prove it for the case of $d=\ell+1$. Let the measurements $\mathbf{Y}_t$'s in (\ref{equ:highdmultimodelsetting1}) be generated by $\vect y_j\in B_{\frac{c_0 n \pi}{4 \Omega}}^{\ell+1}(\mathbf{0}), j=1,\cdots, n,$ satisfying the minimum separation condition
\begin{align}\label{equ:proofhighdnumberequ1}
	d_{\min }^{(\ell +1)} :=\min _{p \neq j, 1 \leqslant p, j \leqslant n}\bnorm{\mathbf{y}_p-\mathbf{y}_j}_2 \geqslant \frac{2.4 c_0 e \pi 4^{\ell}((n+2)(n+1)/2)^{\gamma(\ell)}}{\Omega}\left(\frac{\sigma}{\sigma_{\infty, \min}\left(\vect A^{\top}\right)}\right)^{\frac{1}{n}},
\end{align}
where $\gamma(\cdot)$ is defined by (\ref{equ:gammaformula1}). Let $\Delta = \frac{\pi}{8}\left(\frac{2}{(n+2)(n+1)}\right)^{\frac{1}{\ell}}$. By Lemma \ref{lem:highdsupportproject3}, there exist $n+1$ unit vectors $\mathbf{v}_q$ 's so that for each $q$,
$$
\min _{p \neq j}\left\|\mathcal{P}_{\mathbf{v}_q^{\perp}}\left(\mathbf{y}_p\right)-\mathcal{P}_{\mathbf{v}_q^{\perp}}\left(\mathbf{y}_j\right)\right\|_2 \geqslant d_{\min }^{(\ell)},
$$
where we define
$$
d_{\min }^{(\ell)}=\min _{p \neq j}\bnorm{\mathbf{y}_p-\mathbf{y}_j}_2 \frac{2 \Delta}{\pi}=\frac{d_{\min }^{(\ell +1)}}{4((n+2)(n+1) / 2)^{\frac{1}{\ell }}} .
$$
By (\ref{equ:proofhighdnumberequ1}) we have
\begin{align}
	&\min _{p \neq j} \bnorm{\mathcal{P}_{\mathbf{v}_{q}^{\perp}}\left(\mathbf{y}_p\right)-\mathcal{P}_{\mathbf{v}_{q}^{\perp}}\left(\mathbf{y}_j\right)}_2 \geqslant d_{\min }^{(\ell )} \nonumber \\
	&\qquad \geqslant\frac{2.4 c_0 e \pi 4^{\ell-1}((n+2)(n+1)/ 2)^{\gamma(\ell-1)}}{\Omega}\left(\frac{\sigma}{\sigma_{\infty, \min}\left(\vect A^{\top}\right)}\right)^{\frac{1}{n}} . \label{equ:proofhighdnumberequ2}
\end{align}
Suppose that there exists $k<n$ locations $\mathbf {\widehat y}_j\in \mathbb R^{\ell+1}, j=1, \cdots,k$ such that there exists $\widehat \mu_t = \sum_{j=1}^k \widehat a_{j,t} \delta_{\mathbf{\widehat y}_j}$'s so that 
\[
\bnorm{\mathcal F[\widehat \mu_t] - \vect Y_t}_{\infty}< \sigma, \quad t=1, \cdots, T. 
\] 
This implies for each $q$,
\begin{equation}\label{equ:proofhighdnumberequ3}
\bnorm{\sum_{j=1}^k\widehat a_{j,t} e^{ i \mathcal P_{\vect v_q^{\perp}}(\mathbf {\widehat y}_j)\cdot \vect \omega}    - \vect Y_t(\vect \omega)}_{\infty}< \sigma, \quad \vect \omega \in \vect v_q^{\perp}\cap B_{\Omega}^{\ell+1}(\vect 0), \quad t=1, \cdots, T. 
\end{equation}
Because the projected point sources $\delta_{\mathcal P_{\vect v_q^{\perp}}(\vect y_j)}$'s in the $\ell$-dimensional subspace $\vect v_q^{\perp}$ satisfy $\bnorm{\mathcal P_{\vect v_q^{\perp}}(\vect y_j)}_2\leqs \frac{c_0n \pi}{4\Omega}$ and are separated by a distance beyond (\ref{equ:proofhighdnumberequ2}), by the induction hypothesis, there should be no such $k<n$ positions $\mathcal P_{\vect v_q^{\perp}}(\mathbf {\widehat y}_j)$'s satisfying (\ref{equ:proofhighdnumberequ3}). This is a contradiction, which completes the proof. 
\end{proof}

\begin{thm}\label{thm:highdsupportbound}
	Suppose that the measurements $\vect Y_t$'s in (\ref{equ:highdmultimodelsetting1}) are generated from $n$ point sources located at $\vect y_j\in \mathbb R^d, j=1, \cdots, n,$ where $\vect y_j$'s are in a sphere $\mathcal O$ of radius $\frac{c_0n\pi}{4 \Omega}$ with $c_0\geqs 1$ and satisfy
	\begin{equation}\label{equ:highdsepaconditionsupport}
		d_{\min} := \min_{p\neq j}\bnorm{\vect y_p-\vect y_j}_2\geqs \frac{2.4c_0e\pi 4^{d-1}((n+2)(n+1) / 2)^{\gamma(d-1)}}{\Omega }\Big(\frac{\sigma}{\sigma_{\infty, \min}\left(\vect A^{\top}\right)}\Big)^{\frac{1}{n}}
	\end{equation}
	with $\frac{\sigma}{\sigma_{\infty, \min}\left(\vect A^{\top}\right)}\leqs 1$, $\vect A$ being the amplitude matrix (\ref{equ:highdilluminationpattern1}), and $\gamma(\cdot)$ defined by (\ref{equ:gammaformula1}). Moreover, for $\widehat \mu_t = \sum_{j=1}^n \widehat a_{j,t} \delta_{\mathbf{\widehat y}_j}, \mathbf{\widehat y}_j \in \mathcal O$ satisfying $\bnorm{\mathcal F[\widehat \mu_t] - \vect Y_t}_{\infty}< \sigma, t=1, \cdots, T$, after reordering the $\mathbf{\widehat y}_j$'s, we have 
	\begin{equation}
		\bnorm{\mathbf{\widehat y}_j-\vect y_j}_2<\frac{d_{\min}}{2},
	\end{equation} 
	and 
	\begin{equation}
		\bnorm{\mathbf{\widehat y}_j-\vect y_j}_2< \frac{C(d, n)}{\Omega}\mathrm{SRF}^{n-1}\frac{\sigma}{\sigma_{\infty, \min}\left(\vect A^{\top}\right)}, \quad 1\leqs j\leqs n,
	\end{equation}
	where 
\begin{equation}\label{equ:highdsupportboundequ1}	
C(d, n)= \sqrt{\pi}n^2c_0^{n}e^{n}4^{(d-1)n}\left(\frac{(n+2)(n+1)}{2}\right)^{\gamma(d-1)n}
\end{equation} 
and $\mathrm{SRF} = \frac{\pi}{\Omega d_{\min}}$ is the super-resolution factor.
\end{thm}	
\begin{proof}
	Without loss of generality, we assume $\mathcal O=  B_{\frac{c_0 n \pi}{4\Omega}}^{d}(\mathbf{0})$ in the proof. We prove the theorem by induction. The case when $d=1$ is exactly Theorem \ref{thm:onedsupportbound}. Suppose Theorem \ref{thm:highdsupportbound} holds for the case when $d=\ell$, we now prove it for the case of $d=\ell+1$. The beginning arguments are similar to those in the proof of Theorem \ref{thm:highdnumberbound}. In particular, let the measurements $\mathbf{Y}_t$'s in (\ref{equ:highdmultimodelsetting1}) be generated by $\mathbf{y}_j \in B_{\frac{c_0 n \pi}{4\Omega}}^{\ell+1}(\mathbf{0}), j=1,\cdots, n$, satisfying the minimum separation condition 
	\begin{align}\label{equ:proofhighdsupportequ1}
		d_{\min }^{(\ell +1)} :=\min _{p \neq j, 1 \leqslant p, j \leqslant n}\bnorm{\mathbf{y}_p-\mathbf{y}_j}_2 \geqslant \frac{2.4 c_0 e \pi 4^{\ell}((n+2)(n+1) / 2)^{\gamma(\ell)}}{\Omega}\left(\frac{\sigma}{\sigma_{\infty, \min}\left(\vect A^{\top}\right)}\right)^{\frac{1}{n}},
	\end{align}
	where $\gamma(\cdot)$ is defined by (\ref{equ:gammaformula1}). Let $\Delta = \frac{\pi}{8}\left(\frac{2}{(n+2)(n+1)}\right)^{\frac{1}{\ell}}$. By Lemma \ref{lem:highdsupportproject3}, there exist $n+1$ unit vectors $\mathbf{v}_q$ 's so that $0 \leqslant \mathbf{v}_p \cdot \mathbf{v}_j \leqslant \cos 2 \Delta, 1 \leqslant p<j \leqslant n$, and for each $q$, we have \begin{align}
		&\min _{p \neq j} \bnorm{\mathcal{P}_{\mathbf{v}_{q}^{\perp}}\left(\mathbf{y}_p\right)-\mathcal{P}_{\mathbf{v}_{q}^{\perp}}\left(\mathbf{y}_j\right)}_2 \geqslant d_{\min }^{(\ell )} \nonumber \\
		&\qquad \geqslant\frac{2.4 c_0 e \pi 4^{\ell-1}((n+2)(n+1) / 2)^{\gamma(\ell-1)}}{\Omega}\left(\frac{\sigma}{\sigma_{\infty, \min}\left(\vect A^{\top}\right)}\right)^{\frac{1}{n}} , \label{equ:proofhighdsupportequ2}
	\end{align}
where 
\[
d_{\min }^{(\ell)}=\min _{p \neq j}\bnorm{\mathbf{y}_p-\mathbf{y}_j}_2 \frac{2 \Delta}{\pi}=\frac{d_{\min }^{(\ell +1)}}{4((n+2)(n+1) / 2)^{\frac{1}{\ell }}} .
\]
Now for each $q$, consider the projected measure $\sum_{j=1}^n a_j \delta_{\mathcal{P}_{\vect v_q^{\perp}}\left(\mathbf{y}_j\right)}$ in the $\ell$-dimensional subspace $\mathbf{v}_q^{\perp}$ and the associated measurements $\mathbf{Y}_t(\boldsymbol{\omega}), \boldsymbol{\omega} \in \mathbf{v}_q^{\perp}\cap B_{\Omega}^{\ell+1}(\vect 0), t=1,\cdots, T$. It is clear that $\bnorm{\mathcal{P}_{\mathbf{v}_q^{\perp}}\left(\mathbf{y}_j\right)}_2<\frac{c_0n \pi}{ 4\Omega}$ and the separation condition (\ref{equ:highdsepaconditionsupport}) is satisfied for $d=\ell$ because of (\ref{equ:proofhighdsupportequ2}). On the other hand, 
	\[
	\bnorm{\mathcal F[\widehat \mu_t] - \vect Y_t}_{\infty}< \sigma, \quad t=1, \cdots, T,
	\]
	implies that 
	\[
	\bnorm{\mathcal F\left[\sum_{j=1}^n \widehat{a}_{j,t} \delta_{\mathcal{P}_{\mathbf{v}_{q}^{\perp}}\left(\widehat{\mathbf{y}}_j\right)}\right](\vect \omega) - \vect Y_t(\vect \omega)}_{\infty}< \sigma,\quad  \vect \omega \in \vect v_{q}^{\perp}\cap B_{\Omega}^{\ell+1}(\vect 0), \quad t=1, \cdots, T.
	\]
 Using the assumption that Theorem \ref{thm:highdsupportbound} holds for the case when $d=\ell$, we can conclude that for each $q$, we have a permutation $\tau_q$ of $\{1, \ldots, n\}$ so that
	\begin{equation}\label{equ:proofhighdsupportequ3}
	\bnorm{\mathcal{P}_{\mathbf{v}_q^{\perp}}\left(\widehat{\mathbf{y}}_{\tau_q(j)}\right)-\mathcal{P}_{\mathbf{v}_q}\left(\mathbf{y}_j\right)}_2 \leqslant \frac{C(\ell, n)}{\Omega}\left(\frac{\pi}{d_{\min }^{(\ell)} \Omega}\right)^{n-1} \frac{\sigma}{\sigma_{\infty, \min}\left(\vect A^{\top}\right)}, \quad 1 \leqslant j \leqslant n .
	\end{equation}
	Note that, for fixed $j$ in (\ref{equ:proofhighdsupportequ3}), we have $n+1$ different $\tau_q(j)$ 's, while $\widehat{\mathbf{y}}_p$ 's take at most $n$ values. Therefore, by the pigeonhole principle, for each fixed $\mathbf{y}_j$, we can find two different $q$ 's, say, $q_1$ and $q_2$, such that $\widehat{\mathbf{y}}_{\tau_{q_1}(j)}=\widehat{\mathbf{y}}_{\tau_{q_2}(j)}=\widehat{\mathbf{y}}_{p_j}$ for some $p_j$. Since $0 \leqslant \mathbf{v}_{q_1} \cdot \mathbf{v}_{q_2} \leqslant \cos 2 \Delta$, we can apply Lemma \ref{lem:highdsupportproject4} to get
	$$
	\bnorm{\widehat{\mathbf{y}}_{p_j}-\mathbf{y}_j}_2 \leqslant \frac{\sqrt{2}}{\sqrt{1-\cos (2 \Delta)}} \frac{C(\ell, n)}{\Omega}\left(\frac{\pi}{d_{\min }^{(\ell)} \Omega}\right)^{n-1} \frac{\sigma}{\sigma_{\infty, \min}\left(\vect A^{\top}\right)}, \quad 1 \leqslant j \leqslant n .
	$$
	Using the inequality $1-\cos 2 \Delta \geqslant \frac{8}{\pi^2} \Delta^2 \geqslant \frac{1}{8}\left(\frac{2}{(n+2)(n+1)}\right)^{\frac{2}{\ell}}$, we further obtain
	\begin{equation}\label{equ:proofhighdsupportequ4}
	\bnorm{\widehat{\mathbf{y}}_{p_j}-\mathbf{y}_j}_2 \leqslant \frac{4((n+2)(n+1)/2)^{\frac{1}{\ell}} C(\ell, n)}{\Omega}\left(\frac{\pi}{d_{\min }^{(\ell)} \Omega}\right)^{n-1}\frac{\sigma}{\sigma_{\infty, \min}\left(\vect A^{\top}\right)}, \quad 1 \leqslant j \leqslant n.
	\end{equation}
	We next claim that
	$$
	\bnorm{\widehat{\mathbf{y}}_{p_j}-\mathbf{y}_j}_2<\frac{d_{\min }^{(\ell+1)}}{2} .
	$$
	Indeed, by direct calculation, we can verify that
	$$
	\begin{aligned}
		4\left(\frac{(n+ 2)(n+1)}{2}\right)^{\frac{1}{\ell}}& C(\ell, n)\left(\frac{1}{2.4 c_0 e 4^{\ell-1}\left((n+2)(n+1)/2\right)^{\gamma(\ell -1)}}\right)^{n-1} \\
		& <\frac{1}{2} 2.4 c_0 \pi e 4^{\ell}\left(\frac{(n+2)(n+1)}{2}\right)^{\gamma(\ell)} , 
	\end{aligned}
	$$
	where $C(\ell, n)$ is defined as in (\ref{equ:highdsupportboundequ1}). On the other hand, (\ref{equ:proofhighdsupportequ2}) yields that
	$$
	\left(\frac{\pi}{d_{\min }^{(\ell)} \Omega}\right)^{n-1}\frac{\sigma}{\sigma_{\infty, \min}\left(\vect A^{\top}\right)}  \leqslant\left(\frac{1}{2.4 c_0 e 4^{\ell -1}((n+2)(n+1)/2)^{\gamma(\ell -1)}}\right)^{n-1} \left(\frac{\sigma}{\sigma_{\infty, \min}\left(\vect A^{\top}\right)}\right)^{\frac{1}{n}} .
	$$
	Therefore we have
	$$
	\begin{aligned}
		& \frac{4((n+2)(n+1)/2)^{\frac{1}{\ell}} C(\ell, n)}{\Omega}\left(\frac{\pi}{d_{\min }^{(\ell)} \Omega}\right)^{n-1}\frac{\sigma}{\sigma_{\infty, \min}\left(\vect A^{\top}\right)} \\
		&\qquad  <\frac{1}{2} \frac{2.4 c_0 \pi e 4^{\ell}((n+2)(n+1)/2)^{\gamma(\ell)}}{\Omega}\left(\frac{\sigma}{\sigma_{\infty, \min}\left(\vect A^{\top}\right)}\right)^{\frac{1}{n}} .
	\end{aligned}
	$$
	The claim follows by combining the above inequality with (\ref{equ:proofhighdsupportequ4}) and (\ref{equ:proofhighdsupportequ1}). So far, we have proved that for each $\mathbf{y}_j$, there exists a point $\widehat{\mathbf{y}}_{p_j}$ so that $	\bnorm{\mathbf{\widehat y}_{p_j} - \vect y_j}_2< \frac{d_{\min}^{(\ell+1)}}{2}$. Thus we can reorder the index so that 
	\[
	\bnorm{\mathbf{\widehat y}_j - \vect y_j}_2< \frac{d_{\min}^{(\ell+1)}}{2}, \quad j =1, \cdots, n.
	\]
	Moreover, we have
	$$
	\begin{aligned}
	\bnorm{\widehat{\mathbf{y}}_j-\mathbf{y}_j}_2 \leqslant \frac{\left(4((n+2)(n+1)/2)^{\frac{1}{\ell}}\right)^{n} C(\ell, n)}{\Omega}\left(\frac{\pi}{d_{\min }^{(\ell+1)} \Omega}\right)^{n-1} \frac{\sigma}{\sigma_{\infty, \min}\left(\vect A^{\top}\right)}, \quad 1 \leqslant j \leqslant n,
	\end{aligned}
	$$
	which follows from (\ref{equ:proofhighdsupportequ4}) and the equation that $d_{\min }^{(\ell)}=\frac{d_{\min }^{(\ell+1)}}{4((n+2)(n+1)/2)^{\frac{1}{\ell}}}$. This completes our induction argument and concludes the proof of the theorem.
\end{proof}

\section*{Acknowledgements}
The authors would like to thank the anonymous reviewers for their valuable suggestions for improving the paper. The work of PL was supported by Swiss National Science Foundation grant number 200021--200307. The work of SY was supported by the National Research Foundation of Korea under grant number NRF-2020R1C1C1A01010882.

\appendix 
\section{Proof of Proposition \ref{prop:multisupportlowerboundthm1}} \label{section:proofofsupportlowerbound}
We first introduce a lemma that was derived in \cite{liu2021theorylse}.
\begin{lem}{\label{lem:multiinvervandermonde}}
	Let $t_1, \cdots, t_k$ be $k$ different real numbers and let $t$ be a real number. We have
	\[
	\left(D_k(k-1)^{-1}\phi_{k-1}(t)\right)_{j}=\Pi_{1\leqs q\leqs k,q\neq j}\frac{t- t_q}{t_j- t_q},
	\]
	where $D_k(k-1):=  \big(\phi_{k-1}(t_1),\cdots,\phi_{k-1}(t_k)\big)$ and $\phi_{k-1}(\cdot)$ is defined by (\ref{equ:multiphiformula}). 
\end{lem}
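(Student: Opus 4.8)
The plan is to recognize the claimed identity as the classical Lagrange interpolation formula in disguise. Write $x = D_k(k-1)^{-1}\phi_{k-1}(t) \in \mathbb{C}^k$, so that by definition $D_k(k-1)\,x = \phi_{k-1}(t)$. Since $D_k(k-1)$ has the vectors $\phi_{k-1}(t_q)=(1,t_q,\ldots,t_q^{k-1})^\top$ as its columns, its $(m+1)$-th row is $(t_1^m,\ldots,t_k^m)$, and reading the equation $D_k(k-1)\,x=\phi_{k-1}(t)$ row by row shows it is equivalent to the linear system
\[
\sum_{j=1}^{k} x_j\, t_j^{m} = t^{m}, \qquad m = 0, 1, \ldots, k-1.
\]
So it suffices to exhibit one vector solving this system and then invoke uniqueness.

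First I would note that $D_k(k-1)$ is invertible, being (the transpose of) a Vandermonde matrix with the pairwise distinct nodes $t_1,\ldots,t_k$; hence the solution $x$ of the system above is unique. Then I would produce the candidate $x_j = \ell_j(t) := \prod_{q\neq j}\frac{t-t_q}{t_j-t_q}$, the $j$-th Lagrange basis polynomial evaluated at $t$. To check that it solves the system, fix $m\le k-1$ and compare the two polynomials $s\mapsto s^m$ and $s\mapsto \sum_{j=1}^k t_j^m\,\ell_j(s)$: both have degree at most $k-1$, and they agree at the $k$ distinct points $s=t_1,\ldots,t_k$ because $\ell_j(t_i)=\delta_{ij}$, so they coincide identically. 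Evaluating at $s=t$ gives $t^m=\sum_{j=1}^k t_j^m\,\ell_j(t)$, which is precisely the $m$-th equation. By uniqueness, $x_j=\ell_j(t)$, establishing the lemma.

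There is essentially no serious obstacle here; the only point requiring care is the transpose bookkeeping, namely that $D_k(k-1)$ stores the $\phi_{k-1}(t_q)$ as \emph{columns}, so the matrix–vector product indeed yields the power-indexed vector $\big(\sum_j x_j t_j^m\big)_{m=0}^{k-1}$ and not its transpose. Alternatively, one could bypass polynomials and simply observe that the matrix with $(j,q)$ entry $\ell_j(t_q)$ equals the identity (again by $\ell_j(t_i)=\delta_{ij}$), which identifies $D_k(k-1)^{-1}$ as the matrix of Lagrange coefficients and gives the result immediately; but the interpolation argument above is the cleanest route.
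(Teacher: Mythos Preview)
Your argument is correct. The paper does not actually supply a proof of this lemma; it simply quotes the result from \cite{liu2021theorylse}. Your approach---recognizing that $D_k(k-1)x=\phi_{k-1}(t)$ is equivalent to the moment system $\sum_j x_j t_j^m=t^m$ for $m=0,\dots,k-1$, and then verifying via the Lagrange interpolation identity that $x_j=\prod_{q\neq j}\frac{t-t_q}{t_j-t_q}$ solves it---is exactly the standard and cleanest derivation, and your transpose bookkeeping is handled correctly.
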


\medskip
The proof of Proposition \ref{prop:multisupportlowerboundthm1} is divided into two steps.  
\begin{proof}
\textbf{Step 1.} Let $\tau$ be the one in (\ref{equ:multisupportlowerboundsepadis2}), $y_1=-\tau, y_2=-2\tau,\cdots, y_n=-n\tau$ and $\widehat y_1 =0, \widehat y_{2} = \tau, \cdots, \widehat y_{n} = (n-1)\tau$. Denote
\[
\mu_t = \sum_{j=1}^n a_{j,t} \delta_{y_j}, \quad  \widehat \mu_t = \sum_{j=1}^{n} \widehat a_{j,t} \delta_{\widehat y_j}, \quad t=1,\cdots, T. 
\]  
We aim to prove
\begin{equation}\label{equ:multisupportlowerboundthm0equ3}
\max_{x\in [-1, 1]} \babs{\mathcal F\left[\widehat \mu_t - \mu_t\right](\Omega x)}< \sigma. \quad x\in[-1,1], \quad t=1,\cdots, T.
\end{equation}
By Taylor expansion, 
\begin{align}
&\mathcal F \left[\widehat \mu_t - \mu_t\right](\Omega x)=\sum_{k=0}^{\infty}\big(Q_{k}\left(\widehat \mu_t\right)- Q_k\left(\mu_t\right)\big)\frac{(ix)^k}{k!},
\end{align} 
where $Q_{k}(\mu_t)=\sum_{j=1}^{n}a_{j,t} (\Omega y_{j})^k, \ Q_k(\widehat \mu_t) = \sum_{j=1}^{n}\widehat a_{j,t} (\Omega \widehat y_{j})^k$. Note that $Q_{k}(\widehat \mu_t) = Q_{k}(\mu_t), k=1, \cdots, n-1$, is a linear system that can be rewritten as
\begin{equation}
\big(\phi_{n-1}(\Omega \widehat y_1), \cdots, \phi_{n-1}(\Omega \widehat y_{n})\big)\widehat \alpha_t = \big(\phi_{n-1}(\Omega y_1), \cdots, \phi_{n-1}(\Omega y_{n})\big) \alpha_t,
\end{equation}
where $\widehat \alpha_t = (\widehat a_{1,t}, \cdots, \widehat a_{n, t})^T$ and $\alpha_t = (a_{1,t}, \cdots, a_{n,t})^T$. Since $\phi_{n-1}(\Omega \widehat y_j), 1\leqs j \leqs n,$ are linearly independent, for each $t$, we can find some $\widehat a_{j,t}$'s so that $Q_{k}(\widehat \mu_t) = Q_{k}(\mu_t), k=0,\cdots,n-1$. We next estimate $Q_{k}(\widehat \mu_t)$ and $Q_{k}(\mu_t)$. 

\textbf{Step 2.}
Under the scenario of Step 1, we obtain
\[
\widehat \alpha_t = \big(\phi_{n-1}(\Omega \widehat y_1), \cdots, \phi_{n-1}(\Omega \widehat y_{n})\big)^{-1}\big(\phi_{n-1}(\Omega y_1), \cdots, \phi_{n-1}(\Omega y_{n})\big) \alpha_t, \ 1\leqs t\leqs T.
\]
By Lemma \ref{lem:multiinvervandermonde}, we arrive at
\begin{align}\label{equ:proofonedmultiillulowerequ1}
\max_{p=1, \cdots, n}\Big|\widehat a_{p,t}\Big|\leqs& \max_{p=1, \cdots, n} \sum_{j=1}^n \Big|\Pi_{q=1, q\neq p}^{n}\frac{y_j - \widehat y_q}{ \widehat y_p- \widehat y_q} a_{j,t}\Big|\leqs \max_{p,j=1, \cdots, n}\Big|\Pi_{q=1, q\neq p}^{n}\frac{y_j - \widehat y_q}{ \widehat y_p- \widehat y_q}\Big|  \bnorm{\vect A}_{1}\nonumber \\
\leqs& \frac{(n+1)(n+2)\cdots(2n -1)}{\zeta(n)} \bnorm{\vect A}_{1} = \frac{(2n -1)!}{n! \zeta(n)} \bnorm{\vect A}_{1}  \quad  \Big(\text{$\zeta(\cdot)$ is defined in (\ref{equ:multizetaxiformula1})}\Big)\nonumber \\
\leqs& \frac{ e 2^{3n-2}}{\pi^{\frac{3}{2}} (n-1)}\bnorm{\vect A}_{1}  \quad \Big(\text{using inequality (\ref{equ:stirlingformula})}\Big). \nonumber 
\end{align}
Further, we have $\sum_{j=1}^{n}|\widehat a_{j,t}|\leqs  \frac{ e 2^{3n-2}}{\pi^{\frac{3}{2}} (n-1)}  n  \bnorm{\vect A}_{1}, 1\leqs t\leqs T$. Therefore, we have
\begin{align*}
\Big|\max_{x\in [-1, 1]}\mathcal F\left[\widehat \mu_t - \mu_t\right](\Omega x)\Big|\leqs &\sum_{k\geqs n} \frac{|Q_k(\mu_t)|+|Q_k(\widehat \mu_t)|}{k!}
\leqs  \sum_{k\geqs n} \left(\sum_{j=1}^n |a_{j,t}|+ |\widehat a_{j,t}|\right) \frac{(n\Omega \tau)^k}{k!}\\
\leqs & \Big(\frac{ e 2^{3n-2}}{\pi^{\frac{3}{2}} (n-1)}  n+1\Big)\bnorm{\vect A}_{1}\frac{(n\Omega \tau)^{n}}{n!}\sum_{k\geqs n}\frac{(n\Omega\tau)^{k-n}n!}{k!} \\
< & 1.06\Big(\frac{ e 2^{3n-2}}{\pi^{\frac{3}{2}} (n-1)}  n+1\Big)\bnorm{\vect A}_{1}\frac{(n\Omega \tau)^{n}}{n!} \quad \Big( \text{by (\ref{equ:multisupportlowerboundsepadis2}) we have $\Omega \tau < 0.05$}\Big)\\
< & 1.06\Big(\frac{ e 2^{3n-2}}{\pi^{\frac{3}{2}} (n-1)} n+1\Big)\bnorm{\vect A}_{1} \frac{e^n}{\sqrt{2\pi n} }(\Omega \tau)^{n} \quad \Big(\text{by (\ref{equ:stirlingformula})}\Big)\\
=& 1.06\Big(\frac{ e 2^{3n-2}}{\pi^{\frac{3}{2}} (n-1)} n+1\Big) \frac{e^n}{\sqrt{2\pi n} }(0.044)^{n}\sigma \quad \Big(\text{by (\ref{equ:multisupportlowerboundsepadis2})}\Big)\\
< & \sigma  \quad (\text{by Lemma \ref{lem:multisupportlowercalculate1}}).
\end{align*}
This completes the proof.
\end{proof}

\section{Some estimations}
\label{appendixb}
In this section, we present some estimations that are used in this paper. We first recall the following Stirling approximation of factorial
\begin{equation}\label{equ:stirlingformula}
\sqrt{2\pi} n^{n+\frac{1}{2}}e^{-n}\leqs n! \leqs e n^{n+\frac{1}{2}}e^{-n}.
\end{equation}

Then, we state the following results. 
\begin{lem}\label{lem:multinumbercalculate1}
Let $\xi(n-1)$ be defined as in (\ref{equ:multizetaxiformula1}). For $n\geqs 2$, we have 
\[
\Big(\frac{2\sqrt{n}n^{n-1}}{\xi(n-1)}\Big)^{\frac{1}{n-1}} < 4.8 e. 
\]	
\end{lem}
\begin{proof} For $n=2,3,4,5$, it is easy to check that the above inequality holds. Using (\ref{equ:stirlingformula}), we have for odd $n\geqs 7$,
\begin{align*}
&\Big(\frac{2\sqrt{n}n^{n-1}}{\xi(n-1)}\Big)^{\frac{1}{n-1}}= \Big(\frac{2\sqrt{n}n^{n-1}}{\frac{(\frac{n-3}{2}!)^2}{4}}\Big)^{\frac{1}{n-1}}\leqs \bparenths{\frac{8\sqrt{n}n^{n-1}}{\pi(\frac{n-3}{2})^{n-2}e^{-(n-3)}}}^{\frac{1}{n-1}}\\
=& \bparenths{\frac{8\sqrt{n}}{\pi(\frac{n-3}{2})^{-1}e^{2}}}^{\frac{1}{n-1}}\frac{2en}{n-3}\\
<& 4.8 e,
\end{align*}
and for even $n\geqs 6$,
\begin{align*}
&\Big(\frac{2\sqrt{n}n^{n-1}}{\xi(n-1)}\Big)^{\frac{1}{n-1}}= \Big(\frac{2\sqrt{n}n^{n-1}}{\frac{(\frac{n-2}{2})!(\frac{n-4}{2})!}{4}}\Big)^{\frac{1}{n-1}}\leqs \bparenths{\frac{8\sqrt{n}n^{n-1}}{\pi(\frac{n-2}{2})^{\frac{n-1}{2}}(\frac{n-4}{2})^{\frac{n-3}{2}}e^{-(n-3)}}}^{\frac{1}{n-1}}\\
=&  \bparenths{\frac{8\sqrt{n}}{\pi(\frac{n-4}{2})^{-1}e^{2}}}^{\frac{1}{n-1}}\frac{2en}{\sqrt{n-2}\sqrt{n-4}}\\
<& 4.8e.  
\end{align*}
This completes the proof. 
\end{proof}

\begin{lem}\label{lem:multinumbercalculate2}
Let $\xi(n-1)$ be defined as in (\ref{equ:multizetaxiformula1}). For $n\geqs 2$, we have 
\[
\Big(\frac{2n^{n-1}}{\xi(n-1)}\Big)^{\frac{1}{n-1}} < 4 e. 
\]	
\end{lem}
\begin{proof}
For $n=2,3,4,5,6$, it is easy to check that the above inequality holds. Using (\ref{equ:stirlingformula}), we have for odd $n\geqs 7$,
\begin{align*}
&\Big(\frac{2n^{n-1}}{\xi(n-1)}\Big)^{\frac{1}{n-1}}= \Big(\frac{2 n^{n-1}}{\frac{(\frac{n-3}{2}!)^2}{4}}\Big)^{\frac{1}{n-1}}\leqs \bparenths{\frac{8n^{n-1}}{\pi(\frac{n-3}{2})^{n-2}e^{-(n-3)}}}^{\frac{1}{n-1}}\\
=& \bparenths{\frac{8}{\pi(\frac{n-3}{2})^{-1}e^{2}}}^{\frac{1}{n-1}}\frac{2en}{n-3}\\
<& 4 e,
\end{align*}
and for even $n\geqs 8$,
\begin{align*}
&\Big(\frac{2n^{n-1}}{\xi(n-1)}\Big)^{\frac{1}{n-1}}= \Big(\frac{2n^{n-1}}{\frac{(\frac{n-2}{2})!(\frac{n-4}{2})!}{4}}\Big)^{\frac{1}{n-1}}\leqs \bparenths{\frac{8n^{n-1}}{\pi(\frac{n-2}{2})^{\frac{n-1}{2}}(\frac{n-4}{2})^{\frac{n-3}{2}}e^{-(n-3)}}}^{\frac{1}{n-1}}\\
=&  \bparenths{\frac{8}{\pi(\frac{n-4}{2})^{-1}e^{2}}}^{\frac{1}{n-1}}\frac{2en}{\sqrt{n-2}\sqrt{n-4}}\\
<& 4e.  
\end{align*}
This completes the proof.
\end{proof}

\begin{lem}\label{lem:multisupportcalculate1}
Let $\lambda(n)$ be defined as in (\ref{equ:lambda1}). For $n\geqs 2$, we have 
\begin{equation*}
\Big(\frac{8\sqrt{n+1}n^n}{\lambda(n)}\Big)^{\frac{1}{n}}< 4.8e.
\end{equation*}	
\end{lem}
\begin{proof} For $n=2,3,\cdots,14$, it is easy to check that the above inequality holds. Using (\ref{equ:stirlingformula}), we have for even $n\geqs 16$,
\begin{align*}
&\Big(\frac{8\sqrt{n+1}n^{n}}{\xi(n-2)}\Big)^{\frac{1}{n}}= \Big(\frac{8\sqrt{n+1}n^{n}}{(\frac{n-4}{2}!)^2/4}\Big)^{\frac{1}{n}}\leqs \bparenths{\frac{32\sqrt{n+1}n^{n}}{\pi(\frac{n-4}{2})^{n-3}e^{-(n-4)}}}^{\frac{1}{n}}\\
=& \bparenths{\frac{32\sqrt{n+1}}{\pi(\frac{n-4}{2})^{-3}e^{4}}}^{\frac{1}{n}}\frac{2en}{n-4}\\
<& 4.8e,
\end{align*}
and for odd $n\geqs 15$,
\begin{align*}
&\Big(\frac{8\sqrt{n+1}n^{n}}{\xi(n-2)}\Big)^{\frac{1}{n}}= \Big(\frac{8\sqrt{n+1}n^{n}}{\frac{(\frac{n-3}{2})!(\frac{n-5}{2})!}{4}}\Big)^{\frac{1}{n}}\leqs \bparenths{\frac{32\sqrt{n+1}n^{n}}{\pi(\frac{n-3}{2})^{\frac{n-2}{2}}(\frac{n-5}{2})^{\frac{n-4}{2}}e^{-(n-4)}}}^{\frac{1}{n}}\\
=&  \bparenths{\frac{32\sqrt{n+1}}{\pi(\frac{n-3}{2})^{-1}(\frac{n-5}{2})^{-2}e^{4}}}^{\frac{1}{n}}\frac{2en}{\sqrt{n-3}\sqrt{n-5}}\\
<& 4.8e.  
\end{align*}
This completes the proof. 
\end{proof}

\begin{lem}\label{lem:multisupportcalculate2}
Let $\lambda(n)$ be defined as in (\ref{equ:lambda1}). For $n\geqs 2$, we have 
\begin{equation*}
\Big(\frac{8n^n}{\lambda(n)}\Big)^{\frac{1}{n}}< 4e.
\end{equation*}	
\end{lem}
\begin{proof} For $n=2,3,\cdots,14$, it is easy to check that the above inequality holds. Using (\ref{equ:stirlingformula}), we have for even $n\geqs 16$,
\begin{align*}
&\Big(\frac{8n^{n}}{\xi(n-2)}\Big)^{\frac{1}{n}}= \Big(\frac{8n^{n}}{(\frac{n-4}{2}!)^2/4}\Big)^{\frac{1}{n}}\leqs \bparenths{\frac{32n^{n}}{\pi(\frac{n-4}{2})^{n-3}e^{-(n-4)}}}^{\frac{1}{n}}\\
=& \bparenths{\frac{32}{\pi(\frac{n-4}{2})^{-3}e^{4}}}^{\frac{1}{n}}\frac{2en}{n-4}\\
<& 4e,
\end{align*}
and for odd $n\geqs 15$,
\begin{align*}
&\Big(\frac{8n^{n}}{\xi(n-2)}\Big)^{\frac{1}{n}}= \Big(\frac{8n^{n}}{\frac{(\frac{n-3}{2})!(\frac{n-5}{2})!}{4}}\Big)^{\frac{1}{n}}\leqs \bparenths{\frac{32n^{n}}{\pi(\frac{n-3}{2})^{\frac{n-2}{2}}(\frac{n-5}{2})^{\frac{n-4}{2}}e^{-(n-4)}}}^{\frac{1}{n}}\\
=&  \bparenths{\frac{32}{\pi(\frac{n-3}{2})^{-1}(\frac{n-5}{2})^{-2}e^{4}}}^{\frac{1}{n}}\frac{2en}{\sqrt{n-3}\sqrt{n-5}}\\
<& 4e.  
\end{align*}
This completes the proof. 
\end{proof}

\begin{lem}\label{lem:onedsupportuppercalculate1}
For $n\geqs 2$, 
\begin{align}
\frac{n^n\sqrt{n+1}}{(n-2)!}&\leqs \frac{n^2}{\sqrt{\pi}}e^{n}, \label{equ:supportstabilitycal1}\\
\frac{n^n}{(n-2)!} &\leqs \frac{n^{\frac{3}{2}}}{\sqrt{\pi}}e^{n}. \label{equ:supportstabilitycal2}
\end{align}
\end{lem}
\begin{proof}
For $2\leqslant n < 4$, (\ref{equ:supportstabilitycal1}) can be verified directly. For $n\geqs 4$, by (\ref{equ:stirlingformula}) we have 
\begin{align*}
\frac{n^n\sqrt{n+1}}{(n-2)!}\leqs&  \frac{n^n\sqrt{n+1}}{\sqrt{2\pi}(n-2)^{n-2}\sqrt{n-2}e^{-(n-2)}}\\
=& n^2\frac{\sqrt{n+1}}{\sqrt{2\pi(n-2)}}\left(1+\frac{2}{n-2}\right)^{n-2} e^{n-2}\\
\leqs& \frac{n^2}{\sqrt{\pi}}e^{n}. 
\end{align*}
The inequality (\ref{equ:supportstabilitycal2}) can be proved similarly. 
\end{proof}

\begin{lem}\label{lem:multisupportlowercalculate1}
For $n\geqs 2$, 
\[
1.06\Big(\frac{ e 2^{3n-2}}{\pi^{\frac{3}{2}} (n-1)} n+1\Big) \frac{e^n}{\sqrt{2\pi n} }(0.044)^{n}<1.
\]
\end{lem}
\begin{proof}
As $0.044e<0.12$, we  only need to prove 
\[
1.06\Big(\frac{ e 2^{-2}0.96^n}{\pi^{\frac{3}{2}}(n-1)} n+0.12^n \Big) \frac{1}{\sqrt{2\pi n} }<1.
\]
Since $\frac{0.96\times 29}{28}<1$, it is not hard to see that when $n\geqs 29$, the LHS of the above inequality decreases as $n$ increases. For $2\leqs n\leqs 29$, we have checked numerically that the inequality holds. 
\end{proof}


\bibliographystyle{plain}
\bibliography{references_final}	
	
\end{document}